\newcommand{\R }{\ensuremath{\mathbb R}}
\newcommand{\F} {\ensuremath{\mathcal{F}}}
\newcommand{\Ft }{\ensuremath{\mathcal{F}_{t-1}}}
\newcommand{\Ftt }{\ensuremath{\mathcal{F}_{t}}}
\newcommand{\Lt }{\ensuremath{\mathcal{L}^0(\mathcal{F}_{t-1};\mathbb{R}^d)}}
\newcommand{\Ltt }{\ensuremath{\mathcal{L}^0(\mathcal{F}_{t};\mathbb{R}^d)}}
\newcommand{\oS }{\ensuremath{\overline{S}}}
\newcommand{\uS }{\ensuremath{\underline{S}}}
\newtheorem{theorem}{Theorem}[section]
\newtheorem{proposition}[theorem]{Proposition}
\newtheorem{lemma}[theorem]{Lemma}
\newtheorem{assumption}[theorem]{Assumption}
\newtheorem{corollary}[theorem]{Corollary}
\theoremstyle{definition}
\newtheorem{definition}[theorem]{Definition}
\newtheorem{notation}[theorem]{Notation}
\theoremstyle{remark}
\newtheorem{remark}[theorem]{Remark}
\begin{document}

\title{Arbitrage and Hedging in model-independent markets\\ with frictions}

\author{Matteo Burzoni\thanks{Acknowledgements: Research supported by Universit\`a degli studi di Milano and the ETH Foundation.} \\
{\small ETH Z\"{u}rich, email: matteo.burzoni@math.ethz.ch} }

\maketitle

\begin{abstract}
We provide a Fundamental Theorem of Asset Pricing and a Superhedging Theorem for a model independent discrete time financial market with proportional transaction costs. We consider a probability-free version of the Robust No Arbitrage condition introduced by Schachermayer in \cite{S04} and show that this is equivalent to the existence of Consistent Price Systems. Moreover, we prove that the superhedging price for a claim g coincides with the frictionless superhedging price of g for a suitable process in the bid-ask spread.  
\end{abstract}

\noindent \textbf{Keywords}: Model Uncertainty, Transaction costs, First Fundamental
Theorem of Asset Pricing, Superhedging, Robust Finance.

\noindent \textbf{MSC (2010):} 60B05, 28B20, 60G42, 46A20, 28A05, 91B70,
91B24, 91G99, 60H99.

\noindent \textbf{JEL Classification:} G10, G12, G13.

\section{Introduction}
The theory of Arbitrage and Hedging lies at the ground of any mathematical analysis of real world financial markets. It is therefore natural to consider the Fundamental Theorem of Asset Pricing (FTAP) and the Superhedging Theorem as the most important pillars of the modern Mathematical Finance. The central role of these two aspects was already clear to De Finetti in his work on \emph{coherence} and \emph{previsions} (see \cite{deF}). In the case of $\Omega$ being a finite set of events a version of the FTAP has been proved by Harrison and Pliska \cite{HP81} and non-trivial extensions to the case of a general $\Omega$ are possible by introducing a reference probability measure $P$, as in the celebrated work of Dalang-Morton-Willinger \cite{DMW90}. The superhedging Theorem can be subsequently obtained through the FTAP using classical arguments. Later on, frictions in the market have been included for a more realistic description. A first comprehensive study is due to Jouini and Kallal \cite{JK95} which generated a very rich offspring of papers. The setting proposed by Kabanov et al. (see e.g. \cite{KS01a,KRS02}), based on \textit{solvency cones}, allowed for the extension of the aforementioned classical results on the Fundamental Theorem of Asset Pricing with $\Omega$ finite, as in \cite{KS01a}, and with a general space $(\Omega,\mathcal{F},P)$, as in \cite{S04}. A great amount of literature is also available on the superreplication problem, for a non exhaustive list, see \cite{BT00, CK96, CPT99, LS97, K99, SSC95, S14}.\\
Nevertheless, the existence of a reference probability has been recently criticized and opened new and interesting challenges in several branches of Mathematical Finance under the name of Knightian Uncertainty. We conduct our study in this framework, in particular we do not fix, a priori, any class of probabilities.

\paragraph{Arbitrage and Consistent Price Systems} In this paper we consider a model-independent version of the Robust No Arbitrage condition introduced in \cite{S04}. Whenever this condition holds the broker still have room for proposing a discount on the bid-ask spread without creating with this operation arbitrage opportunities. In this sense the terminology ``robustness'' of the No Arbitrage condition should be interpreted rather than in the sense of a probability-free setup. The results of \cite{S04} connect the absence of arbitrage to the existence of a price process $S$ with values in the bid-ask spread which is a martingale under a certain risk-neutral probability $Q$. We call the couple $(Q,S)$ ``Consistent Price System'' (CPS) and it is said to be \emph{strictly consistent} if $S$ takes values in the relative interior of the bid-ask spread. Differently from the approach of \cite{S04} we are not defining arbitrage in terms of physical units of assets, but we are choosing a numer\`aire and evaluating a sure gain in terms of the value process of a certain strategy. Nevertheless we show in Section \ref{secFTAP} the analogous equivalence under the name of FTAP:
\begin{equation}\label{textFTAP}
\text{\emph{Robust No Model Independent Arbitrage iff there exists a strictly CPS}}.
\end{equation} 
Only a very short literature is available for these problems under Knightian uncertainty. When a class of (possibly non-dominated) set of priors $\mathcal{P}$ is considered, recent results in this direction are given by Bayraktar and Zhang \cite{BZ13} and Bouchard and Nutz \cite{BN14}. In \cite{BN14} a (non-dominated) version of No Arbitrage of the second kind ($NA_2(\mathcal{P})$) introduced in \cite{R09} is studied.\\
In \cite{BZ13} the authors considered the generalization to this framework of the concept of No Arbitrage ($NA(\mathcal{P})$) and No strict Arbitrage ($NA^s(\mathcal{P})$) used in \cite{KS01a,KRS02}. In a first version of the paper they considered a market with a single risky asset and by using a strong continuity assumption and a non-dominated version of the martingale selection problem, they were able to show a Fundamental Theorem of Asset Pricing. In the revised version they removed the continuity hypothesis and extended the previous result to the case of a multi-dimensional market. In this paper we are using a different notion of arbitrage and a probability-free setup, so that the two results are not directly comparable but, similarly to their approach, we are considering a modification of the bid-ask spread in order to individuate the set of CPSs. The goal is to explicitly construct an arbitrage opportunity, when the set of CPSs is empty, and this requires to tackle directly the dynamic multi-period problem. To this aim we will make use of the general theory of random sets which have already been considered by Rokhlin in \cite{Rk08} for the probabilistic case. Nevertheless in \cite{Rk08} the author provided an equivalent condition to the existence of CPSs based on random sets. This condition turns out to be also equivalent to  Robust No Arbitrage due to the equivalence \eqref{textFTAP} which was already known from \cite{S04}. Since in this paper we do not have \eqref{textFTAP} while, on the contrary, it is exactly what we want to show, the extension to the model-free setup of some results of \cite{Rk08} is only partially useful.\\
\paragraph{Super-hedging Theorem} The second part of this paper is devoted to the analysis of the Superhedging problem. Likewise the case of the Fundamental Theorem of Asset Pricing there are very few results in the model-free case. A first important paper on this topic is given by Dolinsky and Soner \cite{DS14} where the case of a discrete time single-asset market is considered with constant proportional transaction costs. By defining a Monge-Kanotorovich optimization problem and exploiting optimal transport techniques the authors succeeded to show that the superhedging price of a path-dependent European option $g$ coincides with the supremum of the expectations of $g$ in the set of proability measure called \textit{approximate martingale measures}. Roughly speaking a probability measure belongs to this set if for any $u\geq t$, the conditional expectation of $S_u$ at time $t$ is contained in the interval $((1-k)S_{t},(1+k)S_{t})$ where the constant $k$ models the proportional transaction costs. A very recent paper by Bartl, Cheridito, Kupper and Tangpi \cite{CKT15} consider some extensions of these results to the case of countably many trading dates and $d$ assets with constants $k_i$ for $i=1,\ldots,d$ modelling proportional transaction costs. In both cases a version of the Fundamental Theorem of Asset Pricing is derived from the superhedging duality. The continuous time case with a single risky asset is investigated in \cite{DS15}.\\
In this paper we consider the model-free hedging problem in a $d$-dimensional discrete time setting with random proportional transaction costs. The value process of a certain admissible strategy $H\in\mathcal{H}$ is evaluated in terms of units of a specified num\'eraire. In particular denoting by $\uS_t^j$ and $\oS^j_t$ the cost of selling and buying a share of asset $j$ at time $t$, we have that the value process $V_T(H)$ can be written as \begin{equation*} V_T(H)=\sum_{t=0}^T\sum_{j=1}^d\left(H^j_t-H^j_{t+1}\right)\left(\oS^j_t\mathbf{1}_{\{H^j_t\leq H^j_{t+1}\}}+\uS^j_t\mathbf{1}_{\{H^j_{t+1}\leq H^j_{t}\}}\right),
\end{equation*}
where we assume that $H_0=H_{T+1}=0$. Denote by $\mathcal{Q}$ the class of probability measures $Q$ that admits a consistent price system $(Q,S)$ for some $S$ with values in the bid-ask spread. In Section \ref{sec_super} we prove the following equality
\begin{equation}\label{textSuper}
\sup_{Q\in\mathcal{Q}}\mathbb{E}_Q[g]=\inf\{x\in\mathbb{R}\mid \exists H\in\mathcal{H}\textrm{ s.t. }x+V_T(H)\geq g\quad \forall\omega\in\Omega_*\}=:\overline{p}(g)
\end{equation}
for a measurable contingent claim $g$. The set $\Omega_*\subseteq \Omega$ for which we require the superhedging inequality is given by
$$\Omega_*:=\left\{\omega\in\Omega\mid \exists Q\in\mathcal{Q} \text{ such that }Q(\{\omega\})>0\right\}$$
and we denominate it the efficient support of the class of consistent price system CPS (See Definition \eqref{Omegastar}).
The reason for not considering the whole path space $\Omega$ but rather an \emph{efficient} subset of that is the existence, in the frictionless case, of examples which exhibit a duality gap for the analogous duality (see e.g. \cite{BFM15} and \cite{BNT16} in the context of martingale optimal transport).\\
The idea of the proof is the following. We first construct an auxiliary $\mathbb{S}$-superhedging problem (see Definition \ref{defF}) by considering at a certain time $t$ the whole set of random vectors which are convex combinations of random vectors at time $t+1$. Note that for such processes an obvious (conditional) martingale measure with finite support exists and it is specified by the convex combination. In Section \ref{sec_super} we show that by solving the $\mathbb{S}$-superhedging problem we obtain a process $S$ with values in the bid-ask spread and a trading strategy $H\in\mathcal{H}$ with the following property: denoting by $\overline{p}_{S}(g)$ the frictionless superhedging price for $g$, then the initial capital $\overline{p}_{S}(g)$ allows for superreplicating $g$ on $\Omega_*$ adopting the strategy $H$. By exploiting results from the frictionless case we will then get $\sup_{Q\in\mathcal{Q}}\mathbb{E}_Q[g]\geq \overline{p}(g)$ which is the difficult part in showing \eqref{textSuper}.\\
The rest of the paper is organized as follows: in Section \ref{settings} we introduce the framework, in Section \ref{secFTAP} we show the Fundamental Theorem of Asset Pricing and in Section \ref{sec_super} we study the Superhedging duality. Proofs of technical results from Section \ref{sec_super} are given in Section \ref{sec_proofs}. 

\section{Setting and notations}\label{settings}
Fix $(\Omega,\mathcal{B}(\Omega))$ a measurable space, where $\Omega$  is Polish, and $\mathcal{F}:=\mathcal{B}(\Omega)$ is the Borel sigma-algebra. Let $\mathfrak{P}=\mathfrak{P}(\Omega)$ be the set of probability measures on $(\Omega, \mathcal{F})$. We consider a discrete time interval $I=\{0,\ldots,T\}$ on a finite time horizon $T\in\mathbb{N}$ and we introduce a $(d+1)$-dimensional stochastic process $(\widetilde{S}_t)_{t\in I}$ which is Borel-measurable and which represents the discrete time evolution of the price process of $d+1$ assets where the first one serves as a numer\`aire. With no loss of generality we may therefore assume $\widetilde{S}^0_t\equiv 1$ for any $t\in I$. The setup of Kabanov et al. (for example \cite{KS01a,KRS02}) can be defined also when a reference probability is absent. For any $t\in I$, the cost for exchanging one unit of the asset $i$ for the corresponding value in units of the asset $j$, at time $t$, is specified by a Borel-measurable stochastic process $\lambda^{ij}_t$ for $i,j=0,\ldots,d$. Following the notation of Kabanov and Stricker \cite{KS01a} and Schachermayer \cite{S04}, one can also define the matrix $\Pi_t=[\pi^{ij}_t]_{i,j=0,\ldots,d}$ given by $$\pi_t^{ij}:=\dfrac{\widetilde{S}^j}{\widetilde{S}^i}(1+\lambda_t^{ij}),$$ where any $\pi^{ij}_t$ represents the physical unit of asset $i$ that an agent needs to exchange, at time $t$, for having one unit of asset $j$. Clearly $\lambda_t^{ii}=0$ and consequently $\pi_t^{ii}=1$ for any $t\in I$. A standard assumption is that agents are smart enough to take advantage of favourable exchange between assets so that, for any $t\in I$, for any $\omega\in\Omega$, one may assume $\pi_t^{ij}\leq \pi_t^{ik}\pi_t^{kj}$ for any $k=0,\ldots,d$.\\

In this paper the asset $\widetilde{S}^0$ serves as a numer\`aire and the value of any portfolio is calculated in terms of $\widetilde{S}^0$. This amounts to the choice of $\pi^{ij}_t=\pi^{i0}_t\pi^{0j}_t$ in the above setting for any $t\in I$ and $i\neq j$. We have therefore that the stochastic interval $[\frac{1}{\pi^{j0}},\pi^{0j}]$ represents the bid-ask spread of the asset $j\in\{1,\ldots,d\}$. 
 \begin{notation}In the following, the bid-ask spread  $\left[\frac{1}{\pi_t^{j0}},\pi_t^{0j}\right]$ will be shortly denoted as $[\uS^j_t,\oS^j_t]$ for $t=0,\ldots,T$ and $j=1,\ldots,d$ when it is more convenient.
\end{notation}

For any $t\in I$, for any $\omega\in\Omega$, define
\begin{equation}\label{convexBidAsk}
C_t(\omega):= \left[\uS_t^1(\omega),\oS_t^1(\omega)\right]\times,\ldots,\times\left[\uS_t^d(\omega),\oS_t^d(\omega)\right] \subseteq \mathbb{R}^d.
\end{equation}

\begin{assumption}\label{assumptions}
We model non-trivial transaction costs by assuming that $int(C_t)\neq \varnothing$ (efficient friction hypothesis), and we assume that, for every $\omega$ fixed, $C_t(\omega)$ is bounded.
\end{assumption}

We finally set $\mathbb{F}^{\widetilde{S}}:=\{\mathcal{F}^{\widetilde{S}}_t\}_{t\in I}$, where $\mathcal{F}^{\widetilde{S}}_t:=\sigma\{\uS_u,\oS_u\mid0\leq u\leq t\}$ denotes the natural filtration of the processes $\uS$ and $\oS$, and we consider the Universal Filtration $\mathbb{F}:=\{\mathcal{F}_{t}\}_{t\in I}$, namely,
\begin{equation*}
\mathcal{F}_{t}:=\bigcap_{P\in \mathfrak{P}}\mathcal{F}_{t}^{\widetilde{S}}\vee \mathcal{N%
}_{t}^{P},\text{ where }\mathcal{N}_{t}^{P}=\{N\subseteq A\in \mathcal{F}%
_{t}^{\widetilde{S}}\mid P(A)=0\}.
\end{equation*}%
Let $\mathbb{F}:=\{\Ftt\}_{t\in I}$. For any $0\leq t\leq T$, we denote by $\mathcal{L}^0(\mathcal{F}_{t};V)$ the set of \Ftt-measurable functions with values in $V\subseteq\R^d$. For technical purposes we will also adopt the following notation:
\begin{notation}\label{eps-dual}
For a random set $\Psi$ in $\R^d$ (see Definition \ref{defRandom} in the Appendix) we denote by $\Psi^*$ the (positive) dual of $\Psi$ and for $\varepsilon>0$ we introduce the $\varepsilon$-dual of $\Psi$ as \begin{eqnarray*}
\Psi^*(\omega)&:=&\{v\in\R^d\mid v\cdot x\geq 0 \quad \forall x\in\Psi(\omega) \},\\
\Psi^\varepsilon(\omega)&:=&\left\{v\in \mathbb{R}^d\mid v\cdot x \geq \varepsilon \quad \forall x\in\Psi(\omega)\setminus \{0\}\right\},
\end{eqnarray*} which they both preserve the same measurability of $\Psi$ as discussed in the Appendix (see Lemma \ref{meas-eps} and Proposition \ref{preservation_measurability}).
\end{notation}
\begin{notation}\label{notation}
 Throughout the text the following notations will be used: $co(\cdot)$, $conv(\cdot)$, $\overline{conv}(\cdot)$, $lin(\cdot)$, $ri(\cdot)$, which denote, respectively, the generated cone, the convex hull, the closure of the convex hull, the linear hull and the relative interior of a set. We use the notation $\R^{m\times n}$ for real matrices of $m$ rows and $n$ columns.
\end{notation}

\subsection{Arbitrage and Consistent Price Systems}\label{section_approach}
Differently from the frictionless case when an agent wants to implement a trading strategy she needs to consider the cost of rebalancing the portfolio after each trade date. The definition of self-financing strategies, goes as follows:
\begin{definition}\label{main_def} Denote by $e_i$ with $i=1,\ldots d$ the vector of the canonical base of $\mathbb{R}^{d}$ and define $$K_t:=co\left(conv\left\{e_i,\pi_t^{ij}e_i-e_j \mid i,j=1,\ldots,d\right\}\right),$$ the so-called \textbf{solvency cone}. Any portfolio in $K_t$ can be indeed reduced to the $0$ portfolio up to suitable exchanges of assets and up to ``throwing away'' some money if necessary. The \textbf{cone of portfolio available at cost 0} at time $t$, is simply given by $-K_t$ and  $F_t:=K_t\cap - K_t$ is the set of portfolio which are exchangeable with the zero portfolio.\\
A \textbf{self-financing} trading strategy $H:=(H_t)_{0\leq t\leq T+1}$ is an $\mathbb{F}$-predictable process with $H_0=H_{T+1}=0$ and $$H_t-H_{t-1}\in-K_{t-1}\quad \textrm{ for any }t=1,\ldots, T$$
meaning that rebalancing the portfolio is obtained at zero cost.
\end{definition}

We denote by $\mathcal{H}$ the class of self-financing strategies. Since $H_{T+1}=H_0+\sum_{t=0}^{T}\xi_t$ with $\xi_t\in -K_{t}$  any admissible strategy satisfies the following: i) it has no initial endowment ($H_0=0$); ii) at time $T$ any open position must be closed ($H_{T+1}=0$); iii) the portfolio is rebalanced, at zero cost, at any intermediate time.

We consider the value process $V_t(H)$ of a certain admissible strategy $H\in\mathcal{H}$ as the position in the num\'eraire $\widetilde{S}^0$ at time $t$ after rebalancing. The terminal value is given by
\begin{equation}\label{valueFunction} V_T(H)=\sum_{t=0}^T\sum_{j=1}^d\left(H^j_t-H^j_{t+1}\right)\left(\oS^j_t\mathbf{1}_{\{H^j_t\leq H^j_{t+1}\}}+\uS^j_t\mathbf{1}_{\{H^j_{t+1}\leq H^j_{t}\}}\right).
\end{equation}
One can easily verify the above formula. If, for instance, at time $t$ the agent switch from a long position to a short one in asset $j$ then she needs to liquidate $H^j_t$ obtaining $H^j_t\uS^j_t$ and then selling $H^j_{t+1}$ shares of the asset at the same price, yielding $(H^j_t-H^j_{t+1})\uS^j_t$ which coincides with the second term in \eqref{valueFunction} since obviously $H^j_{t+1}\leq H^j_t$. If instead she wants only to diminish the amount of shares in the long position, then $H^j_{t+1}\leq H^j_t$ and she needs to liquidate the amount $H^j_t-H^j_{t+1}$ obtaining in return $(H^j_t-H^j_{t+1})\uS^j_t$. The remaining cases follow similarly.\\

Using a similar argument as in Schachermayer \cite{S04} we may introduce, and motivate, the following definition of No Arbitrage,
\begin{definition}\label{smallerTr}
We say that a bid-ask process $\widetilde{\Pi}$ has smaller transaction costs than $\Pi$ if and only if for any $\omega\in\Omega$, for any $t\in I$ $$\left[\frac{1}{\widetilde{\pi}_t^{j0}},\widetilde{\pi}_t^{0j}\right]\subset\left(\frac{1}{\pi_t^{j0}},\pi_t^{0j}\right)
\quad \text{ for any }j=1,\ldots,d.$$
\end{definition}

Observe that clearly $V_T(H)$ depends also on $\Pi$ and, in particular, $V_T(H)(\widetilde{\Pi})> V_T(H)(\Pi)$ if $\widetilde{\Pi}$ has smaller transaction costs than $\Pi$ and $H$ is not the zero strategy. We will omit this dependence when it is clear from the context.

\begin{definition}
A Model Independent Arbitrage, with respect to a bid-ask spread $\widetilde{\Pi}$, is a trading strategy $H\in\mathcal{H}$ which satisfies $V_T(H)(\widetilde{\Pi})>0$ for any $\omega\in\Omega$.
\end{definition}
\begin{definition}\label{MIArb}
 Consider a market with bid-ask spread $\Pi$. We say that the market satisfies the Robust No Model Independent Arbitrage condition if there exists a bid-ask spread process $\widetilde{\Pi}$, with smaller transaction costs, for which there is No Model Independent Arbitrage with respect to $\widetilde{\Pi}$.
\end{definition}
This definition is a model-free version of the No Robust Arbitrage condition (NA$^r$) introduced in \cite{S04}. If the condition No Robust Arbitrage holds the broker still have room for proposing a discount on the transaction costs without creating arbitrage opportunities. On the contrary if this condition is not satisfied it is sufficient to have an infinitely small discount to get an arbitrage opportunity on a certain set of events. Since transaction costs are often subject of negotiation it looks quite natural to consider markets that exclude these possibilities.

\bigskip

We lastly need to formulate the definition of the so-called consistent price systems, in this model-free context.
\begin{definition}\label{defCPS} We say that a couple $(Q,S)$ is a consistent price system on $[0,T]$ if $S:=(S_t)_{t\in I}$ is a $(d+1)$-dimensional, $\mathbb{F}$-adapted stochastic process with $S_t^0\equiv 1$, for any $t\in I$ and which is a martingale under the measure $Q\in\mathfrak{P}(\Omega)$. In addition $S^j_t$ takes values in the bid ask-spread defined by $\Pi$, that is, $$S^j_t\in\left[\dfrac{1}{\pi_t^{j0}}\ ,\pi_t^{0j}\right],$$ for any $\omega\in\Omega$ and for any $j=1,\dots,d$.\\
The couple $(Q,S)$ is strictly consistent if $S$ takes values in the interior of the bid ask-spread.
Denote by $\mathcal{M}_{\overline{\Pi}}$ ($\mathcal{M}_{\Pi}$) the class of  price systems (strictly) consistent with $\Pi$.
\end{definition}

\begin{remark}We are considering the notion of generalized conditional expectation (see for example
\cite{FKV09}) where, for $\mathcal{G}\subseteq\mathcal{F}$, the expectation of a non negative $X\in \mathcal{L}^0(\mathcal{F};\mathbb{R})$ is defined by: $E_{P}[X\mid \mathcal{G}]:=\lim_{n\rightarrow +\infty
}E_{P}[X\wedge n\mid \mathcal{G}]$ and for $X\in \mathcal{L}^0(\mathcal{F};\mathbb{R})$ by $%
E_{P}[X\mid \mathcal{G}]:=E_{P}[X^{+}\mid
\mathcal{G}]-E_{P}[X^{-}\mid \mathcal{G}]$ with the
convention $\infty -\infty =-\infty $. All basic properties of the
conditional expectation still hold. In particular for a martingale measure $Q$ for $S$ and for a predictable $H$, we have $E_{Q}[H_{t}\cdot (S_{t}-S_{t-1})\mid \mathcal{F}%
_{t-1}]=H_{t}\cdot E_{Q}[(S_{t}-S_{t-1})\mid \mathcal{F}_{t-1}]=0$
$Q$-a.s.
\end{remark}

\section{Model free FTAP}\label{secFTAP}
We are now ready to introduce one of our main results.
\begin{theorem}[FTAP]\label{FTAP}Let $\mathcal{M}_{\Pi}$ the set of strictly consistent price systems as in Definition \ref{defCPS}.\begin{center} Robust No Model Independent Arbitrage holds $\Leftrightarrow$ $\mathcal{M}_{\Pi}\neq \varnothing$.\end{center}
\end{theorem}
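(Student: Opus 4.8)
The plan is to prove the two implications separately, with the easy direction ($\Leftarrow$) by a direct contradiction argument and the hard direction ($\Rightarrow$) by an explicit backward construction of a strictly consistent price system when no robust arbitrage exists.

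For the implication $\mathcal{M}_\Pi\neq\varnothing\Rightarrow$ Robust No Model Independent Arbitrage, I would argue as follows. Suppose $(Q,S)\in\mathcal{M}_\Pi$ is a strictly consistent price system, so $S^j_t$ takes values in the \emph{open} bid-ask interval $\left(\frac{1}{\pi_t^{j0}},\pi_t^{0j}\right)$ for every $\omega$. Since, for a fixed $\omega$, the compact path $\{S_t(\omega)\}_{t\in I}$ sits strictly inside the bid-ask box and the transaction-cost bounds are Borel-measurable, one can choose a measurable process $\widetilde\Pi$ with strictly smaller transaction costs (Definition \ref{smallerTr}) such that $S^j_t\in\left[\frac{1}{\widetilde\pi_t^{j0}},\widetilde\pi_t^{0j}\right]$ still holds; this is where a small measurable-selection step is needed. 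Now for any $H\in\mathcal{H}$ one checks, termwise in \eqref{valueFunction}, that $V_T(H)(\widetilde\Pi)\le \sum_{t=0}^T H_{t+1}\cdot(S_t-S_{t-1})$-type telescoping bound $\le \sum_t (H_t-H_{t+1})\cdot S_t$, and since $S$ is a $Q$-martingale and $H$ is predictable, the generalized conditional expectation identity from the Remark gives $E_Q[V_T(H)(\widetilde\Pi)]\le 0$. Hence $V_T(H)(\widetilde\Pi)>0$ everywhere is impossible on the $Q$-full set, contradicting that it is a Model Independent Arbitrage w.r.t. $\widetilde\Pi$; so no such arbitrage exists.

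For the hard implication, Robust No Model Independent Arbitrage $\Rightarrow\mathcal{M}_\Pi\neq\varnothing$, I would fix a bid-ask process $\widetilde\Pi$ with smaller transaction costs for which there is No Model Independent Arbitrage, and construct a strictly consistent $(Q,S)$ for $\Pi$ by backward induction on $t=T,T-1,\dots,0$ using the theory of random sets (as in Rokhlin \cite{Rk08}, adapted to the probability-free setting). The key object is a sequence of random sets $\Sigma_t\subseteq C_t$ — morally the set of values at time $t$ of processes that arise as one-step convex combinations of admissible continuations — defined so that at time $T$ one takes (essentially) $\Sigma_T=C_T$ and then $\Sigma_t$ is built from $\overline{conv}$ of $\Ftt$-measurable selectors of $\Sigma_{t+1}$, intersected with the $\widetilde\Pi$-bid-ask box to enforce strict consistency. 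One must show each $\Sigma_t$ is a nonempty (this is where No Model Independent Arbitrage for $\widetilde\Pi$ enters, via an $\varepsilon$-dual/separation argument: if $\Sigma_t$ were empty on a non-null set, a measurable selection of separating hyperplanes would yield a strategy $H$ with $V_T(H)(\widetilde\Pi)>0$ everywhere) closed-valued measurable random set, and then apply a measurable-selection theorem to extract an $\mathbb{F}$-adapted process $S$ with $S_t\in ri(\ ) $ of the relevant set together with a finitely-supported conditional martingale kernel at each step; pasting these kernels gives the measure $Q$ and makes $S$ a $Q$-martingale in the interior of the $\Pi$-spread.

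The main obstacle, as the introduction itself flags, is precisely this hard direction: in the probabilistic/frictional literature the analogous statement is deduced from the known equivalence \eqref{textFTAP}, but here \eqref{textFTAP} is the target, so one cannot import it. Thus the real work is (i) setting up the random-set recursion so that nonemptiness propagates backward, which forces careful bookkeeping of the $\varepsilon$-duals $\Sigma^\varepsilon_t$ from Notation \ref{eps-dual} and the efficient-friction Assumption \ref{assumptions}; (ii) the measurable-selection arguments needed at each step to turn "$\Sigma_t\neq\varnothing$ fails somewhere" into a genuine, $\mathbb{F}$-predictable, everywhere-defined arbitrage strategy — the universal filtration $\mathbb{F}$ and the Polish structure of $\Omega$ are used here; and (iii) verifying that the strict inclusion in Definition \ref{smallerTr} gives exactly the slack required to land $S$ in the \emph{interior} of the original $\Pi$-spread, i.e. to get a \emph{strictly} consistent system rather than merely a consistent one. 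I expect step (ii) — the multi-period, probability-free measurable construction of the arbitrage — to be the genuinely delicate part.
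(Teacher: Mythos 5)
Your ($\Leftarrow$) direction is essentially the paper's argument: keep a $\widetilde\Pi$ with smaller costs whose spread still contains the strictly consistent $S$, bound $V_T(H)(\widetilde\Pi)$ by the frictionless integral $(H\circ S)_T$, and take $Q$-expectations; that part is fine. For the hard direction your backward recursion is the same object as the paper's random sets $\mathbb{S}_t$ in \eqref{modifiedBidAsk}, but the pivotal step of your plan has a genuine gap. You claim that if the recursion set were ``empty on a non-null set'', a measurable selection of separating hyperplanes would give $H$ with $V_T(H)(\widetilde\Pi)>0$ \emph{everywhere}. First, ``non-null'' has no meaning here: there is no reference measure, and a Model Independent Arbitrage must be strictly positive at every single $\omega\in\Omega$. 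Second, and more substantively, emptiness of $\widetilde{\mathbb{S}}_t$ along \emph{some} paths cannot yield an everywhere-positive strategy. The correct dichotomy, isolated in the paper's Lemma \ref{convexAndMtg}, is that a single $\omega$ along which $\widetilde{\mathbb{S}}_t(\omega)\neq\varnothing$ for all $t$ already produces a strictly consistent price system (a finitely supported martingale built from the defining convex combinations); hence under $\mathcal{M}_\Pi=\varnothing$ the sets must become empty along \emph{every} path, at an $\omega$-dependent time $\tau(\omega)$, and only then can an arbitrage be built. Your plan, which tries to deduce nonemptiness of the sets directly from No Arbitrage for a fixed $\widetilde\Pi$, needs exactly this contrapositive restructuring: what is true and needed is ``emptiness along every path $\Rightarrow$ arbitrage for $\widetilde\Pi$'', not ``emptiness somewhere $\Rightarrow$ arbitrage everywhere''.

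Moreover, even granting emptiness along every path, the construction is not the one-shot separation-plus-selection you describe. After the position is opened at $\tau$ (the paper's sets $A_t$ in \eqref{AandB}), it may be impossible to liquidate it at a profit at the next date (the sets $B_t$, where $\hat{S}^{\xi}_t\notin\mathbb{S}_t$); the paper must propagate the inductive inequality \eqref{ArbValueInduction}, rebalance through the two-case analysis and the rescalings $\alpha_1,\alpha_2$ of Lemma \ref{lemmaNonEmpty}, verify measurability of each rebalanced position via $\varepsilon$-duals and the implicit-map theorem, and use that $B_T=\varnothing$ so every position closes by $T$ with $V_T(H)>0$ on all of $\Omega$. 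This multi-period bookkeeping is the heart of the proof; your outline acknowledges it is delicate but sketches it in a way (a single separating hyperplane plus a selector) that would not work, so the proposal as written does not establish the hard implication.
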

\begin{proof}[Proof of ($\Leftarrow$)] Suppose $\mathcal{M}_{\Pi}\neq\varnothing$, hence there exist $S=(S_t)_{t\in I}$ and $Q\in\mathfrak{P}$ such that $S_t\in int(C_t)$ for $t\in I$, and $S$ is a $Q$-martingale. Consider $\widetilde{\Pi}$ a bid-ask process with smaller transaction costs for which, the corresponding $\widetilde{C}_t$ as in \eqref{convexBidAsk}, satisfies $S_t\in \widetilde{C}_t$ for $t\in I$. Let $H\in\mathcal{H}$ such that $V_T(H)\geq 0$. We note that
\begin{equation}\label{frictionleassDominance}
V_T(H)\leq (H\circ S)_T\ ,
\end{equation}
where $(H\circ S)_T$ is the usual (discrete time) stochastic integral. Equation \eqref{frictionleassDominance} is obtained by adding and subtracting $S^j_t$ in \eqref{valueFunction} and rearranging terms as follows (recall that $H_0=H_{T+1}=0$)
\begin{eqnarray*}
   V_T(H)&=&\sum_{t=0}^T\sum_{j=1}^d\left(H^j_t-H^j_{t+1}\right)\left(\oS^j_t\mathbf{1}_{\{H^j_t\leq H^j_{t+1}\}}+\uS^j_t\mathbf{1}_{\{H^j_{t+1}\leq H^j_{t}\}}-S^j_t+S^j_t\right)\\
   &=&\sum_{t=0}^T\sum_{j=1}^d\left(H^j_t-H^j_{t+1}\right)S^j_t+\\
   &&\sum_{t=0}^T\sum_{j=1}^d\left(H^j_t-H^j_{t+1}\right)\left((\oS^j_t-S^j_t)\mathbf{1}_{\{H^j_t\leq H^j_{t+1}\}}+(\uS^j_t-S^j_t)\mathbf{1}_{\{H^j_{t+1}\leq H^j_{t}\}}\right)\\
   &\leq&\sum_{t=0}^T\sum_{j=1}^d\left(H^j_t-H^j_{t+1}\right)S^j_t=\sum_{j=1}^d \left(\sum_{t=1}^T H^j_t S^j_t-\sum_{t=1}^T H^j_{t}S^j_{t-1}\right)=(H\circ S)_T.\\
  \end{eqnarray*}
 From $0\leq V_T(H)\leq (H\circ S)_T$, by taking expectations respect to $Q$, we get $V_T(H)=0$ $Q$-a.s. from which No Model Independent Arbitrage is possible.\\
\end{proof}
\begin{remark} In the frictionless case it has been shown in \cite{BFM14} that several concepts of arbitrage from the model-free context can be studied within the same framework, by means of the so-called \textit{Arbitrage de la classe} $\mathcal{S}$. We decided to choose the Model Independent notion, which is the strongest among this family (hence the weakest no arbitrage condition), and which correspond to $\mathcal{S}:=\{\Omega\}$. With similar techniques the analysis could be extended to general classes $\mathcal{S}$.
\end{remark}
Before giving the proof of the converse implication we need some preliminary results. This implication will be proven by contraposition, namely, assuming $\mathcal{M}_{\Pi}=\varnothing$ we will use an iterative modification of the bid-ask spread in order to capture arbitrage opportunities. This idea is similar in spirit to \cite{BZ13} but different in its implementation. In particular we do not solve first the problem for the one period case and then expanding to the multi-period case but we directly tackle the dynamic case. Note indeed that, when trading have transaction costs, arbitrage strategies might involve different times of execution. The simple example in the Introduction of \cite{BZ13} clarify this intuition: consider a single asset with deterministic bid-ask spread $[1,3]$ at time $0$ and $[2,4]$ $[3.5,5]$ at time $1$ and $2$ respectively. There is an arbitrage opportunity given by the strategy: buy at time 0 and sell at time 2.\\

For any $t\in I$, for any $\omega\in \Omega$, define iteratively, the following random sets
\begin{equation}\label{modifiedBidAsk}
\begin{split}
&\mathbb{S}_{T+1}(\omega):=\mathbb{R}^d\\
&\mathbb{S}_{t-1}(\omega):= C_{t-1}(\omega)\cap \overline{conv} \left(\mathbb{S}_{t}(\Sigma_{t-1}^{\omega}) \right)\quad \text{for }t=T+1\ldots,1
\end{split}
\end{equation}
 where $\Sigma_{t-1}^{\omega}=\{\widetilde{\omega}\in\Omega\mid \uS_{0:t-1}(\widetilde{\omega})=\uS_{0:t-1}(\omega), \oS_{0:t-1}(\widetilde{\omega})=\oS_{0:t-1}(\omega)\}\in\Ft$. Here $S_{0:t-1}(\omega)$ is a shorthand for the trajectory of the process $S$ up to time $t-1$.\\
 
The intuition behind this operation is the following. Consider first $t=T$ and observe that $\mathbb{S}_T$ is simply $C_T$. The random set $\mathbb{S}_{T-1}$ is given by the intersection of the bid-ask spread at time $T-1$ and the set of all convex combination of elements with values in the bid ask-spread at time $T$. Consider now a probability measure $P\in\mathfrak{P}$ with finite support and suppose $P(\Sigma_{T-1}^\omega)>0$. We note that if $P$ is a martingale measure for some $(S_{T-1},S_T)\in C_{T-1}\times C_{T}$ then $S_{T-1}$ needs to be a convex combination of $S_T$. We are therefore excluding from $C_{T-1}$ those values that cannot represents a conditional expectation of an $\mathcal{F}_{T}$-measurable random vector with values in $C_T$ respect to any probability measure with finite support. We first prove some measurability results.
 \begin{lemma}\label{LemmaTheta_meas}
  For any $t=0,\ldots, T+1$ the random set $\mathbb{S}_t$ as in \eqref{modifiedBidAsk} is \Ftt-measurable.
 \end{lemma}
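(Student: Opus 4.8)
The plan is to argue by backward induction on $t$, from $t = T+1$ down to $t = 0$. The base case $t = T+1$ is trivial since $\mathbb{S}_{T+1} \equiv \mathbb{R}^d$ is a constant (hence $\mathcal{F}_{T+1}$-measurable) random set. For the inductive step, assume $\mathbb{S}_t$ is $\mathcal{F}_t$-measurable and show $\mathbb{S}_{t-1}$ is $\mathcal{F}_{t-1}$-measurable. Recall from \eqref{modifiedBidAsk} that $\mathbb{S}_{t-1}(\omega) = C_{t-1}(\omega) \cap \overline{conv}\big(\mathbb{S}_t(\Sigma_{t-1}^\omega)\big)$, so it suffices to handle the two pieces of the intersection separately and then invoke the standard fact (from the random-set theory collected in the Appendix) that the intersection of two closed $\mathcal{F}_{t-1}$-measurable random sets is again $\mathcal{F}_{t-1}$-measurable. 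The first factor, $C_{t-1}$, is a product of bid-ask intervals built from $\uS_{t-1}, \oS_{t-1}$, which are $\mathcal{F}_{t-1}$-measurable by construction of the filtration, so $C_{t-1}$ is trivially $\mathcal{F}_{t-1}$-measurable; moreover it is closed and, by Assumption \ref{assumptions}, bounded.

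The heart of the argument is therefore the measurability of $\omega \mapsto \overline{conv}\big(\mathbb{S}_t(\Sigma_{t-1}^\omega)\big)$. The key structural observation is that this set depends on $\omega$ only through the trajectory $(\uS_{0:t-1}(\omega), \oS_{0:t-1}(\omega))$, i.e. it is constant on each atom-like fiber $\Sigma_{t-1}^\omega$ of the $\mathcal{F}_{t-1}$-information; intuitively this already says the map is ``$\mathcal{F}_{t-1}$-measurable,'' but one must make this precise at the level of random sets rather than functions. I would proceed as follows: first, note $\mathbb{S}_t$ is $\mathcal{F}_t$-measurable with closed bounded values, so by a measurable selection / Castaing representation there is a countable family $(\sigma_n)_{n\in\mathbb{N}}$ of $\mathcal{F}_t$-measurable selectors of $\mathbb{S}_t$ with $\mathbb{S}_t(\omega) = \overline{\{\sigma_n(\omega) : n\in\mathbb{N}\}}$ for every $\omega$. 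Then for fixed $\omega$, the set $\mathbb{S}_t(\Sigma_{t-1}^\omega) = \overline{\{\sigma_n(\widetilde\omega) : n\in\mathbb{N}, \ \widetilde\omega \in \Sigma_{t-1}^\omega\}}$, and $\overline{conv}$ of this coincides with $\overline{conv}$ of the underlying countable set of values. Writing $G_{t-1}(\omega) := \overline{conv}\{\sigma_n(\widetilde\omega) : n, \ \widetilde\omega\in\Sigma_{t-1}^\omega\}$, one checks that $\omega\mapsto G_{t-1}(\omega)$ has a graph that is $\mathcal{F}_{t-1}\otimes\mathcal{B}(\mathbb{R}^d)$-measurable: a point $x$ lies in $G_{t-1}(\omega)$ iff it is in the closed convex hull of the (countably many) values $\sigma_n$ take over $\Sigma_{t-1}^\omega$, and membership in a closed convex hull of a countable set can be written via Carathéodory's theorem as an existential quantifier over finitely many convex coefficients and finitely many of the $\sigma_n$-values ranging over $\Sigma_{t-1}^\omega$; projecting out these quantifiers (using the projection theorem, since $\Omega$ is Polish and the filtration is universally completed) preserves $\mathcal{F}_{t-1}$-measurability. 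Since $G_{t-1}$ has closed values and $\mathcal{F}_{t-1}$-measurable graph, it is an $\mathcal{F}_{t-1}$-measurable random set.

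Finally, I would intersect: $\mathbb{S}_{t-1} = C_{t-1}\cap G_{t-1}$ is an intersection of two $\mathcal{F}_{t-1}$-measurable closed random sets, hence $\mathcal{F}_{t-1}$-measurable (and nonempty-valued wherever relevant — though nonemptiness is not asserted in this lemma). This closes the induction.

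The step I expect to be the main obstacle is the rigorous verification that $\overline{conv}\big(\mathbb{S}_t(\Sigma_{t-1}^\omega)\big)$ is $\mathcal{F}_{t-1}$-measurable as a random set: one must simultaneously (i) pass from the random set $\mathbb{S}_t$ to a countable Castaing family so that the ``union over the fiber $\Sigma_{t-1}^\omega$'' is controlled, (ii) reduce the closed-convex-hull operation to a finite existential statement via Carathéodory, and (iii) apply the projection theorem correctly, which is precisely why the universally-completed filtration $\mathbb{F}$ (rather than the raw $\mathbb{F}^{\widetilde S}$) is needed. Boundedness from Assumption \ref{assumptions} is used to ensure the convex hulls stay closed and that no pathologies arise at infinity. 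Everything else — closedness of $C_{t-1}$, measurability of intersections — is routine and can be cited from the random-set results recalled in the Appendix.
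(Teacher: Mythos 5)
Your proposal is correct and its skeleton --- backward induction, treating $C_{t-1}$ and $\overline{conv}\left(\mathbb{S}_t(\Sigma_{t-1}^\omega)\right)$ separately, a Castaing representation of $\mathbb{S}_t$ to reduce the image of the fiber $\Sigma_{t-1}^\omega$ to countably many measurable selectors, and the universal completion of the filtration to absorb the resulting projections --- is the same as the paper's. Where you genuinely diverge is in how the measurability of $\omega\mapsto \overline{conv}\left(\mathbb{S}_t(\Sigma_{t-1}^\omega)\right)$ is finished. The paper never touches graph measurability or Carath\'eodory's theorem: for each fixed $n$ it verifies the hit-set condition for $G_n(\omega):=\varphi_n(\Sigma_{t-1}^\omega)$ directly, writing $\{\omega\mid G_n(\omega)\cap O\neq\varnothing\}=\gamma_{t-1}^{-1}\left(\gamma_{t-1}(\varphi_n^{-1}(O))\right)$ with $\gamma_{t-1}=(\uS_{0:t-1},\oS_{0:t-1})$, argues this set is in \Ft{} via analyticity of images and preimages, checks $\overline{\mathbb{S}_t(\Sigma_{t-1}^\omega)}=\overline{\cup_n G_n}$, and then outsources the closed convex hull and the intersection with $C_{t-1}$ entirely to the random-set calculus of Proposition \ref{preservation_measurability} (countable unions, convex hull, closure, intersections of closed-valued maps). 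Your route instead builds the graph of the closed convex hull by hand, which works but needs two touch-ups: (i) Carath\'eodory's theorem only expresses membership in the \emph{convex hull} as a finite existential statement, so the closure forces an extra countable intersection over rational $\varepsilon$ of statements of the form ``some convex combination lies within $\varepsilon$ of $x$'' before projecting; and (ii) both the projection step and the implication ``closed values plus measurable graph implies measurable multifunction'' rely on the completeness supplied by the universal filtration --- the same ingredient the paper invokes through analytic sets --- and since the Castaing selectors are only \Ftt-measurable rather than Borel, the appeal to the projection theorem should be routed through the universal completion exactly as the paper does for $\gamma_{t-1}^{-1}(\gamma_{t-1}(B))$. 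The paper's version is shorter because the convex-hull bookkeeping is cited rather than re-proved; yours is more self-contained at the price of the Carath\'eodory/projection detour. Finally, boundedness of $C_t$ plays no role in this measurability statement (closures are taken explicitly in \eqref{modifiedBidAsk}), so that part of your closing remark can be dropped.
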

 \begin{proof}
 For $t=T+1$ the claim is obvious. Suppose now that the claim holds for any $s\in\{t,\ldots, T+1\}$, we show that $\mathbb{S}_{t-1}$ is \Ft-measurable. Observe first that $C_{t-1}(\omega)$ is the closed convex hull of the multi-function $\omega\mapsto p_1(\omega) \times\cdots\times p_d(\omega)$ where $p_j=\{\uS_{t-1}^{j}\}\cup\{\oS_{t-1}^{j}\}$ for $j=1\ldots d$. All the $p_j$ are \Ft-measurable random sets being union of two \Ft-measurable random sets (whose values are singletons), by preservation of measurability through the operations of finite cartesian product, convex hull and closure we have that $C_{t-1}(\omega)$ is also \Ft-measurable (see Proposition \ref{preservation_measurability}).\\

 We turn now to the set $\mathbb{S}_t(\Sigma_{t-1}^{\omega})$. Denote by dom $\mathbb{S}_t:=\{\omega\mid \mathbb{S}_t(\omega)\neq\varnothing\}$ Since, by hypothesis, $\mathbb{S}_{t}$ is \Ftt-measurable it admits a Castaing representation, that is, there exists a collection $\{\varphi_n\}$ of \Ftt-measurable function $\varphi_n:$ dom $\mathbb{S}_t\rightarrow\mathbb{R}^d$ such that $\overline{\{\varphi_n(\omega)\mid n\in\mathbb{N}\}}=\mathbb{S}_t(\omega)$ for any $\omega\in\Omega$. Define therefore for $n\in\mathbb{N}$ the multi-functions $G_n:\omega\mapsto\{\varphi_n(\widetilde{\omega})\mid\widetilde{\omega}\in\Sigma_{t-1}^\omega\}$ which, as we now show, are \Ft-measurable: define $\gamma_{t-1}:\Omega\mapsto\R^{d\times t}\times\R^{d\times t}$ as $\gamma_{t-1}:=(\uS_{0:t-1},\oS_{0:t-1})$ and observe that $\forall O\subseteq\mathbb{R}^d$ open, and with $B:=\varphi_n^{-1}(O)$,  we have $$\{\omega\in\Omega\mid G_n(\omega)\cap O\neq \varnothing\}=\gamma_{t-1}^{-1}\left(\gamma_{t-1}(B)\right)\in\Ft$$
Recall indeed that image and counterimage of Borel sets through Borel measurable functions are analytic and that the Universal Filtration contains the class of analytic sets of \Ft (See for example Theorem III.18 and Theorem III.11 in \cite{DM82}). Observe now that $\overline{\mathbb{S}_t(\Sigma_{t-1}^\omega)}=\overline{\cup_{n\in\mathbb{N}}G_n}$. The inclusion $\supseteq$ is obvious. Take now $\overline{x}\in \overline{\mathbb{S}_t(\Sigma_{t-1}^\omega)}$ and a sequence $x_k\rightarrow \overline{x}$. We note that $x_k\in\overline{\cup_{n\in\mathbb{N}}G_n}$ for every $k$, since this set contains the collection $\{\varphi_n(\widetilde{\omega})\mid n\in\mathbb{N},\ \widetilde{\omega}\in\Sigma_{t-1}^{\omega}\}$ which is induced by the Castaing representation of $\mathbb{S}_t$. It therefore follows that $\overline{x}\in \overline{\cup_{n\in\mathbb{N}}G_n}$. We conclude that
  \begin{equation}\label{countableTheta}
  \mathbb{S}_{t-1}(\omega):= C_{t-1}(\omega)\cap \overline{conv} \left(\mathbb{S}_{t}(\Sigma_{t-1}^{\omega})\right)=C_{t-1}(\omega)\cap \overline{conv} \left(\cup_{n\in\mathbb{N}}G_n\right)
  \end{equation}
 is \Ft-measurable since the random sets $C_{t-1}$ and $\{G_n\}_{n\in\mathbb{N}}$ share the same measurability property and the transformations involved in \eqref{countableTheta} preserve measurability (see Proposition \ref{preservation_measurability}).
 \end{proof}
 \begin{corollary}\label{corTheta_Meas}
 The random sets
 $C_{t}(\omega)$, $\overline{\mathbb{S}_{t+1}(\Sigma_{t}^{\omega})}$ and $\overline{conv} \left(\mathbb{S}_{t+1}(\Sigma_{t}^{\omega}) \right)$ are \Ftt-measurable for any $t=0,\ldots, T$.
 \end{corollary}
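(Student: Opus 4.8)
Corollary \ref{corTheta_Meas} is a direct consequence of Lemma \ref{LemmaTheta_meas} together with the preservation properties quoted from the Appendix. I would argue as follows. By Lemma \ref{LemmaTheta_meas}, for every $t=0,\ldots,T+1$ the random set $\mathbb{S}_t$ is $\mathcal{F}_t$-measurable. Hence three separate points need to be addressed: (i) $C_t$ is $\mathcal{F}_t$-measurable; (ii) $\overline{\mathbb{S}_{t+1}(\Sigma_t^\omega)}$ is $\mathcal{F}_t$-measurable; (iii) $\overline{conv}\bigl(\mathbb{S}_{t+1}(\Sigma_t^\omega)\bigr)$ is $\mathcal{F}_t$-measurable.

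\textbf{Steps.} For (i), I would repeat verbatim the opening paragraph of the proof of Lemma \ref{LemmaTheta_meas}: $C_t$ is the closed convex hull of $\omega\mapsto p_1(\omega)\times\cdots\times p_d(\omega)$ with $p_j=\{\uS_t^j\}\cup\{\oS_t^j\}$, each $p_j$ is $\mathcal{F}_t$-measurable as a finite union of measurable singleton-valued maps, and finite cartesian product, convex hull and closure preserve measurability by Proposition \ref{preservation_measurability}; this gives $C_t$ $\mathcal{F}_t$-measurable. For (ii) and (iii), the key observation is that these are exactly the intermediate objects appearing inside the proof of Lemma \ref{LemmaTheta_meas}, but \emph{one index shifted}: in that proof (with the running index $t$ there playing the role of $t+1$ here) it is shown that $\overline{\mathbb{S}_{t}(\Sigma_{t-1}^\omega)}=\overline{\cup_{n\in\mathbb{N}}G_n}$ is $\mathcal{F}_{t-1}$-measurable, where $G_n(\omega)=\{\varphi_n(\widetilde\omega)\mid\widetilde\omega\in\Sigma_{t-1}^\omega\}$ and $\{\varphi_n\}$ is a Castaing representation of $\mathbb{S}_t$. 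Applying that construction with $t$ replaced by $t+1$ yields that $\overline{\mathbb{S}_{t+1}(\Sigma_t^\omega)}$ is $\mathcal{F}_t$-measurable (this is precisely statement (ii)), and then $\overline{conv}\bigl(\mathbb{S}_{t+1}(\Sigma_t^\omega)\bigr)=\overline{conv}\bigl(\cup_{n\in\mathbb{N}}G_n\bigr)$ is $\mathcal{F}_t$-measurable because the convex-hull and closure operations preserve measurability (Proposition \ref{preservation_measurability}), giving (iii). One should note the boundary case $t+1=T+1$, where $\mathbb{S}_{T+1}\equiv\mathbb{R}^d$ and the assertions are immediate.

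\textbf{Main obstacle.} There is essentially no new mathematical content here; the only thing to be careful about is the index bookkeeping, namely that the statements for $t=0,\ldots,T$ in the Corollary correspond to the $t-1=0,\ldots,T$ instances used inside Lemma \ref{LemmaTheta_meas}, and that the Castaing representation of $\mathbb{S}_{t+1}$ that one invokes is legitimate exactly because $\mathbb{S}_{t+1}$ has already been shown to be $\mathcal{F}_{t+1}$-measurable in Lemma \ref{LemmaTheta_meas}. So I would simply write: ``The first assertion was established in the first paragraph of the proof of Lemma \ref{LemmaTheta_meas}. The remaining two follow by the same argument as the second part of that proof applied with index $t+1$ in place of $t$, using that $\mathbb{S}_{t+1}$ is $\mathcal{F}_{t+1}$-measurable by Lemma \ref{LemmaTheta_meas}; the case $t=T$ is trivial since $\mathbb{S}_{T+1}=\mathbb{R}^d$.''
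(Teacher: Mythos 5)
Your proposal is correct and follows essentially the same route as the paper: measurability of $C_t$ is read off from the first part of the proof of Lemma \ref{LemmaTheta_meas}, and measurability of $\overline{\mathbb{S}_{t+1}(\Sigma_t^\omega)}$ (hence of its closed convex hull via Proposition \ref{preservation_measurability}) comes from the Castaing-representation/$G_n$ construction in that proof, i.e.\ equation \eqref{countableTheta}, applied with the index shifted by one. Your explicit remarks on the index bookkeeping and the trivial boundary case $\mathbb{S}_{T+1}=\mathbb{R}^d$ are harmless additions, not deviations.
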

 \begin{proof}
 Measurability of $C_t$ follows from the first part of the proof of Lemma \ref{LemmaTheta_meas}, measurability of $\overline{\mathbb{S}_{t+1}(\Sigma_{t}^{\omega})}$, and therefore of $\overline{conv} \left(\mathbb{S}_{t+1}(\Sigma_{t}^{\omega}) \right)$, follows from \eqref{countableTheta} and the discussion right before.
 \end{proof}

 \begin{remark}\label{smallerBidAsk} Note that with no loss of generality we may assume that if $\mathbb{S}_t(\omega)\neq\varnothing$ then $int(\mathbb{S}_t(\omega))\neq \varnothing$. For $t=T$ this is true since, by construction, $\mathbb{S}_T=C_T$ and $int(C_T)\neq\varnothing$ by Assumption \ref{assumptions}. If this is true up to time $t+1$ then it is true for time $t$ by considering, if needed, a bid-ask spread with smaller transaction costs $\widetilde{\Pi}$. Indeed, since $C_t$ and $\overline{conv} \left(\mathbb{S}_{t+1}(\Sigma_{t}^{\omega}) \right)$ have non empty interior by hypothesis, if the intersection has empty interior it is sufficient to consider an arbitrary small reduction of the bid-ask spread process to obtain $\mathbb{S}_t=\varnothing$. Take for example $\widetilde{\pi}_t^{0j}:=\pi_t^{0j}-\varepsilon^j(\omega)$ and $1/\widetilde{\pi}_t^{j0}:=1/\pi_t^{j0}+\varepsilon^j(\omega)$ where $\varepsilon^j(\omega):=\varepsilon \left(\pi_t^{0j}(\omega)-1/\pi_t^{j0}(\omega)\right)>0$ for an arbitrary small $\varepsilon>0$.
 \end{remark}
More formally we can consider $\widetilde{\Pi}$ with smaller transaction costs as in Definition \ref{smallerTr}, and define the corresponding $\widetilde{\mathbb{S}}_t$ as in \eqref{modifiedBidAsk} and $\widetilde{C}_t$ as in \eqref{convexBidAsk}, with bid-ask process $\widetilde{\Pi}$. Our aim is to show that under the assumption $\mathcal{M}_{\Pi}=\varnothing$ (with the original bid-ask process $\Pi$), there exists $H\in\mathcal{H}$ such that $V_T(H)(\widetilde{\Pi})>0$ for any $\omega\in\Omega$. Since we take $\widetilde{\Pi}$ arbitrary the thesis of the FTAP will follow. Observe first the following
\begin{lemma}\label{convexAndMtg}Let $\widetilde{\mathbb{S}}_t$ as in \eqref{modifiedBidAsk} with bid-ask process $\widetilde{\Pi}$ and $\mathcal{M}_{\Pi}$ the set of strictly consistent price systems as in Definition \ref{defCPS} for the bid-ask process $\Pi$. Then, \begin{equation*}
\{\omega\in\Omega\mid \widetilde{\mathbb{S}}_t(\omega)\neq\varnothing\quad \forall t=0\ldots T\}\neq\varnothing\Longrightarrow\mathcal{M}_{\Pi}\neq\varnothing.
\end{equation*}
\end{lemma}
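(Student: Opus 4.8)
The plan is to show that the nonemptiness of $\widetilde{\mathbb{S}}_t$ for all $t$ allows one to construct, by backward-then-forward recursion, a process $S$ with values in the (original) bid-ask spread together with a probability measure $Q$ with finite support under which $S$ is a martingale, and moreover $S_t \in \widetilde{C}_t \subset \mathrm{int}(C_t)$ so that the resulting CPS is strictly consistent. Concretely, fix $\omega_0 \in \{\omega \mid \widetilde{\mathbb{S}}_t(\omega)\neq\varnothing\ \forall t\}$ and pick any $s_0 \in \widetilde{\mathbb{S}}_0(\omega_0)$ (using Remark \ref{smallerBidAsk} we may even take $s_0 \in \mathrm{int}(\widetilde{\mathbb{S}}_0(\omega_0))$, but nonemptiness suffices). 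By the defining relation \eqref{modifiedBidAsk}, $s_0 \in \overline{conv}(\widetilde{\mathbb{S}}_1(\Sigma_0^{\omega_0}))$, hence by Carath\'eodory's theorem $s_0$ is a finite convex combination $s_0 = \sum_{i} p_i\, x_i$ with $x_i \in \widetilde{\mathbb{S}}_1(\omega_i)$ for finitely many $\omega_i \in \Sigma_0^{\omega_0}$ and weights $p_i>0$, $\sum_i p_i = 1$ (one must first replace $\overline{conv}$ by $conv$ of the countable family $\cup_n G_n$ from \eqref{countableTheta}, approximating; I will address the closure issue below). Iterating this at each subsequent time on each of the finitely many atoms produced so far, one builds a finite tree of sample points $\omega$, at each node a point $S_t(\omega) \in \widetilde{\mathbb{S}}_t(\omega) \subseteq \widetilde{C}_t(\omega)$, and transition weights making $S_{t}$ the conditional average of $S_{t+1}$.

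The key steps, in order, are: (i) define $Q$ as the probability on the finite set of terminal leaves obtained by multiplying transition weights along each branch; (ii) define $S_t^j$ on each node to be the chosen point of $\widetilde{\mathbb{S}}_t$, and $S_t^0 \equiv 1$; extend $S$ and the $\sigma$-algebras $\mathcal{F}_t$ trivially off the support of $Q$ (measurability is automatic since $Q$-a.s.\ everything is finitely supported, and one checks the chosen points can be taken $\mathcal{F}_t$-measurable because $\widetilde{\mathbb{S}}_t$ is $\mathcal{F}_t$-measurable by Lemma \ref{LemmaTheta_meas} and admits a Castaing representation); (iii) verify the martingale property $E_Q[S_{t+1}\mid \mathcal{F}_t] = S_t$, which holds by construction of the transition weights; (iv) observe $S_t^j \in \widetilde{\mathbb{S}}_t \subseteq \widetilde{C}_t \subset \mathrm{int}(C_t)$, so $S$ takes values in the interior of the original bid-ask spread, i.e.\ $(Q,S) \in \mathcal{M}_\Pi$, whence $\mathcal{M}_\Pi \neq \varnothing$.

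The main obstacle I expect is the \emph{closure} in $\overline{conv}(\widetilde{\mathbb{S}}_t(\Sigma_{t-1}^\omega))$: a point of the closed convex hull need not itself be a \emph{finite} convex combination of elements of the set, so the naive Carath\'eodory argument produces points only in $\widetilde{\mathbb{S}}_{t-1}$ up to an approximation error, and errors could compound over the finitely many steps and, worse, the approximating points need not lie in $C_{t-1}$. The clean way around this is exactly Remark \ref{smallerBidAsk}: work throughout with the strictly smaller bid-ask spread $\widetilde{\Pi}$, so that $\widetilde{\mathbb{S}}_t$ has nonempty interior whenever nonempty; then one can select a point in $\mathrm{int}(\widetilde{\mathbb{S}}_{t-1}(\omega))$, and since such an interior point lies in the \emph{interior} of $\overline{conv}(\widetilde{\mathbb{S}}_t(\Sigma_{t-1}^\omega))$ it is a genuine finite convex combination of points of $\widetilde{\mathbb{S}}_t(\Sigma_{t-1}^\omega)$ — no closure needed — and these points in turn can be perturbed slightly to lie in $\mathrm{int}(\widetilde{\mathbb{S}}_t)$, keeping the recursion going. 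A secondary technical point is ensuring the finitely many selected $\omega_i$ at each node can be chosen so that the overall map $t\mapsto S_t$ is adapted to the universal filtration; this is handled using the Castaing representation $\{\varphi_n\}$ and the analytic-set measurability facts already invoked in the proof of Lemma \ref{LemmaTheta_meas}.
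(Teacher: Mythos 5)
Your construction is essentially the paper's own proof: fix an $\omega$ with all $\widetilde{\mathbb{S}}_t(\omega)\neq\varnothing$, iteratively write a chosen point as a \emph{finite} convex combination of points of $\widetilde{\mathbb{S}}_{t+1}(\Sigma_t^{\omega})$, and read off a finitely supported martingale measure $Q$ and a process $S$ with values in $\widetilde{C}_t\subset int(C_t)$, hence a strictly consistent price system. The closure obstacle you flag is handled in the paper by the relative-interior calculus $ri(\widetilde{\mathbb{S}}_t)=ri(\widetilde{C}_t)\cap ri\bigl(conv(\widetilde{\mathbb{S}}_{t+1}(\Sigma_t^{\omega}))\bigr)$ together with $ri(\overline{conv}(\cdot))=ri(conv(\cdot))$, which is the cleaner form of your interior-point/perturbation fix (and makes the ``perturb slightly and keep the exact combination'' step, which as stated is slightly loose, unnecessary).
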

\begin{proof} We build up a strictly consistent price system iteratively. Fix $\omega\in\Omega$ such that $\widetilde{\mathbb{S}}_t(\omega)\neq\varnothing$ for any $t=0,\ldots T$. By definition of $\widetilde{\mathbb{S}}_t$ we have $ri(\widetilde{\mathbb{S}}_t(\omega))=ri (\widetilde{C}_{t}(\omega))\cap ri (conv(\widetilde{\mathbb{S}}_{t+1}(\Sigma_{t}^{\omega}) ))$ (see e.g. Proposition 2.40 and 2.42 in \cite{R}) when the right hand side is non-empty. Note that, from $ri(\widetilde{C}_{t}(\omega))=int(\widetilde{C}_{t}(\omega))\subset int(C_t(\omega))$, we can assume this with no loss of generality. Indeed, if necessary, we can consider any $\hat{\Pi}$ which satisfies $\widetilde{C}_{t}(\omega)\subset int(\hat{C}_t(\omega))\subset int(C_t(\omega))$ for the corresponding set $\hat{C}_t(\omega)$. Therefore for any $\overline{y}\in ri(\widetilde{\mathbb{S}}_t(\omega))\neq\varnothing$ there exist $\lambda_1,\ldots, \lambda_m>0$ with $\sum_{i=1}^m\lambda_i=1$, $y_1,\ldots y_m\subseteq \R^d$, $\omega_1,\ldots \omega_m\subseteq\Sigma_{t}^\omega$ such that 
\begin{itemize}
\item $y_i\in \widetilde{\mathbb{S}}_{t+1}(\omega_i)\subseteq \widetilde{C}_{t+1}(\omega_i)\subset int(C_{t+1}(\omega_i))\quad \forall i=1,\ldots,m$
\item $\overline{y}=\sum_{i=1}^m\lambda_i y_i$
\end{itemize}
Start therefore with an arbitrary $x_0\in ri(\widetilde{\mathbb{S}}_0)\subset int(C_0)$ which is non-empty from the hypothesis. Associate to $x_0$ the real number $p({x_0})=1$ and set $Z_0=\{x_0\}$. Suppose a set of finite trajectories $Z_t:=\{x_{0:t}\in Mat(d \times (t+1))\}$ has been chosen up to time $t$ with associated $p(x_{0:t})>0$ summing up to one. By applying the above procedure to $x_t$ where $x_t$ is the value at time $t$ of a trajectory $x_{0:t}\in Z_t$, we can construct a new finite set of trajectories $$Z_{t+1}:=\{[x_{0:t},y_1(x_{0:t})],\ldots,[x_{0:t},y_m(x_{0:t})]\mid x_{0:t}\in Z_t\}$$ with associated $p([x_{0:t},y_i(x_{0:t})])=\lambda_i p(x_{0:t})$ for every $i=1,\ldots m(x_{0:t})$. \\ Observe that given the set $Z_T$ for any $x_{0:T}\in Z_T$ there exists $\omega\in\Omega$ such that $x_{0:T}\in C_0(\omega)\times\cdots\times C_T(\omega)$. Moreover, defining $S_t(\omega):=x_t\mathbf{1}_{\Sigma_{t}^{\omega}}$ and the probability measure $Q(\omega):=p(x_{0:T})$ we have that $S_t$ is \Ftt-measurable for any $t=0,\ldots T$ and \begin{equation}E_Q[S_{t}\mid\Ftt]=S_{t-1},\quad \text{ for t=1,\ldots T}. \end{equation} Thus, $Q$ is a martingale measure for $S$ which by construction lies in the interior of the bid-ask spread $\Pi$.
\end{proof}

\bigskip

 We are now able to complete the proof of Theorem \ref{FTAP}.\\

\begin{proof}[Proof of Theorem \ref{FTAP} $(\Rightarrow)$]
We prove the ``only if'' part by contraposition through several steps. Assume $\mathcal{M}_{\Pi}=\varnothing$ and let $\widetilde{\Pi}$ a bid-ask spread smaller than $\Pi$ with $\widetilde{C}_T\neq\varnothing$.\\
\textbf{Step 1:} Define first the random time $$\tau(\omega):=\inf\{0\leq t\leq T\mid \widetilde{\mathbb{S}}_{t}(\omega)=\varnothing\text{ and } \overline{conv} \left(\widetilde{\mathbb{S}}_{t+1}(\Sigma_{t}^{\omega}) \right)\neq\varnothing\}.$$
Observe that $\tau$ is a stopping time: for any $t\in I$ the set $\{\tau\leq t\}$ coincides with the set $\cup_{u=1}^t(\{\omega:\widetilde{\mathbb{S}}_u(\omega)=\varnothing\}\cap \{\omega:\overline{conv} (\widetilde{\mathbb{S}}_{u+1}(\Sigma_{u}^{\omega}) )\neq\varnothing\}$ which belongs to \Ftt$\ $from Lemma \ref{LemmaTheta_meas} and Corollary \ref{corTheta_Meas}. Observe now that under the assumption $\mathcal{M}_{\Pi}=\varnothing$, as a consequence of Lemma \ref{convexAndMtg}, for any $\omega$ there exists $u=u(\omega)$ such that $\widetilde{\mathbb{S}}_u(\omega)=\varnothing$. \\ Straightforward from definition \eqref{modifiedBidAsk}, $\widetilde{\mathbb{S}}_T(\omega)=\widetilde{C}_T(\omega)\neq\varnothing$ and hence $\overline{conv} (\widetilde{\mathbb{S}}_{T}(\Sigma_{T-1}^{\omega}))\neq \varnothing$. We can therefore deduce that $\tau(\omega)\leq T-1$ for any $\omega\in\Omega$, thus, $\tau$ is a finite stopping time.\\

\textit{Since for the rest of the proof we are considering the smaller bid-ask process $\widetilde{\Pi}$ for ease of notation and exposition we omit the superscript $\ \widetilde{\cdot}\ $ as no confusion arise here. So that we denote $\widetilde{\mathbb{S}}_t$ simply as $\mathbb{S}_t$ and $\widetilde{C}_t$ simply as $C_t$ for every $t\in I$.}\\
 
\textbf{Step 2:} For any $t\in\{0,\ldots T\}$ let $H=\{H_u\mid u\leq t\}$ with $H_0=0$ and $H_u\in \mathcal{L}^0(\F_{u-1};\R^d)$ be given. For $\xi:=sign H_t$, we introduce the following process $\hat{S}_t^{\xi}$ which take values at the boundary of the bid-ask spread.\footnote{The choice for the event ${\{H_t^i= 0\}}$ can be actually arbitrary without affecting the value of the strategy, for simplicity it is included here in the positive case.}
\begin{equation}\label{S_mod}
\hat{S}^{\xi}_t:=\left(\uS^1_t\mathbf{1}_{\{H_t^1\geq 0\}}+\oS^1_t
\mathbf{1}_{\{H_t^1<0\}},\ldots,\uS^d_t\mathbf{1}_{\{H_t^d\geq 0\}}+\oS^d_t
\mathbf{1}_{\{H_t^d<0\}}\right).
\end{equation}

We introduce also the sets $A_t$ and $B_t$ as follows: 
\begin{equation}\label{AandB}
A_t:=\{\tau=t\}\cap \bigcap_{u=0}^{t}\{H_u=0\},\qquad B_t:=\{H_t\neq 0\}\cap \{\hat{S}^{\xi}_t\notin \mathbb{S}_t\}.
\end{equation}
For an interpretation of these sets see Remark \ref{interpretationAB}.\\

We now show that $A_t$ and $B_t$ are \Ftt-measurable. The measurability of $A_t$ is obvious from $\tau$ being a stopping time and the measurability of $H_u$ for $u\leq t$. Now, observe that $sign(H_t)$ is \Ft-measurable since for any $x\in\Xi:=\{x\in\R^d\mid x^i\in\{-1,0,1\}\}$, $sign(H_t)^{-1}(x)=H_{t}^{-1}(x_1(0,\infty)\times,\ldots\times x_d(0,\infty))$ where with a slight abuse of notation $x_i(0,\infty)$ is either $(0,\infty)$, $(-\infty,0)$ or $\{0\}$ according to $x_i$ being respectively $1$, $-1$ or $0$.\\ $\hat{S}^{\xi}_t$ is \Ftt-measurable since for any $O:=O_1\times\ldots\times O_d\subseteq\mathbb{R}^d$ with $O_i$ open for $i=1,\ldots d$, we have
\begin{eqnarray*}
(\hat{S}^{\xi}_t)^{-1}(O)= &\bigcap_{i=1}^d\{&(\uS^i)^{-1}(O_i)\cap (\xi)^{-1}[0,\infty)\quad \cup \\& & (\oS^{i})^{-1}(O_i)\cap (\xi)^{-1}(-\infty,0)\qquad\}.
\end{eqnarray*}
The set $\{\hat{S}^{\xi}_t\in \mathbb{S}_t\}$ is \Ftt-measurable since it is the projection on $\Omega$ of the intersection of $Graph(\hat{S}^{\xi}_t)$ and $Graph(\mathbb{S}_t)$. We easily conclude that $B_t$ is \Ftt-measurable.\\

\textbf{Step 3:} Consider the sets $\{A_t\}_{t\in I}$ as in Step 2. We show that for any $t=1,\ldots T$ and for any $\varepsilon>0$, there exists an \Ft-measurable random vector $H^A_{t}$ such that $\forall\omega\in A_{t-1}$ \begin{equation}\label{ArbAt}
H^A_{t}(\omega)\cdot (s-x)\geq\varepsilon\qquad \forall s\in\mathbb{S}_{t}(\Sigma_{t-1}^\omega),\ \forall x\in C_{t-1}(\omega).
\end{equation}
To see this observe that the random set $(\overline{\mathbb{S}_{t}(\Sigma_{t-1}^\omega)}-C_{t-1}(\omega))^\varepsilon$ (see Notation \ref{eps-dual}) is closed-valued and \Ft-measurable by Corollary \ref{corTheta_Meas} and Lemma \ref{meas-eps}. It remains to show that it is non-empty for every $\omega\in A_{t-1}$ so that the desired $H^A_{t}$ is any measurable selector of this set. 
For any $\omega\in A_{t-1}$ we have $\mathbb{S}_{t-1}(\omega)=\varnothing$ and therefore, by \eqref{modifiedBidAsk}, the random sets $C_{t-1}(\omega)$ and $\overline{conv} \left(\mathbb{S}_{t}(\Sigma_{t-1}^{\omega}) \right)$ are closed, convex and disjoint. Hahn-Banach Theorem applies and for every $\omega\in A_{t-1}$ there exist $\varphi\in\R^d$, $l\in\R$ such that, in particular, $\inf\{\varphi\cdot s\mid s\in\overline{\mathbb{S}_{t}(\Sigma_{t-1}^{\omega})}\}>l> \sup\{\varphi\cdot x \mid x\in C_{t-1}(\omega)\}$. For a suitable $\gamma>0$, we also have $\varphi\cdot s>l+\gamma$, $\forall s\in\overline{\mathbb{S}_{t}(\Sigma_{t-1}^{\omega})}$. From $l>\varphi\cdot x$, $\forall x\in C_{t-1}(\omega)$, we obtain $\varphi\cdot (s-x)>\gamma$ for any $s\in\overline{\mathbb{S}_{t}(\Sigma_{t-1}^{\omega})},\ x\in C_{t-1}(\omega)$. Moreover, for any $\alpha>0$, $\alpha\varphi$ satisfies the same inequality with lower bound $\alpha\gamma$. We thus have the thesis with $\alpha=\varepsilon/\gamma$.\\
Let us stress that the value $\varepsilon$ in \eqref{ArbAt} can be arbitrary.
\bigskip

\textbf{Step 4:} We are now ready to construct iteratively an arbitrage opportunity which will satisfy, for an arbitrary $\delta>0$, the following:
 \begin{equation}\label{ArbValueInduction}
V_{t-1}(H)+H_{t}\cdot s\geq \frac{\delta}{2^{t-1}}\ \text{ for any }s\in \mathbb{S}_{t}(\Sigma_{t-1}^\omega)\text{ and for any }\omega\in A_{t-1}\cup B_{t-1}
\end{equation}
with $V_{t-1}(H)\leq 0$. For $t=1$ equation \eqref{ArbValueInduction} is trivially satisfied by $H_1:=H^A_1$ as in \eqref{ArbAt} with $\varepsilon=\delta$ arbitrary: we have indeed that $B_0=\varnothing$ and from \eqref{valueFunction} we can rewrite $V_0(H_1)+H_1\cdot s$ as $H_1\cdot(s-\hat{x})$ with
$\hat{x}:=\oS^j_0\mathbf{1}_{\{0\leq H^j_{1}\}}+\uS^j_0\mathbf{1}_{\{H^j_{1}\leq 0\}}\in C_0(\omega)$. From \eqref{ArbAt} the thesis follows. 

\bigskip

Suppose now we are given a strategy $H=(H_u)_{u=1}^{t}$ satisfying \eqref{ArbValueInduction}. \\ For any $\eta\in\Xi=\{x\in\R^d\mid x^i\in\{-1,0,1\}\}$ denote the partial order relation on $\R^d$ given by
$$h_1\preceq_\eta h_2\quad\text{ iff }\quad h_1-h_2\in \eta^1[0,\infty)\times\cdots\times \eta^d [0,\infty),$$
with the same slight abuse of notation of Step 2.\\ Similarly as in \eqref{S_mod} define $\hat{S}^{\eta}_t:=[\uS^j_t\mathbf{1}_{\{\eta^j\geq 0\}}+\oS^j_t
\mathbf{1}_{\{\eta^j<0\}}]_{j=1}^d$ and consider
\begin{equation}\label{deffeta}
f^\eta:=\omega\mapsto\left\{h\in\R^d \mid H_t(\omega)\preceq_\eta h \text{ and } V_{t}^h(H)+h\cdot s\geq \frac{\delta}{2^t}\quad \forall s \in \mathbb{S}_{t+1}(\Sigma_{t}^{\omega})\right\},
\end{equation}
where $V_{t}^h(H):=V_{t-1}(H)+(H_t-h)\cdot\hat{S}^\eta_t(\omega)$ is the value of the strategy $H=H_1,\ldots, H_t$ extended with $H_{t+1}(\omega)=h$ (cfr equation \eqref{valueFunction}). We here show that we can choose a measurable selector $H_{t+1}$ of $\cup_{\eta\in\Xi}f^\eta$ which we extend as $H_{t+1}=0$ on $\{\cup_{\eta\in\Xi}f^\eta=\varnothing\}$. In Lemma \ref{lemmaNonEmpty} we show that for any $\omega\in A_t\cup B_t$ such that $V_{t}(H)\leq 0$, for at least one $\eta\in\Xi$, the set $f^\eta$ is non-empty so that $(H_u)_{u=1}^{t+1}$ satisfy the desired inequality \eqref{ArbValueInduction} for time $t$. When $V_{t}(H)>0$ and $H_{t+1}=0$ the position is closed with a strictly positive gain.\\

Regarding measurability we consider the $(\delta/2^t)$-dual of the \Ftt-measurable random set $[\overline{\mathbb{S}_{t+1}(\Sigma_t^\omega)}-\hat{S}^\eta_t(\omega);V_{t-1}(H) + H_t \cdot\hat{S}^\eta_t]$ (see Corollary \ref{corTheta_Meas} and recall Notation \ref{eps-dual}), that is,
$$\left\{(h,h_{d+1})\in\R^d\times \R\mid h\cdot (s-\hat{S}^\eta_t(\omega))+h_{d+1}(V_{t-1}(H) + H_t \cdot\hat{S}^\eta_t)\geq \frac{\delta}{2^t}\quad \forall s \in \mathbb{S}_{t+1}(\Sigma_{t}^{\omega})\right\}$$ and we take the intersection with the closed-valued, \Ftt-measurable, random set $$\eta_1(-\infty,H^1_t(\omega)]\times,\ldots\times\eta_d(-\infty,H^d_t(\omega)]\times\{1\}.$$
By Proposition \ref{preservation_measurability} the finite union over $\eta\in\Xi$ is again closed-valued and \Ftt-measurable so that we can extract a measurable selection $\varphi$. A measurable selector of $\cup_{\eta\in\Xi}f^\eta$ is therefore given by the projection on the first $d$ components of $\varphi$.\\
 

\textbf{Step 5:} Let $H:=(H_u)_{u=1}^T$ the iterative strategy constructed in Step 4. For every $\omega\in\Omega$ we have $\tau(\omega)\leq T-1$ and $H_{\tau(\omega)+1}\neq 0$, that is, the position is opened at time $\tau$. Observe that if there exists $t\geq\tau(\omega)+1$ such that $\hat{S}^{\xi}_{t}$ defined in \eqref{S_mod} satisfies $\hat{S}^{\xi}_{t}(\omega)\in \mathbb{S}_{t}(\omega)$, then the position can be closed with a strictly positive gain. Indeed with $h=0$ we get, from \eqref{valueFunction} and from \eqref{ArbValueInduction},
\begin{equation}\label{liquidation}
V^h_t(H)=V_{t-1}(H)+\sum_{j=1}^d\left(H^j_t-0\right)\left(\oS^j_t\mathbf{1}_{\{H^j_t\leq 0\}}+\uS^j_t\mathbf{1}_{\{0\leq H^j_{t}\}}\right)
\geq \frac{\delta}{2^{t-1}}.
\end{equation}Note that from \eqref{AandB}, $H_u(\omega)=0$ for all $u\geq t+1$. Moreover, since $\mathbb{S}_T=C_T$ we obviously have $t\leq T$. Thus, the position given by strategy $H$ from Step 4, opened at time $\tau$, can always be closed with $V_{T}(H)(\omega)>0$. Since $\omega\in\Omega$ is arbitrary we have the conclusion.

\end{proof}

\begin{lemma}\label{lemmaNonEmpty}
Let $t\in I$ and $A_t$, $B_t$ from \eqref{AandB}. For any $\omega\in A_t\cup B_t$ fixed, the set $\cup_{\eta\in\Xi}f^\eta(\omega)$ is non-empty, where $f^\eta(\omega)$ is defined in \eqref{deffeta}.
\end{lemma}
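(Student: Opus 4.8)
The plan is to split the event $A_t\cup B_t$ and argue separately on its two pieces.

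\textbf{The event $A_t$.} Here $\tau(\omega)=t$ and $H_u(\omega)=0$ for all $u\le t$, so $V_{t-1}(H)(\omega)=0$ and $\xi$ plays no role. Since $\tau(\omega)=t$, the sets $C_t(\omega)$ and $\overline{conv}(\mathbb S_{t+1}(\Sigma_t^\omega))$ are non-empty, closed, convex and disjoint (their intersection is $\mathbb S_t(\omega)=\varnothing$), with $C_t(\omega)$ bounded. I apply \eqref{ArbAt}, with the time index shifted by one and $\varepsilon=\delta/2^{t}$, to get an $\mathcal F_t$-measurable $h:=H^A_{t+1}(\omega)$ with $h\cdot(s-x)\ge\delta/2^{t}$ for all $s\in\mathbb S_{t+1}(\Sigma_t^\omega)$ and $x\in C_t(\omega)$. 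Choosing $\eta\in\Xi$ with $\eta^j=-1$ on $\{h^j>0\}$ and $\eta^j=1$ otherwise gives $0=H_t(\omega)\preceq_\eta h$, and since $\hat S^\eta_t(\omega)\in C_t(\omega)$ we obtain $V^h_t(H)+h\cdot s=h\cdot(s-\hat S^\eta_t(\omega))\ge\delta/2^{t}$, i.e.\ $h\in f^\eta(\omega)$.

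\textbf{The event $B_t$.} I would first record that $H_t(\omega)\ne 0$ forces $\omega\in A_{t-1}\cup B_{t-1}$ (a nonzero value of $H_t$ can only have been selected from $\cup_\eta f^\eta$ at the preceding step), so the inductive inequality \eqref{ArbValueInduction} is available at $\omega$: $V_{t-1}(H)(\omega)+H_t(\omega)\cdot s\ge\delta/2^{t-1}$ for $s\in\mathbb S_t(\Sigma_{t-1}^\omega)$, hence by affinity in $s$ for every $s$ in its closed convex hull; moreover we may assume $V_t(H)(\omega)=V_{t-1}(H)(\omega)+H_t(\omega)\cdot\hat S^\xi_t(\omega)\le 0$, since otherwise closing the position ($h=0$) already realises a strictly positive value. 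Because $\hat S^\xi_t(\omega)\in C_t(\omega)$ but $\hat S^\xi_t(\omega)\notin\mathbb S_t(\omega)=C_t(\omega)\cap\overline{conv}(\mathbb S_{t+1}(\Sigma_t^\omega))$, the vertex $\hat S^\xi_t(\omega)$ lies outside $K:=\overline{conv}(\mathbb S_{t+1}(\Sigma_t^\omega))$; if $K=\varnothing$ the constraint in \eqref{deffeta} is vacuous and $H_t(\omega)\in f^\xi(\omega)$, so assume $K\ne\varnothing$. The convenient reformulation is that, taking $\eta=\mathrm{sign}(H_t(\omega)-h)$ (which is always admissible, i.e.\ $H_t(\omega)\preceq_\eta h$), the rebalancing cost $(H_t(\omega)-h)\cdot\hat S^\eta_t(\omega)$ equals $-\max_{x\in C_t(\omega)}(h-H_t(\omega))\cdot x$ and takes the same value for every admissible $\eta$; hence $h\in\cup_\eta f^\eta(\omega)$ is equivalent to
\[
V_{t-1}(H)(\omega)+h\cdot s-\max_{x\in C_t(\omega)}(h-H_t(\omega))\cdot x\ \ge\ \frac{\delta}{2^{t}}\qquad\forall\,s\in K .
\]
I would then produce such an $h$ by Hahn-Banach: separate (when $\mathbb S_t(\omega)=\varnothing$) the bounded convex set $C_t(\omega)$ from $K$, or (when $\mathbb S_t(\omega)\ne\varnothing$) the point $\hat S^\xi_t(\omega)$ from $K$, obtaining a direction $v$; then set $h=H_t(\omega)+\alpha v$ with $\alpha>0$ large and $\eta=-\mathrm{sign}(v)$. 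The separation makes the $\alpha$-dependent part of the left-hand side grow, while \eqref{ArbValueInduction} keeps the $\alpha$-independent part bounded below, so for $\alpha$ large the inequality holds and $h\in f^\eta(\omega)$.

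\textbf{Where the difficulty lies.} The genuinely delicate step is the $B_t$ case with $\mathbb S_t(\omega)\ne\varnothing$: a functional separating $\hat S^\xi_t(\omega)$ from $K$ need not separate all of $C_t(\omega)$ from $K$, so part of the rebalancing cost $\max_{x\in C_t(\omega)}(h-H_t(\omega))\cdot x$ is not compensated by the separation gap and one must invoke the a priori bound \eqref{ArbValueInduction} — available precisely because $H_t(\omega)\ne 0$ — to absorb it, while simultaneously keeping $h$ admissible (realisable through some $\eta\in\Xi$ with $H_t(\omega)\preceq_\eta h$), which requires care about which vertex $\hat S^\eta_t(\omega)$ of the box $C_t(\omega)$ then enters the value function. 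By comparison the measurability of the selectors, the reduction to $A_t\cup B_t$, and the bookkeeping with \eqref{valueFunction} are routine.
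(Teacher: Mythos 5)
Your treatment of $A_t$ is exactly the paper's, and your reformulation of membership in $\cup_\eta f^\eta(\omega)$ via the worst-case rebalancing cost $\max_{x\in C_t(\omega)}(h-H_t(\omega))\cdot x$ is correct. The gap is in the $B_t$ construction. Writing $h=H_t(\omega)+\alpha v$, the quantity you must bound below is $[V_{t-1}(H)+H_t\cdot s]+\alpha[v\cdot s-\max_{x\in C_t(\omega)}v\cdot x]$ for \emph{all} $s\in K:=\overline{conv}(\mathbb S_{t+1}(\Sigma_t^\omega))$. First, \eqref{ArbValueInduction} does not control the $\alpha$-independent part on $K$: it bounds $V_{t-1}(H)+H_t\cdot s$ only for $s\in\mathbb S_t(\Sigma_{t-1}^\omega)$ (hence on its closed convex hull), and $K$ is not contained in that set — indeed $K$ need not even be bounded, since it aggregates time-$(t+1)$ bid-ask spreads over uncountably many $\widetilde\omega\in\Sigma_t^\omega$ — so no uniform lower bound is available where $K$ leaves $\overline{conv}(\mathbb S_t(\Sigma_{t-1}^\omega))$. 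Second, "take $\alpha$ large" is the wrong regime in general: with one asset, $C_t(\omega)=[3,4]$, $H_t>0$ so $\hat S^\xi_t=\uS_t=3$, and $K=[3.5,6]$, your separating direction is $v=1$, but $v\cdot s-\max_{x\in C_t}v\cdot x=s-4$ is negative at $s=3.5$, so the left-hand side tends to $-\infty$ there as $\alpha\to\infty$; here one must scale \emph{down} (or hold), not up. This is precisely the dichotomy you would need to resolve and which your sketch does not: the part of the rebalancing cost not covered by the separation gap cannot, in general, be "absorbed" by \eqref{ArbValueInduction}.

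The paper's proof is organized to avoid both problems. It introduces the set $L$ of \emph{all} vertices $y$ of $C_t(\omega)$ with $V_{t-1}(H)+H_t\cdot y\le 0$, shows $\overline{conv}(L)$ and $K$ are disjoint (using \eqref{ArbValueInduction} and affinity), and takes $h$ separating them — not merely separating the single point $\hat S^\xi_t$ from $K$. The candidate new position is then the full replacement $\bar h=\alpha h$ (not $H_t+\alpha v$), so the only $s$-dependent term is $\bar h\cdot(s-\hat S^h_t)$, whose sign is governed by the separation whatever bad vertex the trading price $\hat S^h_t$ happens to be, while the uncontrolled exposure $H_t\cdot s$ never appears. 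Finally the scaling is split into two cases according to whether $\hat S^h_t\in L$ (scale up: $\alpha_1$ large, the gap $\ge\alpha_1\varepsilon$ dominates the fixed constant $V_{t-1}(H)+H_t\cdot\hat S^h_t$) or $\hat S^h_t\notin L$ (scale down: $\alpha_2\le 1$, so the new exposure loses at most $\delta/2^t$ while the strictly positive liquidation value at that vertex carries the bound), together with the monotonicity check that replacing $\hat S^h_t$ by $\hat S^{\bar h}_t$ only helps. To repair your argument you would need these three ingredients — separation from all of $L$, a new position proportional to the separating direction, and the large/small scaling dichotomy — at which point you have essentially reconstructed the paper's proof.
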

\begin{proof} For $\omega\in A_t$ consider $H^A_{t+1}$ as in \eqref{ArbAt} with $\varepsilon=\delta/2^t$. The conclusion follows from $V_{t-1}(H_t)+H_t\cdot s=H_t\cdot(s-\hat{x})$ for 
$\hat{x}:=\oS^j_{t-1}\mathbf{1}_{\{0\leq H^j_{t}\}}+\uS^j_{t-1}\mathbf{1}_{\{H^j_{t}\leq 0\}}\in C_{t-1}(\omega)$. (cfr equation \eqref{valueFunction}).\\

We now turn to $\omega\in B_t$. Since $\hat{S}^{\xi}_t(\omega)\notin\mathbb{S}_t(\omega)$ the position cannot be closed without a loss at time $t$. We show that nevertheless it is possible to rebalance the portfolio in order to maintain a positive wealth. Consider set of vertices of $C_t(\omega)$
$$V:=\bigcup\left\{[\uS^j_t\mathbf{1}_{\{\eta^j\geq 0\}}+\oS^j_t
\mathbf{1}_{\{\eta^j<0\}}]_{j=1}^d\mid \eta\in\{-1,0,1\}^d\right\}$$
and the set $$L:=\{y\in\R^d\mid V_{t-1}(H)+H_t\cdot y\leq 0\}\cap V.$$
From the inductive hypothesis we have: i) $B_t\subseteq A_{t-1}\cup B_{t-1}$ since $H_t(\omega)\neq0$ only on $A_{t-1}\cup B_{t-1}$ and ii) $\mathbb{S}_t(\omega)\cap L(\omega)=\varnothing$. Moreover, since $\hat{S}^{\xi}_t(\omega)$ as in \eqref{S_mod} is a vertex and $V_t(H)\leq 0$, we thus have  $\hat{S}^{\xi}_t(\omega)\in L(\omega)$. Consider now the set
$$F:=\left\{h\in\R^d\setminus\{0\}\mid h\cdot(s-y)\geq 0\quad \forall s \in \mathbb{S}_{t+1}(\Sigma_{t}^{\omega}),\ \forall y\in L(\omega)\right\},$$ which is non-empty for $\omega\in B_t$: since $L(\omega)\subseteq C_t(\omega)$ and $L(\omega)\cap\mathbb{S}_t(\omega)=\varnothing$ then by \eqref{modifiedBidAsk} the sets $\overline{conv}(L(\omega))$ and $\overline{conv}(\mathbb{S}_{t+1}(\Sigma_{t}^{\omega}))$ are disjoint. Applying Hyperplane separating Theorem we obtain the assertion. Note, moreover, that since the separation is strict for any $h\in F$ there exists $\varepsilon>0$ such that $h\cdot(s-y)\geq \varepsilon\quad \forall s \in \mathbb{S}_{t+1}(\Sigma_{t}^{\omega}),\ \forall y\in L(\omega)$.\\

For any $h\in\R^d$ define now \begin{equation}\label{extension}
[\hat{S}^h_t]^j:=\oS^j_t\mathbf{1}_{\{H^j_t\leq h^j\}}+\uS^j_t\mathbf{1}_{\{h^j\leq H^j_{t}\}},
\end{equation}where $[\cdot]^j$ denotes the $j^{th}$ component of a vector.
We can distinguish two cases:
\begin{enumerate}
 \item there exists $h\in F$ such that $\hat{S}^h_t\in L$;\label{item1dimFTAP}
 \item for all $h\in F$, $\hat{S}^h_t\in V\setminus L$. \label{item2dimFTAP}
\end{enumerate}
  In case \ref{item1dimFTAP}. there exists $h\in F$ and $\varepsilon>0$ such that $h\cdot(s-\hat{S}^h_t)\geq \varepsilon$ for all $s \in \mathbb{S}_{t+1}(\Sigma_{t}^{\omega})$. Define now \begin{equation}\label{coeffMeas}
  \alpha_1:=\max\left\{\frac{1}{\varepsilon}\left(-V_{t-1}(H)-H_t\cdot \hat{S}^h_t+\frac{\delta}{2^{t}}\right),1+\frac{\delta}{2^{t}}\right\}\geq 1+\frac{\delta}{2^{t}}, \qquad \bar{h}:=\alpha_1 h\in F
  \end{equation} and observe that 
\begin{equation}\label{valueextension}
V_{t-1}(H)+H_t\cdot \hat{S}^h_t+\bar{h}\cdot(s-\hat{S}^h_t)\geq \frac{\delta}{2^{t}}\qquad \forall s \in \mathbb{S}_{t+1}(\Sigma_{t}^{\omega}).
\end{equation}
In order to retrieve the value $V^{\bar{h}}_t(H)$ in \eqref{valueextension} we need to replace $\hat{S}^h_t$ with $\hat{S}^{\bar{h}}_t$. By showing that $(H_t-\bar{h})\cdot \hat{S}^{\bar{h}}_t\geq (H_t-\bar{h})\cdot \hat{S}^h_t$, it will follow from \eqref{valueextension} that
\begin{eqnarray*}
V_{t}^{\bar{h}}(H)+\bar{h}\cdot s=V_{t-1}(H)+(H_t-\bar{h})\cdot \hat{S}^{\bar{h}}_t+\bar{h}\cdot s&\geq&\\ V_{t-1}(H)+(H_t-\bar{h})\cdot \hat{S}^{h}_t+\bar{h}\cdot s&\geq &\frac{\delta}{2^{t}} \qquad \forall s \in \mathbb{S}_{t+1}(\Sigma_{t}^{\omega})
\end{eqnarray*}
and hence the desired inequality. To show the claim let $j\in\{1,\ldots,d\}$. If $h^jH^j_t\leq 0$ or $|h^j|\geq |H^j_t|$ then from \eqref{extension} and $\alpha_1>1$ we get $[\hat{S}^{\bar{h}}_t]^j=[\hat{S}^h_t]^j$. Suppose now $H^j_t\leq h^j<0$ then again from \eqref{extension} and $\alpha_1>1$ we obtain $[\hat{S}^{\bar{h}}_t]^j\leq [\hat{S}^h_t]^j$ from which
$$(H^j_t-\bar{h}^j)[\hat{S}^{\bar{h}}_t-\hat{S}^h_t]^j\geq 0.$$
One can easily check that the same is true for $0<h^j\leq H^j_t$. 

\bigskip

Suppose now we are in case \ref{item2dimFTAP}. Recall that $\hat{S}_t^\xi\in L(\omega)$.  For any $h\in F$ there exists $\varepsilon>0$ such that for any  $s \in \mathbb{S}_{t+1}(\Sigma_{t}^{\omega})$,
$$h\cdot (s-\hat{S}_t^h)+h\cdot (\hat{S}_t^h-\hat{S}_t^\xi)\geq\varepsilon\Longrightarrow h\cdot (s-\hat{S}_t^h)\geq \varepsilon-h\cdot (\hat{S}_t^h-\hat{S}_t^\xi).$$
There exists $\alpha_2>0$ such that $\alpha_2(\varepsilon-h\cdot (\hat{S}_t^h-\hat{S}_t^\xi))\geq -\delta/2^{t}$. Denote
\begin{equation}\label{coeffMeas2}\alpha_2:=\min\left\{\dfrac{\delta}{2^{t}|\varepsilon-h\cdot (\hat{S}_t^h-\hat{S}_t^\xi)|},1\right\}\qquad \bar{h}:=\alpha_2 h\in F.
\end{equation}
Similarly as above if $h^j\leq 0$ then from \eqref{extension} and $\alpha_2\leq 1$ we get $[\hat{S}^h_t]^j\leq [\hat{S}^{\bar{h}}_t]^j$ and, analogously, $[\hat{S}^h_t]^j\geq [\hat{S}^{\bar{h}}_t]^j$ when $h^j\geq 0$. We thus get $\bar{h}\cdot (\hat{S}_t^h-\hat{S}_t^{\bar{h}})\geq 0$ and hence
$$\bar{h}\cdot (s-\hat{S}_t^{\bar{h}})=\bar{h}\cdot (s-\hat{S}_t^{h})+\bar{h}\cdot (\hat{S}_t^h-\hat{S}_t^{\bar{h}})\geq\bar{h}\cdot (s-\hat{S}_t^{h})\geq-\delta/2^{t}.$$ Observe now that in case \ref{item2dimFTAP}., $V_{t-1}(H)+H_t\cdot\hat{S}^{\bar{h}}_t\geq \delta/2^{t-1}$ and hence
$$V_{t-1}(H)+H_t\cdot\hat{S}^{\bar{h}}_t+\bar{h}\cdot (s-\hat{S}_t^{\bar{h}})\geq \delta/2^t$$
as desired.

\end{proof}
\begin{remark}\label{interpretationAB}
The sets $A_t$ and $B_t$ represents two different actions that must be taken in order to obtain a Model Independent Arbitrage. Note indeed that $A_t\cap B_t=\varnothing$. Fix $\omega\in\Omega$ and $t$. If $\omega\in A_t$, a new position is open. No strategy has been open before $t$ since we are restricting to the set $\bigcap_{u=0}^{t}\{H_u=0\}$ and, since $\tau(\omega)=t$, this is the first time that the market offers the possibility of a sure gain by trading in $S$ (see \eqref{ArbAt}). At this stage we are not concerned about liquidating the position. Suppose that at time $t$ we already have an open position (so $\omega\in A_u$ for some $u\leq t$). If $\omega\notin B_t$ then it can be liquidated at this time, since $H_{t+1}=0$ is admissible, and we obtain a strictly positive wealth with zero initial cost by \eqref{liquidation}. If $\omega\in B_t$ then it is not possible to liquidate the position at this time and we need to keep (or modify) the position and close it at subsequent times. By noting that $B_T$ is always the empty set, either because the position is closed before $T$ or because $\{\hat{S}^{sign H_T}_T\notin \mathbb{S}_T\}=\varnothing$ on $\{H_T\neq 0\}$ we see, by \eqref{liquidation}, that it is always possible to close the position opened on $A_u$ with a positive gain.
\end{remark}

\section{On Superhedging}\label{sec_super}
Recall the definition of the class $\mathcal{M}_{\overline{\Pi}}$ of price systems consistent with the bid-ask spread $\Pi$ (see Definition \ref{defCPS}) and the definition of $C_t$ in \eqref{convexBidAsk}. Consider the following
\begin{equation}\label{projCPS}
\mathcal{Q}:=\left\lbrace Q\in\mathfrak{P}\mid \exists S=(S_t)_{t\in I} \textrm{ with } S_t\in\mathcal{L}^0(\F_t; C_t) \textrm{ which is a $Q$-martingale} \right\rbrace,
\end{equation}
or, in other words, the projection of $\mathcal{M}_{\overline{\Pi}}$ on the set of probability measures and
\begin{equation}\label{processes}
\mathcal{S}:=\left\{S=(S_t)_{t\in I} \mid S_t\in\mathcal{L}^0(\mathcal{F}_t; C_t)\textrm{ and }\exists Q\in\mathfrak{P}\textrm{ s.t. }S \textrm{ is a }Q\textrm{-martingale}\right\},
\end{equation}
namely, the projection of $\mathcal{M}_{\overline{\Pi}}$ on the set of $\mathbb{F}$-adapted process. For any $S\in\mathcal{S}$ define also the section of $\mathcal{M}_{\overline{\Pi}}$ as 
\begin{equation}\label{sectM}
\mathcal{Q}_{S}:=\left\{Q\in \mathcal{Q}\mid S \textrm{ is a }Q\textrm{-martingale}\right\}.
\end{equation}
The maximal $\mathcal{Q}_{S}$-polar set has been characterized in \cite{BFM14} and denoted as $(\Omega_*(S))^c$. In particular $\Omega_*(S)=\{\omega\in\Omega\mid  \exists Q\in\mathcal{Q}_S \text{ such that }Q(\{\omega\})>0\}$. We here adapt the definition of $\Omega_*$ in this market with frictions. 
\begin{definition}\label{Omegastar} Let $\mathcal{Q}$ as in \eqref{projCPS}. We define the efficient support of the family of consistent price systems $\mathcal{M}_{\overline{\Pi}}$ as
$$\Omega_*:=\left\{\omega\in\Omega\mid \exists Q\in\mathcal{Q} \text{ such that }Q(\{\omega\})>0\right\}.$$
\end{definition}

For convenience of the reader we here recall the expression of the value process of a strategy $H$ from equation \eqref{valueFunction}, namely,
\begin{equation}\label{valueStrategy} V_T(H)=\sum_{t=0}^T\sum_{j=1}^d\left(H^j_t-H^j_{t+1}\right)\left(\oS^j_t\mathbf{1}_{\{H^j_t\leq H^j_{t+1}\}}+\uS^j_t\mathbf{1}_{\{H^j_{t+1}\leq H^j_{t}\}}\right).
\end{equation}

The aim of this section is to prove the following version of the superhedging Theorem:

\begin{theorem}\label{ThmSuper} Let $g:\Omega\mapsto\mathbb{R}$ be $\mathcal{F}_T$-measurable
\begin{equation}\label{superHmulti}
\sup_{Q\in\mathcal{Q}}\mathbb{E}_Q[g]=\inf\{x\in\mathbb{R}\mid \exists H\in\mathcal{H}\textrm{ s.t. }x+V_T(H)\geq g\quad \forall\omega\in\Omega_*\}=:\overline{p}(g)
\end{equation}where $\mathcal{Q}$ is defined in \eqref{projCPS} and $\Omega_*$ in Definition \ref{Omegastar}.
\end{theorem}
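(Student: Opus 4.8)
\textbf{Proof strategy for Theorem \ref{ThmSuper}.}

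The plan is to prove the two inequalities separately, the easy one (``$\leq$'') by a standard argument and the hard one (``$\geq$'') via the auxiliary $\mathbb{S}$-superhedging problem sketched in the Introduction. For the inequality $\sup_{Q\in\mathcal{Q}}\mathbb{E}_Q[g]\leq\overline{p}(g)$: fix any admissible $x$ with a strategy $H\in\mathcal{H}$ such that $x+V_T(H)\geq g$ on $\Omega_*$. Given $Q\in\mathcal{Q}$, pick $S\in\mathcal{S}$ with $S_t\in\mathcal{L}^0(\mathcal{F}_t;C_t)$ a $Q$-martingale; since $Q$ is carried by $\Omega_*$ the superhedging inequality holds $Q$-a.s. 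Exactly as in the proof of the $(\Leftarrow)$ implication of Theorem \ref{FTAP} (add and subtract $S_t^j$ in \eqref{valueStrategy}), one gets $V_T(H)\le (H\circ S)_T$; hence $x+(H\circ S)_T\ge g$ $Q$-a.s., and taking $\mathbb{E}_Q$ and using the generalized-conditional-expectation remark that $\mathbb{E}_Q[(H\circ S)_T]=0$ gives $x\ge\mathbb{E}_Q[g]$. Taking the infimum over $x$ and the supremum over $Q$ yields the claim. One should be slightly careful that $\mathbb{E}_Q[g]$ and $\mathbb{E}_Q[(H\circ S)_T]$ are well-defined; invoking the generalized conditional expectation convention handles this.

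The substance is the reverse inequality $\overline{p}(g)\le\sup_{Q\in\mathcal{Q}}\mathbb{E}_Q[g]$. Here I would set up the $\mathbb{S}$-superhedging problem: using the random sets $\mathbb{S}_t$ of \eqref{modifiedBidAsk} (which are $\mathcal{F}_t$-measurable by Lemma \ref{LemmaTheta_meas} and Corollary \ref{corTheta_Meas}, and have nonempty interior when nonempty by Remark \ref{smallerBidAsk}), one considers admissible strategies that at each time $t$ may trade against an arbitrary random vector $S_{t+1}\in\overline{conv}(\mathbb{S}_{t+1}(\Sigma_t^\omega))$ — equivalently against any finitely-supported conditional martingale dynamics staying in the $C_t$'s. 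By the construction in Lemma \ref{convexAndMtg} (iterating the convex-combination/Castaing selection), such a process $S$ together with a finitely-supported measure $Q$ is genuinely a consistent price system, so $Q\in\mathcal{Q}$. The key structural step is then: solving the $\mathbb{S}$-superhedging problem for $g$ produces a process $S\in\mathcal{S}$ and a frictionless strategy such that the frictionless superhedging price $\overline{p}_S(g)$ allows superreplication of $g$ on $\Omega_*$ with a strategy $H\in\mathcal{H}$; this is precisely where the bid-ask modification is used, translating a frictionless superhedge along $S$ into a self-financing one with transaction costs (via the boundary processes $\hat S^\eta_t$ as in \eqref{S_mod}, choosing the ``sign'' of $H$ consistently so that the frictionless gain dominates the frictional one). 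Consequently $\overline{p}(g)\le\overline{p}_S(g)$. Finally one invokes the frictionless model-free superhedging duality (as in \cite{BFM14} applied to the single process $S$, on its efficient support, which is contained in $\Omega_*$) to get $\overline{p}_S(g)=\sup_{Q\in\mathcal{Q}_S}\mathbb{E}_Q[g]\le\sup_{Q\in\mathcal{Q}}\mathbb{E}_Q[g]$, and chaining the inequalities closes the proof.

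I would organize the reverse inequality as: (1) reduce to the case $\sup_{Q\in\mathcal{Q}}\mathbb{E}_Q[g]<\infty$ (otherwise nothing to prove); (2) define the $\mathbb{S}$-superhedging value $\overline{p}_{\mathbb{S}}(g)$ and the auxiliary strategy class, and show $\overline{p}_{\mathbb{S}}(g)\le\sup_{Q\in\mathcal{Q}}\mathbb{E}_Q[g]$ by a backward-induction/measurable-selection argument on the $\mathbb{S}_t$'s — at each step separating the epigraph data from the convex hull $\overline{conv}(\mathbb{S}_{t+1}(\Sigma_t^\omega))$, exactly in the spirit of Step 3–Step 4 of the FTAP proof and of Rokhlin's random-set techniques; (3) extract from the optimal $\mathbb{S}$-superhedge a pair $(S,H)$ with $S\in\mathcal{S}$ and $H\in\mathcal{H}$ realizing $x+V_T(H)\ge g$ on $\Omega_*$ with $x=\overline{p}_{\mathbb{S}}(g)$; (4) conclude $\overline{p}(g)\le x\le\sup_{Q\in\mathcal{Q}}\mathbb{E}_Q[g]$. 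The main obstacle is step (3), the passage from the frictionless $\mathbb{S}$-superhedge to a genuine self-financing strategy in the market with transaction costs: one must verify that the increments $H_t-H_{t-1}$ land in $-K_{t-1}$ and that the frictional value process $V_T(H)$ from \eqref{valueStrategy} dominates (or equals, up to the chosen $x$) the frictionless integral along the constructed $S$ — this requires a careful sign-tracking argument analogous to the computation in Lemma \ref{lemmaNonEmpty}, plus the measurable-selection bookkeeping to guarantee $H$ is $\mathbb{F}$-predictable. A secondary technical point is matching efficient supports: one needs $\Omega_*(S)\subseteq\Omega_*$ so that superreplication on $\Omega_*(S)$ (given by the frictionless duality) upgrades to superreplication on $\Omega_*$; this follows from $\mathcal{Q}_S\subseteq\mathcal{Q}$ and Definition \ref{Omegastar}, but should be stated explicitly.
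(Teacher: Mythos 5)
Your plan reproduces the paper's architecture: the inequality $\sup_{Q\in\mathcal{Q}}\mathbb{E}_Q[g]\leq\overline{p}(g)$ exactly as in the paper via \eqref{frictionleassDominance}, and the reverse inequality through the auxiliary $\mathbb{S}$-superhedging problem, the construction of a shadow process $S$ with values in the bid-ask spread, the translation into a genuine self-financing strategy with transaction costs, and the frictionless duality of \cite{BFM15}. There is, however, one genuine flaw in the way you close the argument. You claim that the superreplication of $g$ on $\Omega_*(S)$, provided by the frictionless duality for the single process $S$, ``upgrades'' to superreplication on $\Omega_*$ because $\mathcal{Q}_S\subseteq\mathcal{Q}$ gives $\Omega_*(S)\subseteq\Omega_*$. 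The inclusion is correct but points the wrong way: a superhedge required only on the smaller set $\Omega_*(S)$ gives no information on the larger set $\Omega_*$, and for a fixed $S$ one cannot expect the reverse inclusion $\Omega_*\subseteq\Omega_*(S)$ (different $\omega\in\Omega_*$ are charged by consistent price systems built on different shadow processes). As stated, your step (4) does not go through.

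The paper resolves this asymmetry by giving the two supports different roles. The frictionless duality (Proposition \ref{propSprocess} combined with Theorem 1.1 of \cite{BFM15}) is used only to identify the number $F_0(\bar{x}_0)$ with $\sup_{Q\in\mathcal{Q}_S}\mathbb{E}_Q[g]$; the frictional superhedge is not deduced from it, but is constructed directly on $\Omega_*$ in Proposition \ref{S2process} and Corollary \ref{corS2process}. The reason it covers all of $\Omega_*$, and not merely $\Omega_*(S)$, is the characterization used there: $F_t(\omega,s)=-\infty$ for every $s\in\mathbb{S}_t(\omega)$ if and only if $Q(\{\omega\})=0$ for every $Q\in\mathcal{Q}$, i.e.\ the exceptional set of the backward recursion is polar for the whole class $\mathcal{Q}$, not only for $\mathcal{Q}_S$; this is the missing ingredient in your support-matching step. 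Two smaller omissions: the $\mathbb{S}$-superhedging value is a function $F_0(x_0)$ of the initial point $x_0\in C_0$, not a single number, and the paper works with $m=\sup_{x\in C_0}F_0(x)$, using the continuity of $F_0$ on $\overline{D_{F_0}}$ from Proposition \ref{Fmeascont} and the compactness of $C_0$ to obtain a maximizer $\bar{x}_0$, and treating the degenerate cases $m=\pm\infty$ separately (your step (1) only disposes of the case $\sup_{Q\in\mathcal{Q}}\mathbb{E}_Q[g]=+\infty$; the case $m=-\infty$, where $\Omega_*=\varnothing$, also needs a word). With the support issue repaired along the paper's lines, the rest of your outline, including the sign-tracking passage from the frictionless $\mathbb{S}$-superhedge to a strategy in $\mathcal{H}$, is indeed where the real work lies and is what Propositions \ref{Sprocess}, \ref{S2process} and Corollary \ref{corS2process} carry out.
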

\begin{proof}[Proof of $(\leq)$] Assume $\mathcal{M}_{\overline{\Pi}}\neq\varnothing$ otherwise is trivial.
Let $S=(S_t)_{t\in I}$ be a process in $\mathcal{S}$. Take $x\in\R$, $H\in\mathcal{H}$ such that $x+V_T(H)\geq g$. For any strategy $H$, and for any $S\in\mathcal{S}$, inequality \eqref{frictionleassDominance} implies that $E_Q[V_T(H)]\leq 0$ respect to any martingale measure $Q$ for the process $S$. Since this is true for an arbitrary couple $(S,Q)$ and by recalling that $\Omega_*$ is the efficient support of the consistent price system (see Definition \ref{Omegastar}) we have $$g(\omega)\leq x+V_T(H)(\omega)\quad\forall \omega\in\Omega_*\quad\Longrightarrow\quad E_Q[g]\leq x\quad \forall Q\in \mathcal{Q}.$$ Take now the supremum over $Q\in\mathcal{Q}$ and then the infimum over $x\in\R$ in both sides to obtain
$$\sup_{Q\in\mathcal{Q}}\mathbb{E}_Q[g]\leq\overline{p}(g)$$ as desired.
\end{proof}

As usual one implication is easy. In order to prove the opposite we need some preliminary results.\\ 
We will construct now an auxiliary superhedging problem which involves a family of processes in $\mathcal{S}$, where $\mathcal{S}$ is defined in \eqref{processes}.\\

Introduce first,
\begin{equation}
F_T:\Omega\times \R^d\mapsto\R\quad \text{ defined as }\quad F_T(\omega,x)=g(\omega)\quad\forall \omega\in\Omega,\ x\in\R^d.
\end{equation}
Recall that, starting with $\mathbb{S}_{T+1}(\omega):=\R^d$, the random set \begin{equation}\label{IV}
\mathbb{S}_t(\omega):=\overline{conv}\{\mathbb{S}_{t+1}(\widetilde{\omega})\mid \widetilde{\omega}\in\Sigma_{t}^\omega\}\cap C_t
\end{equation}
is \Ftt-measurable for every $t=T,\ldots,0$, from Lemma \ref{LemmaTheta_meas}.
\begin{definition}\label{defF} We call the $\mathbb{S}$-superhedging problem the following backward procedure.
For any $t=T,\ldots,1$, for any $y\in\R$, define
$$\mathcal{H}_t^y(\omega,x)=\left\{H\in \R^d\mid  y+H\cdot(s-x)\geq F_t(\omega,s)\quad \forall s\in \mathbb{S}_t(\widetilde{\omega}),\ \forall \widetilde{\omega
}\in \Sigma _{t-1}^{\omega }\right\}$$
and set $$F_{t-1}(\omega,x):=\inf\left\{y\in\R\mid\mathcal{H}_t^y(\omega,x)\neq\varnothing \right\}.$$
We simply denote by $\mathcal{H}_t(\omega,x):=\mathcal{H}_t^{F_{t-1}(\omega,x)}(\omega,x)$ the set of optimal strategies at time $t\in I$ and by $\mathcal{A}_t(\omega,x):=\cup_{y\in\R}\{y\}\times\mathcal{H}_t^y(\omega,x)$ the set of acceptable couples. Both might be empty.\\
$F_0(x_0)$ will be called the $\mathbb{S}$-superhedging price for the initial value $x_0\in \R^d$.

\end{definition}

The next Proposition is crucial for the well-posedness of the prescribe procedure. It provides fundamental measurability properties for the whole scheme. Its proof is technical, as well as the proof of the subsequent results, and hence they are all postponed to Section \ref{Secproofs}.\\ Recall that a function $F:\Omega\times \R^d\mapsto \R\cup\{\pm\infty\}$ is called a Carath\'eodory map if: i) $F(\omega,x)$ is continuous in $x$, for every $\omega$ fixed, and ii) $F(\omega,x)$ is measurable in $\omega$, for every $x$ fixed.

\begin{proposition}\label{Fmeascont}
Let $F_t(\cdot,\cdot):\Omega\times \R^d\mapsto \R\cup\{\pm\infty\}$ for $t=0,\ldots T$ as in Definition \ref{defF}.
Denote by $D_{F_t}(\omega):=\{x\in\R^d\mid F_t(\omega,x)> -\infty\}$ the effective domain.\\
We have that
\begin{enumerate}
\item\label{item1cont} For every $x\in\R^d$ fixed, the map $F_t(\cdot,x)$ is \Ftt-measurable.\\ Moreover, when finite, $F_t(\cdot,x)$ is a minimum.
\item \label{item3cont}For every $\omega\in\Omega$ the map $F_t(\omega,\cdot)$ restricted to $\overline{D_{F_t}(\omega)}$ is continuous.
\item\label{item2cont} For every $\omega\in\Omega$, $\overline{D_{F_t}(\omega)}$ is convex.
\end{enumerate}
Items \ref{item1cont} and \ref{item3cont} imply that $F_t(\cdot,\cdot)$ is a Carath\'eodory map in its effective domain.
\end{proposition}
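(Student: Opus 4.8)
\textbf{Proof plan for Proposition \ref{Fmeascont}.}

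The plan is to proceed by backward induction on $t$, starting from $t=T$ where $F_T(\omega,x)=g(\omega)$ is trivially $\mathcal{F}_T$-measurable in $\omega$, continuous (indeed constant) in $x$, and has effective domain $\R^d$, so all three items hold. The inductive step must pass from the properties of $F_t$ to those of $F_{t-1}$, where
$$F_{t-1}(\omega,x)=\inf\left\{y\in\R\mid \exists H\in\R^d\text{ s.t. }y+H\cdot(s-x)\geq F_t(\omega,s)\ \forall s\in\mathbb{S}_t(\widetilde\omega),\ \forall\widetilde\omega\in\Sigma_{t-1}^\omega\right\}.$$
The first observation is that, by definition of $\Sigma_{t-1}^\omega$ and the $\mathcal{F}_t$-measurability of $\mathbb{S}_t$ (Lemma \ref{LemmaTheta_meas}), the relevant constraint set $\bigcup_{\widetilde\omega\in\Sigma_{t-1}^\omega}\mathbb{S}_t(\widetilde\omega)$ is $\mathcal{F}_{t-1}$-measurable (cf.\ Corollary \ref{corTheta_Meas}), and since $F_t(\cdot,s)$ is $\mathcal{F}_t$-measurable but constant on each $\Sigma_{t-1}^\omega$-fibre it effectively induces an $\mathcal{F}_{t-1}$-measurable datum; a Castaing representation $\{\varphi_n\}$ of $\overline{\mathbb{S}_t(\Sigma_{t-1}^\omega)}$ lets us replace the uncountable family of constraints by the countable one $y+H\cdot(\varphi_n(\omega)-x)\geq F_t(\omega',\varphi_n(\omega))$, using continuity of $F_t(\omega',\cdot)$ from the inductive hypothesis (item \ref{item3cont}) to justify that the $\sup$ over $\overline{\mathbb{S}_t(\Sigma_{t-1}^\omega)}$ equals the $\sup$ over the dense sequence.

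For item \ref{item1cont}: fix $x$. I would write $F_{t-1}(\cdot,x)$ as an optimal value of a parametric optimization problem and invoke a measurable selection / projection argument. Concretely, the set $\{(\omega,y,H)\mid y+H\cdot(\varphi_n(\omega)-x)\geq F_t(\cdot,\varphi_n(\omega))\ \forall n\}$ is a countable intersection of sets each measurable in $\omega$ (by the induction hypothesis on measurability of $F_t(\cdot,s)$ and measurability of $\varphi_n$) and Borel (closed) in $(y,H)$; projecting onto $\omega$ and using that analytic sets lie in the universal filtration (Theorem III.18/III.23 of \cite{DM82}, as already used in Lemma \ref{LemmaTheta_meas}) gives measurability of the domain, and the optimal value function of such a normal-integrand-type problem is measurable. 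That the infimum is attained when finite follows because, after fixing $\omega$, the feasible set in $(y,H)$-space is closed and the objective $y$ is bounded below on it by a coercivity argument: if $y$ could be decreased indefinitely one would contradict non-degeneracy of $C_t$ (Assumption \ref{assumptions}, $\mathrm{int}(C_t)\neq\varnothing$), or else $F_{t-1}(\omega,x)=-\infty$ and there is nothing to attain; so restricting to a minimizing sequence with bounded $H$ and extracting a limit gives the minimum.

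For items \ref{item3cont} and \ref{item2cont}: fix $\omega$. The key structural fact is that $x\mapsto F_{t-1}(\omega,x)$ is a \emph{convex} function on $\R^d$ (with values in $\R\cup\{-\infty,+\infty\}$): it is an infimal-type value of a problem whose data depend affinely/convexly on $x$ — indeed $F_{t-1}(\omega,\cdot)$ is the infimal convolution / support-function-type object $x\mapsto\inf_{H}\sup_{s}\big(F_t(\omega,s)-H\cdot(s-x)\big)$, and one checks directly that for $x=\lambda x_1+(1-\lambda)x_2$ feasible pairs $(y_i,H_i)$ for $x_i$ combine, with $H=\lambda H_1+(1-\lambda)H_2$ no: rather one uses that the pointwise supremum over $s$ of affine functions of $(x,H,y)$ is convex and the partial infimum over $H$ of a convex function is convex. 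Convexity of $F_{t-1}(\omega,\cdot)$ immediately yields convexity of $\overline{D_{F_{t-1}}(\omega)}$ (item \ref{item2cont}), and a finite convex function on a convex set is continuous on the interior; the only delicate point — and this is the main obstacle I anticipate — is continuity \emph{up to the boundary} of $\overline{D_{F_{t-1}}(\omega)}$, which is not automatic for general convex functions. I would handle this by exploiting the specific structure: $F_{t-1}(\omega,x)$ is finite exactly when the separation/superhedging problem over the \emph{bounded} set $\mathbb{S}_t(\Sigma_{t-1}^\omega)\subseteq C_t$ (bounded by Assumption \ref{assumptions}) is feasible, and the bound $F_{t-1}(\omega,x)\leq \sup_{s}F_t(\omega,s)+\|H\|\cdot\mathrm{diam}$ together with lower semicontinuity (from item \ref{item1cont}'s attainment, i.e.\ closedness of the epigraph) and an argument showing the optimal $H$ can be chosen locally bounded in $x$ forces continuity on all of $\overline{D_{F_{t-1}}(\omega)}$; alternatively one shows $F_{t-1}(\omega,\cdot)$ agrees on $\overline{D_{F_{t-1}}(\omega)}$ with a continuous function built from the data, which is where the boundedness of $C_t$ and continuity of $F_t(\omega,\cdot)$ enter decisively. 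Finally, items \ref{item1cont} and \ref{item3cont} together give the Carath\'eodory property in the effective domain, closing the induction.
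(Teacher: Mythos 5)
Your backward-induction skeleton matches the paper's, but two of your key steps do not hold up. First, the structural claim that $x\mapsto F_{t-1}(\omega,x)$ is \emph{convex} is wrong: writing $F_{t-1}(\omega,x)=\inf_{H}\bigl[\sup_{s,\widetilde\omega}\bigl(F_t(\widetilde\omega,s)-H\cdot s\bigr)+H\cdot x\bigr]$ shows it is a pointwise infimum of functions affine in $x$, hence \emph{concave}; your composition argument fails because $\sup_s\bigl(F_t-H\cdot(s-x)\bigr)$ is not jointly convex in $(H,x)$ (the bilinear term $H\cdot x$ spoils it), and partial infima of non-jointly-convex functions need not be convex. (Concavity would in fact give convexity of $D_{F_{t-1}}(\omega)$ at once, so item \ref{item2cont} is repairable along your lines; the paper instead argues through the identification of $D_{F_t}(\omega)$, up to closure, with $D_{F_t}(\omega)\cap ri\bigl(conv(\mathbb{S}_{t+1}(\Sigma_t^\omega))\bigr)$.) Second, and more seriously, your treatment of continuity up to the boundary of $\overline{D_{F_t}(\omega)}$ — the genuinely hard part of item \ref{item3cont} — rests on "the optimal $H$ can be chosen locally bounded in $x$". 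This is precisely the route the paper's closing remark rules out: boundedness of the optimizers $H_k$ along $x_k\to x$ cannot be guaranteed. The paper instead obtains lower semicontinuity by a separation analysis: it splits into the perfect-replication case (where $F_t(\omega,\cdot)$ is shown to be affine on the domain) and the non-replication case, where it introduces the sub-hedging value $G_t(\omega,x)$, the cone $\Gamma_y(x)$ for $y\in(G_t(\omega,x),F_t(\omega,x))$, and distinguishes $0\in int(\Gamma_y(x))$ from $0\notin int(\Gamma_y(x))$ (using Lemma \ref{riA}), restricting to sequences in $ri(D_{F_t}(\omega))$. Your fallback ("agrees with a continuous function built from the data") is not substantiated.

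Two further gaps in item \ref{item1cont}. Your reduction of the uncountable constraint family to a countable dense one via continuity of $F_t(\widetilde\omega,\cdot)$ tacitly uses that $\mathbb{S}_t$ takes values in $\overline{D_{F_t}}$; the paper proves this as a separate preliminary step (via a one-period finite-support martingale measure giving the lower bound $\sum_i\lambda_i F_{t+1}(\omega_i,z_i)\leq F_t(\omega,s_n)$), and without it the inductive continuity hypothesis cannot be invoked on the constraint set, nor can the selection machinery (Corollary \ref{corSuperHmultif}) be applied. Finally, your attainment argument again extracts a limit from "a minimizing sequence with bounded $H$", which is the same unjustified boundedness; the paper gets attainment from the closed-valuedness of the dual-cone construction in Lemma \ref{measSuperHmultif}, quoting the classical frictionless argument for the minimum.
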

\begin{proof}
We postpone the proof to Section \ref{Secproofs}.
\end{proof}

\bigskip

For any initial value $x_0\in\R$ the $\mathbb{S}$-superhedging price $F_0(x_0)$, from Definition \ref{defF}, represents (when finite) the minimum amount of cash needed for superhedging $F_t(\omega,s)$, for any time $t\in I$, for any $\omega\in\Omega$ and for any $s\in \mathbb{S}_t(\widetilde{\omega})$. This value looks too conservative since it consider many possible values in the bid-ask spread for $S_{t}$. We nevertheless show the existence of $\bar{x}_0\in C_0$ such that: i) there exists a process $(S_t)_{t\in I}$ with $S_0=\bar{x}_0$ and with values in the bid-ask spread such that the superhedging price of $g$ with no frictions is $F_0(\bar{x}_0)$. ii) there exist a family of random vectors, provided by the solution of the $\mathbb{S}$-superhedging problem, which compose a self-financing trading strategy satisfying $$ F_0(\bar{x}_0)+V_T(H)\geq g\qquad\forall\omega\in\Omega_*.$$

We prove this in a constructing way. More precisely we need the following step-forward iteration: suppose that at time $t\geq 1$ the random variables $S_{t-1}\in\Lt$ and $H_t\in\mathcal{L}^0(\Ft;\R^d)$ with $H_t(\omega)\in\mathcal{H}_t(\omega,S_{t-1}(\omega))$ for every $\omega\in\Omega$, are given and define
\begin{equation}\label{X}
X_{t-1}(\omega):=F_{t-1}(\omega,S_{t-1}(\omega)).
\end{equation}

\begin{lemma}\label{Sprocess} Suppose $X_{t-1}(\omega)<\infty$ for any $\omega\in\Omega$. There exists a random vector $S_t\in\mathcal{L}^0(\Ftt;C_t)$ such that, for all $\omega\in\Omega$,
$$X_{t-1}(\omega)=\inf\{y\in\R\mid\exists H\in \R^d \text{ s.t. }y+H\cdot\Delta S_{t}(\widetilde{\omega})\geq F_t(\widetilde{\omega},S_{t}(\widetilde{\omega}))\quad\forall \widetilde{\omega}\in\Sigma_{t-1}^\omega\}$$
where $\Delta S:=S_{t}-S_{t-1}$. Moreover, if $X_{t-1}(\omega)>-\infty$, $H_t(\omega)$ is an optimal strategy.
\end{lemma}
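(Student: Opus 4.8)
The plan is to construct $S_t$ pointwise on each atom $\Sigma_{t-1}^\omega$ of $\mathcal{F}_{t-1}$ by a selection argument, and then verify measurability using the Carathéodory property of $F_t$ from Proposition \ref{Fmeascont}. Fix $\omega$ with $X_{t-1}(\omega)<\infty$. By Definition \ref{defF}, $X_{t-1}(\omega)=F_{t-1}(\omega,S_{t-1}(\omega))$ is the infimum of those $y$ for which there is $H\in\mathbb{R}^d$ with $y+H\cdot(s-S_{t-1}(\omega))\geq F_t(\widetilde\omega,s)$ for all $\widetilde\omega\in\Sigma_{t-1}^\omega$ and all $s\in\mathbb{S}_t(\widetilde\omega)$. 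By part \ref{item1cont} of Proposition \ref{Fmeascont} (the infimum is attained) the optimal couple $(X_{t-1}(\omega),H_t(\omega))\in\mathcal{A}_t(\omega,S_{t-1}(\omega))$ exists, so $X_{t-1}(\omega)+H_t(\omega)\cdot(s-S_{t-1}(\omega))\geq F_t(\widetilde\omega,s)$ on the whole fibre. The point is that this same inequality, read with $s$ ranging only over the \emph{single} points $S_t(\widetilde\omega)$ for a process $S_t$ to be chosen, must remain \emph{tight} in the sense that no smaller $y$ works; equivalently, I must choose $S_t$ so that the optimization over the reduced constraint set $\{S_t(\widetilde\omega)\mid\widetilde\omega\in\Sigma_{t-1}^\omega\}$ still forces $y\geq X_{t-1}(\omega)$.

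The key observation is a minimax/separation argument on the atom. Consider the function $\phi(y,H):=\sup_{\widetilde\omega\in\Sigma_{t-1}^\omega}\sup_{s\in\mathbb{S}_t(\widetilde\omega)}\big(F_t(\widetilde\omega,s)-y-H\cdot(s-S_{t-1}(\omega))\big)$; then $X_{t-1}(\omega)=\inf\{y\mid\exists H,\ \phi(y,H)\leq 0\}$. Because $\mathbb{S}_t(\widetilde\omega)\subseteq C_t(\widetilde\omega)$ is compact (Assumption \ref{assumptions}) and $F_t(\widetilde\omega,\cdot)$ is continuous on its closed domain, each inner supremum is attained; moreover $S_{t-1}(\omega)\in\mathbb{S}_{t-1}(\omega)=\overline{conv}\{\mathbb{S}_t(\widetilde\omega)\mid\widetilde\omega\in\Sigma_{t-1}^\omega\}\cap C_{t-1}(\omega)$ (this is where \eqref{IV} and Lemma \ref{Sprocess}'s hypothesis that $S_{t-1}$ takes values in $\mathbb{S}_{t-1}$ is used). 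I would argue that there is a ``worst-case'' optimizer: a selection $\widetilde\omega\mapsto S_t(\widetilde\omega)$ with $S_t(\widetilde\omega)\in\mathbb{S}_t(\widetilde\omega)$ such that the reduced problem $\inf\{y\mid\exists H,\ y+H\cdot(S_t(\widetilde\omega)-S_{t-1}(\omega))\geq F_t(\widetilde\omega,S_t(\widetilde\omega))\ \forall\widetilde\omega\}$ equals $X_{t-1}(\omega)$. The existence of such a selection is a consequence of the fact that the value $X_{t-1}(\omega)$ of the full problem equals the value of its ``extreme-point'' or ``active-constraint'' reduction: one takes, for the optimal $H_t(\omega)$, the point(s) $s$ where the constraint $X_{t-1}(\omega)+H_t(\omega)\cdot(s-S_{t-1}(\omega))=F_t(\widetilde\omega,s)$ is active, and shows a convex-combination argument (using that $S_{t-1}(\omega)$ is itself such a combination) produces enough active points that any smaller $y$ violates feasibility. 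Concretely, I would pick $S_t(\widetilde\omega)\in\arg\max_{s\in\mathbb{S}_t(\widetilde\omega)}\big(F_t(\widetilde\omega,s)-H_t(\omega)\cdot(s-S_{t-1}(\omega))\big)$; then $X_{t-1}(\omega)+H_t(\omega)\cdot(S_t(\widetilde\omega)-S_{t-1}(\omega))\geq F_t(\widetilde\omega,S_t(\widetilde\omega))$ holds, giving ``$\leq$'' for the reduced infimum, while ``$\geq$'' follows because any feasible $(y,H)$ for the reduced problem, combined with the fact that $S_{t-1}(\omega)\in\overline{conv}\{S_t(\widetilde\omega)\}$ and a Jensen-type inequality against the convexified full constraint, forces $y\geq F_{t-1}(\omega,S_{t-1}(\omega))=X_{t-1}(\omega)$; here one invokes that $F_t(\widetilde\omega,\cdot)$ on $\mathbb{S}_t(\widetilde\omega)$ together with the convexification in \eqref{IV} reproduces the definition of $F_{t-1}$.

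For measurability, I would assemble the argmax selection into an $\mathcal{F}_t$-measurable map. The set-valued map $\widetilde\omega\mapsto\arg\max_{s\in\mathbb{S}_t(\widetilde\omega)}\big(F_t(\widetilde\omega,s)-H_t(\cdot)\cdot(s-S_{t-1}(\cdot))\big)$ is the argmax of a Carathéodory function (Proposition \ref{Fmeascont}, items \ref{item1cont} and \ref{item3cont}, together with measurability of $H_t$ and $S_{t-1}$, which are constant on the atom but measurable in $\omega$ across atoms) over a compact-valued $\mathcal{F}_t$-measurable random set $\mathbb{S}_t$ (Lemma \ref{LemmaTheta_meas}); by the measurable maximum theorem it is $\mathcal{F}_t$-measurable and nonempty, so it admits a measurable selector $S_t$ by the Kuratowski–Ryll-Nardzewski selection theorem, and $S_t\in\mathcal{L}^0(\mathcal{F}_t;C_t)$ since $\mathbb{S}_t\subseteq C_t$. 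The last sentence of the lemma — that $H_t(\omega)$ remains optimal for the reduced problem — is then immediate from the chain of inequalities above: $H_t(\omega)$ witnesses feasibility of $y=X_{t-1}(\omega)$ in the reduced problem, and we showed no smaller $y$ is feasible. The main obstacle I anticipate is the ``$\geq$'' direction of the reduced infimum, i.e. proving that restricting the constraint set to the single worst-case points $S_t(\widetilde\omega)$ does not lower the superhedging price; this requires carefully leveraging $S_{t-1}(\omega)\in\overline{conv}\{\mathbb{S}_t(\widetilde\omega)\}$ and the precise interplay between the convexification step \eqref{IV} and the definition of $F_{t-1}$, and is the genuine content of the lemma rather than routine bookkeeping.
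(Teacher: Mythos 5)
Your ``$\leq$'' direction and the measurability bookkeeping are fine, but the ``$\geq$'' direction --- which you correctly identify as the real content of the lemma --- does not go through with the mechanism you propose. An arbitrary measurable selector of the per-fibre argmax of $s\mapsto F_t(\widetilde{\omega},s)-H_t(\omega)\cdot(s-S_{t-1}(\omega))$ does not pin the reduced price at $X_{t-1}(\omega)$. Concrete failure: take $d=1$, $t=T$, let all $\widetilde{\omega}\in\Sigma_{T-1}^{\omega}$ share the same time-$T$ bid-ask values (so any $\mathcal{F}_T$-measurable $S_T$ is a single point of the fibre on this atom), let $\mathbb{S}_T(\widetilde{\omega})=[0,2]$, $g\equiv c$ on the atom and $S_{T-1}(\omega)=1$. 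Then $X_{T-1}(\omega)=c$ with optimal $H_T=0$, the argmax is the whole fibre $[0,2]$, and any selector value $s^*\neq 1$ makes the reduced infimum equal $-\infty$ (choose $H$ with $H\cdot(s^*-1)$ arbitrarily large); only the coordinated choice $S_T=S_{T-1}$ works. The underlying reason is that your Jensen step runs the wrong way: for feasible $(y,H)$ of the reduced problem and weights with $\sum_i\lambda_i S_t(\widetilde{\omega}_i)=S_{t-1}(\omega)$ you only get $y\geq\sum_i\lambda_i F_t(\widetilde{\omega}_i,S_t(\widetilde{\omega}_i))$, and such a ``martingale expectation'' is in general \emph{strictly below} the superhedging price $F_{t-1}(\omega,S_{t-1}(\omega))$; equality forces the selected points to be \emph{active} constraints, i.e.\ one needs $0$ to lie in the convex hull of the vectors $\left(X_{t-1}(\omega)-F_t(\widetilde{\omega},S_t(\widetilde{\omega})),\,S_t(\widetilde{\omega})-S_{t-1}(\omega)\right)$. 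Moreover $S_{t-1}(\omega)\in\overline{conv}\{S_t(\widetilde{\omega})\}$ is not implied by your construction: \eqref{IV} places $S_{t-1}$ in the closed convex hull of the \emph{union of the fibres}, not of a one-point-per-fibre selection.

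This active-point/convex-hull structure is exactly what the paper's proof builds, and it is where the work lies. The paper collects the constraints that are binding for \emph{every} optimal strategy ($A=A_1\cap A_2$), shows $0\in ri(conv(A))$ by a separation argument (otherwise $X_{t-1}$ would not be optimal), then extracts, measurably in $\omega$, finitely many such vectors together with weights $\lambda$ realizing $0=\sum_j\lambda_j a_j$ via the implicit map theorem (Theorem \ref{implicit}), iterating over linear spans to control dimension. Two further obstructions, absent from your sketch, are then handled: the required active points may lie only in $\overline{\mathbb{S}_t(\Sigma_{t-1}^{\omega})}$ and in no single fibre, so they are approximated by measurable selectors of $(s_j+\overline{B}_{1/n})\cap\mathbb{S}_t$ and a limit is taken (the relative-interior condition provides the slack to survive this perturbation); and several required points may fall on the same fibre, where a single-valued $S_t$ can realize only one value, which is resolved by replacing them with a convex combination, using convexity of $\mathbb{S}_t$ and the linearity of the constraints. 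Without these ingredients --- in particular without ensuring $0\in conv$ of the reduced constraint vectors --- the selection argument you outline can, as above, drive the reduced price strictly below (even to $-\infty$ of) $X_{t-1}(\omega)$.
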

\begin{proof}We postpone the proof to Section \ref{Secproofs}.
\end{proof}

We use Lemma \ref{Sprocess} as a building block for the desired process in i): the next Proposition shows that it is possible to construct a frictionless process whose superhedging price coincides with $F_0(x_0)$.

\begin{proposition}\label{propSprocess} For every $x_0\in C_0$ there exists a price process $S=(S_t)_{t\in I}$ such that:\begin{itemize}
\item $S_0=x_0$, $S_t\in\mathcal{L}^0(\F_t;C_t)$ for every $0\leq t\leq T$.
\item Let $\mathcal{H}^{pred}$ the class of $\mathbb{F}$-predictable process. Then, $$\inf\left\{x\in\R\mid\exists H\in\mathcal{H}^{pred}\text{ s.t. }x+(H\circ S)_T(\omega)\geq g(\omega) \quad\forall\omega\in\Omega_*(S)\right\}=F_0(x_0)$$
where $\Omega_*(S):=\{\omega\in\Omega\mid \exists Q\in\mathcal{Q}_S \text{ s.t. }Q(\{\omega\})>0\}$ and $\mathcal{Q}_S$ is defined in \eqref{sectM}.
\end{itemize}
\end{proposition}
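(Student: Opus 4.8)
The plan is to construct, by a single forward recursion, the process $S$, an $\mathbb{F}$-predictable strategy $H$, and a pricing measure $Q^\ast\in\mathcal{Q}_S$, and then to read off the two inequalities. We may assume $-\infty<F_0(x_0)<\infty$ (the excluded cases are vacuous or reduce to this one by applying the argument to the truncations $g\wedge n$), and we write $\overline{p}_S(g)$ for the left-hand side of the asserted identity, i.e. the frictionless superhedging price of $g$ over $\Omega_*(S)$ with $\mathbb{F}$-predictable strategies.

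\emph{Step 1 (the recursion).} Put $S_0:=x_0$, $X_0:=F_0(x_0)$. Suppose $S_{t-1}\in\Lt$ has been built with $X_{t-1}:=F_{t-1}(\cdot,S_{t-1})<\infty$ everywhere. By Proposition \ref{Fmeascont}, $F_t$ is a Carath\'eodory map on its effective domain and $F_{t-1}(\cdot,S_{t-1})$, when finite, is attained, so a measurable selection gives an $\Ft$-measurable $H_t$ with $H_t(\omega)\in\mathcal{H}_t(\omega,S_{t-1}(\omega))$ on $\{X_{t-1}>-\infty\}$; Lemma \ref{Sprocess} then yields $S_t\in\mathcal{L}^0(\Ftt;C_t)$ — which, by Carath\'eodory's theorem, may be taken finitely valued on each $\Ft$-atom — such that
\[
X_{t-1}(\omega)=\inf\{\,y\in\R\mid \exists H\in\R^d:\ y+H\cdot(S_t(\widetilde\omega)-S_{t-1}(\omega))\ge F_t(\widetilde\omega,S_t(\widetilde\omega))\ \ \forall\widetilde\omega\in\Sigma_{t-1}^\omega\,\},
\]
with $H_t(\omega)$ optimal whenever $X_{t-1}(\omega)>-\infty$. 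Specializing $\widetilde\omega=\omega$ gives $X_{t-1}(\omega)+H_t(\omega)\cdot(S_t(\omega)-S_{t-1}(\omega))\ge F_t(\omega,S_t(\omega))=:X_t(\omega)$, and since $C_{t-1},C_t$ are bounded (Assumption \ref{assumptions}) this forces $X_t<\infty$ everywhere, so the recursion runs up to $t=T$, where $X_T=F_T(\cdot,S_T)=g$. Moreover, on each $\Ft$-atom the problem defining $X_{t-1}$ is a finite linear program over the finite trace $\{S_t(\widetilde\omega):\widetilde\omega\in\Sigma_{t-1}^\omega\}$, whose martingale (dual) constraint is feasible because $S_{t-1}(\omega)\in\mathbb{S}_{t-1}(\omega)\subseteq\overline{conv}(\mathbb{S}_t(\Sigma_{t-1}^\omega))$ and $S_t$ may be chosen to realize such a convex representation; by linear programming duality there is a kernel $\mu_t(\omega)$ with $\mathbb{E}_{\mu_t(\omega)}[S_t]=S_{t-1}(\omega)$ and $X_{t-1}(\omega)=\mathbb{E}_{\mu_t(\omega)}[X_t]$. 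Gluing the $\mu_t$ — measurably, again using Proposition \ref{Fmeascont} and the construction in the proof of Lemma \ref{convexAndMtg} — gives $Q^\ast\in\mathfrak{P}$ under which $S$ is a martingale (so $Q^\ast\in\mathcal{Q}_S$ and $Q^\ast(\Omega_*(S))=1$) and under which $(X_t)_{t\in I}$ is a \emph{true} martingale. Finally $\{X_t=-\infty\}$ cannot meet $\Omega_*(S)$ — a $Q\in\mathcal{Q}_S$ charging such an $\omega$ would produce a frictionless arbitrage for $S$ on an atom, contradicting that $S$ is a $Q$-martingale — so $\Omega_*(S)\subseteq\{\omega\mid X_t(\omega)\in\R\ \forall t\}$.

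\emph{Step 2 ($\overline{p}_S(g)\le F_0(x_0)$).} On $\Omega_*(S)$, where all $X_t$ are finite, one telescopes the pointwise inequalities $X_{t-1}+H_t\cdot(S_t-S_{t-1})\ge X_t$, $t=1,\dots,T$, to get
\[
F_0(x_0)+(H\circ S)_T(\omega)=X_0(\omega)+\sum_{t=1}^T H_t(\omega)\cdot(S_t(\omega)-S_{t-1}(\omega))\ \ge\ X_T(\omega)=g(\omega)\qquad\forall\,\omega\in\Omega_*(S),
\]
so the predictable strategy $H=(H_t)_{t=1}^{T}$ superhedges $g$ over $\Omega_*(S)$ from capital $F_0(x_0)$, and $\overline{p}_S(g)\le F_0(x_0)$.

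\emph{Step 3 ($\overline{p}_S(g)\ge F_0(x_0)$, and conclusion).} Since $(X_t)$ is a $Q^\ast$-martingale and $Q^\ast(\Omega_*(S))=1$ we have $\mathbb{E}_{Q^\ast}[g]=\mathbb{E}_{Q^\ast}[X_T]=X_0=F_0(x_0)$. If $x\in\R$ and $H\in\mathcal{H}^{pred}$ satisfy $x+(H\circ S)_T\ge g$ on $\Omega_*(S)$, then this holds $Q^\ast$-a.s., and taking $Q^\ast$-expectations with $\mathbb{E}_{Q^\ast}[H_t\cdot(S_t-S_{t-1})\mid\Ftt]=0$ (generalized conditional expectation, cf. the Remark after Definition \ref{defCPS}) yields $x\ge\mathbb{E}_{Q^\ast}[g]=F_0(x_0)$; the infimum over such $x$ gives $\overline{p}_S(g)\ge F_0(x_0)$, and with Step 2 this proves the identity. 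I expect the main obstacle to be the part of Step 1 that upgrades the local superhedging identities of Lemma \ref{Sprocess} — which a priori only make $(X_t)$ a $Q$-supermartingale for each $Q\in\mathcal{Q}_S$ — to the existence of one $Q^\ast\in\mathcal{Q}_S$ rendering $(X_t)$ a genuine martingale; this is exactly where the convexification \eqref{modifiedBidAsk}--\eqref{IV} (feasibility of the martingale constraint) and the compactness in Assumption \ref{assumptions} are needed, together with the measurable gluing of the one-step kernels, while the integrability bookkeeping legitimating the telescoping and the expectation manipulations on $\Omega_*(S)$ is routine but delicate.
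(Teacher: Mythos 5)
Your Steps 1--2 in the finite case reproduce the paper's forward recursion (Lemma \ref{Sprocess} plus measurable selection of optimal $H_t$, then telescoping to get the superhedge from capital $F_0(x_0)$ on the set where all $X_t$ are finite), and that part is sound. The genuine gap is in how you prove the reverse inequality. You claim that on each $\Ft$-atom the martingale constraint is \emph{exactly} feasible over a \emph{finite} set of attained values of $S_t$, yielding kernels $\mu_t(\omega)$ with $\mathbb{E}_{\mu_t(\omega)}[S_t]=S_{t-1}(\omega)$ and $\mathbb{E}_{\mu_t(\omega)}[X_t]=X_{t-1}(\omega)$, which glue into a $Q^\ast\in\mathcal{Q}_S$ attaining $\mathbb{E}_{Q^\ast}[g]=F_0(x_0)$. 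This is not available: $S_{t-1}(\omega)\in\mathbb{S}_{t-1}(\omega)$ only places it in the \emph{closed} convex hull $\overline{conv}(\mathbb{S}_t(\Sigma_{t-1}^\omega))$, and in the proof of Lemma \ref{Sprocess} the representing points $s_j$ live in $\overline{\mathbb{S}_t(\Sigma_{t-1}^\omega)}$ and are only \emph{approximated} by values of $S_t$ (via selectors of $(s_j+\overline{B}_{1/n})\cap\mathbb{S}_t$); the constructed $S_t$ is a limit, is not finitely valued on the atom, and need not admit any exact finite-support martingale kernel through $S_{t-1}(\omega)$ matching the $X$-values. In other words, dual attainment is precisely what can fail in this robust setting, so a $Q^\ast$ with $\mathbb{E}_{Q^\ast}[g]=F_0(x_0)$ need not exist. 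The paper proves ``$\geq$'' differently: Lemma \ref{Sprocess} gives that $X_{t-1}$ \emph{is} the conditional frictionless superhedging price for the constructed $S_t$, so minimality propagates backwards and no cheaper superhedge on $\Omega_*(S)$ can exist; the frictionless duality is invoked only later, in the proof of Theorem \ref{ThmSuper} and only as a supremum (Theorem 1.1 of \cite{BFM15}), never as a maximum.

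The second gap is your dismissal of the infinite cases. The statement quantifies over \emph{every} $x_0\in C_0$, and when $F_0(x_0)=+\infty$ one must exhibit a single process $S$ with $S_0=x_0$ whose frictionless superhedging price over $\Omega_*(S)$ is $+\infty$; the paper spends roughly half of its proof on this, constructing $S_t$ at the first time where $F_t$ is finite by embedding countably many scenarios so that the conditional price exceeds $y_1+n$ for every $n$. Truncating to $g\wedge n$ does not reduce to the finite case: the whole backward family $F_t$ changes with the truncation, there is no argument that the truncated prices at $x_0$ diverge, and even if they did you would obtain a sequence of \emph{different} processes $S^{(n)}$ rather than the single $S$ the proposition requires. (The case $F_0(x_0)=-\infty$ is indeed quick, via \eqref{frictionleassDominance} and Remark \ref{comments}, but it should be said rather than called vacuous.) Finally, a smaller inaccuracy feeding the first gap: the claim that $S_t$ ``may be taken finitely valued on each $\Ft$-atom by Carath\'eodory's theorem'' misreads Lemma \ref{Sprocess}; forcing finitely many values would destroy the property that $X_{t-1}$ equals the conditional frictionless price, which is achieved only through the limiting construction.
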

\begin{proof}We postpone the proof to Section \ref{Secproofs}.
\end{proof}

We now construct, for a given initial value $x_0\in C_0$, a strategy $H:=(H_1,\ldots H_T)$ whose terminal payoff, considering transaction costs, dominates $g$. We again first show a one-step iteration. Recall from Definition \ref{defF} that $\mathcal{H}_{t+1}(\cdot,\cdot)$ is the set of optimal strategies for the (conditional) $\mathbb{S}$-superhedging problem.

\begin{proposition}\label{S2process}
 There exist a random vector $\widehat{S}_t\in\mathcal{L}^0_t(\mathcal{F}_t; C_t)$ and a trading strategy $H_{t+1}\in\Ltt$ such that, for every $\omega\in\{X_{t-1}<\infty\}$,
\begin{equation}
 X_{t-1}(\omega)+H_{t}\cdot(\widehat{S}_{t}(\widetilde{\omega})-S_{t-1}(\widetilde{\omega}))\geq F_t(\widetilde{\omega},\widehat{S}_{t}(\widetilde{\omega}))\quad\forall \widetilde{\omega}\in\Sigma_{t-1}^\omega.
\end{equation}
 Moreover $H_{t+1}(\omega)\in\mathcal{H}_{t+1}(\omega,\widehat{S}_t(\omega))$ and the following properties are satisfied:
\begin{itemize}
\item if $H^i_t(\omega)< H^i_{t+1}(\omega)$ then $\widehat{S}^i_t(\omega)=\oS^i(\omega)$
\item if $H^i_t(\omega)> H^i_{t+1}(\omega)$ then $\widehat{S}^i_t(\omega)=\uS^i(\omega)$
\end{itemize}
In particular if $\hat{S}^i_t\in \left(\uS^i(\omega),\oS^i(\omega)\right)$ we necessarily have $H^i_t(\omega)=H^i_{t+1}(\omega)$.
\end{proposition}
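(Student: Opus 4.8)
\textbf{Proof plan for Proposition \ref{S2process}.}

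The plan is to run the one-step selection argument from Lemma \ref{Sprocess} and Proposition \ref{propSprocess}, but to keep track of the \emph{sign} of the rebalancing $H_{t+1}-H_t$ componentwise so that the chosen point $\widehat S_t$ sits on the correct face of the box $C_t$. First I would fix $\omega\in\{X_{t-1}<\infty\}$. By Lemma \ref{Sprocess} (applied at time $t$ with the given $S_{t-1}$, $H_t$) there already exists a random vector $S_t\in\mathcal L^0(\Ftt;C_t)$ with $X_{t-1}(\omega)=F_{t-1}(\omega,S_{t-1}(\omega))$ realized by the linear functional $H_t(\omega)$, i.e. $X_{t-1}(\omega)+H_t(\omega)\cdot\Delta S_t(\widetilde\omega)\geq F_t(\widetilde\omega,S_t(\widetilde\omega))$ for all $\widetilde\omega\in\Sigma_{t-1}^\omega$, and $H_t(\omega)$ is optimal when $X_{t-1}(\omega)>-\infty$. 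The issue is that this $S_t$ need not respect the sign conditions relative to the next strategy $H_{t+1}$. The key observation is that the map $s\mapsto F_t(\widetilde\omega,s)$ is convex in $s$ on its (convex) effective domain by Proposition \ref{Fmeascont}, while the constraint $y+H\cdot(s-x)\geq F_t(\widetilde\omega,s)$ in the definition of $\mathcal H_{t+1}^y$ is required to hold for all $s\in\mathbb S_{t+1}(\Sigma_t^\omega)$; hence $F_t(\widetilde\omega,\cdot)$ and therefore $\mathcal H_{t+1}$ depends on $\widehat S_t(\omega)$ only through whether it lies in $C_t$ — the value $X_t(\omega)=F_t(\omega,\widehat S_t(\omega))$ varies but the \emph{minimizing} strategy $H_{t+1}$ can be chosen so that $V_t$ does not decrease when we push $\widehat S_t$ to a face. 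This is exactly the mechanism used in Step 4 and Lemma \ref{lemmaNonEmpty} of the FTAP proof.

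Concretely, for each component $i$ I would compare $H_t^i(\omega)$ with a candidate value $H_{t+1}^i(\omega)$: if the optimal rebalancing strictly increases the position in asset $i$ (buying), the cost is charged at the ask $\oS^i$, so I would replace $\widehat S_t^i:=\oS_t^i$; if it strictly decreases (selling), the value is credited at the bid $\uS^i$, so $\widehat S_t^i:=\uS_t^i$; if $H_{t+1}^i=H_t^i$ the $i$-th component is irrelevant for the transaction-cost term and can be left at any point of $[\uS_t^i,\oS_t^i]$. The point is that this replacement can only \emph{increase} $H_t(\omega)\cdot\widehat S_t(\omega)-H_{t+1}(\omega)\cdot\widehat S_t(\omega)$ relative to the unconstrained optimum, by the same componentwise monotonicity computation carried out in the two cases of Lemma \ref{lemmaNonEmpty} (each term $(H_t^i-h^i)[\hat S^{\bar h}_t-\hat S^h_t]^i\geq 0$), so the inequality $X_{t-1}(\omega)+H_t\cdot(\widehat S_t-S_{t-1})\geq F_t(\widetilde\omega,\widehat S_t)$ is preserved on $\Sigma_{t-1}^\omega$, and simultaneously $H_{t+1}(\omega)\in\mathcal H_{t+1}(\omega,\widehat S_t(\omega))$ by construction (we are choosing the $\mathbb S$-superhedging minimizer at the shifted base point $\widehat S_t(\omega)$, which is legitimate because $\widehat S_t(\omega)\in C_t(\omega)$ and the one-step problem $\mathcal A_{t+1}(\omega,\widehat S_t(\omega))$ is non-empty whenever $F_t(\omega,\widehat S_t(\omega))<\infty$). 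The final bullet — that $\widehat S_t^i\in(\uS^i,\oS^i)$ forces $H_t^i=H_{t+1}^i$ — is just the contrapositive of the two sign rules.

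For measurability I would argue as in Step 4 of the FTAP proof and in Lemma \ref{Sprocess}: the joint selection $(\widehat S_t,H_{t+1})$ is a measurable selector of the $\Ftt$-measurable, closed-valued multifunction obtained by intersecting $\mathrm{Graph}(\mathcal A_{t+1}(\omega,\cdot))$ with the finitely many sign strata $\{h:\,\mathrm{sign}(h-H_t(\omega))=\eta\}$, $\eta\in\Xi$, and with the corresponding face $\{\widehat S_t=\hat S_t^{\eta}\}$; measurability of all these pieces follows from Proposition \ref{Fmeascont} (Carath\'eodory property of $F_t$, hence $\mathcal H_{t+1}^y$ and $\mathcal A_{t+1}$ are measurable random sets), from Corollary \ref{corTheta_Meas}, and from the preservation-of-measurability results in Proposition \ref{preservation_measurability}, together with the fact that the Universal Filtration absorbs analytic sets. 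A measurable selector then exists by the usual selection theorem, and restricting to $\{X_{t-1}<\infty\}$ guarantees the relevant sets are non-empty. The main obstacle I anticipate is precisely the monotonicity bookkeeping in the case split: one must check that pushing $\widehat S_t$ to the ask when buying (and to the bid when selling) never destroys the already-established superhedging inequality $X_{t-1}+H_t\cdot(\widehat S_t-S_{t-1})\geq F_t(\cdot,\widehat S_t)$ on the whole atom $\Sigma_{t-1}^\omega$ — this uses convexity of $F_t(\widetilde\omega,\cdot)$ from Proposition \ref{Fmeascont} in an essential way, and is the analogue of the delicate step in Lemma \ref{lemmaNonEmpty}.
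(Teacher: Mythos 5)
There is a genuine gap, and it sits exactly at the heart of the proposition. Your prescription is circular: you define $\widehat S_t^i$ by looking at the sign of $H_{t+1}^i-H_t^i$, but $H_{t+1}$ is supposed to be an element of $\mathcal H_{t+1}(\omega,\widehat S_t(\omega))$, i.e.\ an \emph{optimal} strategy for the $\mathbb S$-superhedging problem started at the very point $\widehat S_t$ you are in the middle of defining. Changing the base point changes $F_t(\omega,\cdot)$ and the optimizer set $\mathcal H_{t+1}(\omega,\cdot)$ (your claim that these depend on $\widehat S_t$ ``only through whether it lies in $C_t$'' is not correct — the constraint $y+H\cdot(s-x)\ge F_{t+1}$ depends on $x$ explicitly), so after pushing a component to the bid or ask the sign pattern of the new optimizer may differ from the one you used to decide where to push, and you offer no fixed-point or iteration argument to close this loop. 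Moreover, the monotonicity bookkeeping you import from Lemma \ref{lemmaNonEmpty} controls the transaction-cost term $(H_t-h)\cdot\hat S_t$ against a \emph{fixed} target, whereas here the target $F_t(\widetilde\omega,\widehat S_t(\widetilde\omega))$ moves with $\widehat S_t$; also, the first inequality of the proposition is automatic only when $\widehat S_t$ takes values in $\mathbb S_t$ (because $H_t\in\mathcal H_t(\omega,S_{t-1}(\omega))$ makes it hold for \emph{all} $s\in\mathbb S_t(\widetilde\omega)$), and pushing components to the faces of the box $C_t$ can take you outside $\mathbb S_t$, after which neither that inequality nor finiteness of $F_t$ at the new point is guaranteed.

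The paper resolves both issues by a different mechanism. It fixes $\widehat S_t:=m$, a measurable \emph{minimizer over $\mathbb S_t(\omega)$} of the slack $G(\omega,s)=X_{t-1}(\omega)+H_t(\omega)\cdot(s-S_{t-1}(\omega))-F_t(\omega,s)$ (implicit measurable selection, Theorem \ref{implicit}), so the first inequality is immediate; the whole difficulty is then to show that at this particular $m$ the set of optimal strategies $\mathcal H_{t+1}(\omega,m)$ meets the sign cone $R$ determined by which face of $C_t$ the components $m^i$ sit on. This is proved by contradiction: if $\mathcal H_{t+1}(m)\cap R=\varnothing$, Hahn--Banach gives a separator $\eta$ which, via the duality between $\mathcal H_{t+1}(m)$ and the cone generated by the ``active'' directions $y-m$ (Steps 1 and 2 establish that an optimizer at $m$ remains optimal along $conv(Y)$ and the affine formula \eqref{eqPrice} for $F_t$ along such segments), must itself be of the form $y-m$ with the right componentwise signs; moving from $m$ towards $y$ inside $\mathbb S_t$ then strictly decreases $G$, contradicting the minimality of $Y_t$ at $m$. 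Nothing in your proposal supplies a substitute for this separation/duality step, which is precisely what guarantees that a sign-compatible \emph{optimizer} exists at all; the measurability discussion, which is the part you handle most carefully, is the routine part (the paper gets it from Corollary \ref{corSuperHmultif} with the sign constraints encoded in the closed random set $C$).
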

\begin{proof}
We postpone the proof to Section \ref{Secproofs}.
\end{proof}
\begin{remark} With a slight abuse of notation, when $X_{t-1}(\omega)=-\infty$ we intend that there exists a sequence $\{(y_n,H_n)\}\subseteq\R\times\Ltt$ with $y_n\rightarrow-\infty$, such that for every $n\in\mathbb{N}$ the conditions of Proposition \ref{S2process} are satisfied. The same apply to Corollary \ref{corS2process} when $F_0(x_0)=-\infty$.
\end{remark}

\begin{corollary}\label{corS2process} For every $x_0\in C_0$ with $F_0(x_0)<\infty$ there exists a predictable process $H:=(H_1,\ldots H_T)$ such that
\begin{equation*} F_0(x_0)+(0-H_1\cdot x_0)+\sum_{t=1}^T\sum_{j=1}^d\left(H^j_t-H^j_{t+1}\right)\left(\oS^j_t\mathbf{1}_{\{H^j_t\leq H^j_{t+1}\}}+\uS^j_t\mathbf{1}_{\{H^j_{t+1}< H^j_{t}\}}\right)\geq g \text{ on }\Omega_*.
\end{equation*}
\end{corollary}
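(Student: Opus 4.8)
The plan is to iterate Proposition \ref{S2process} forward in time, starting from the initial value $x_0\in C_0$ with $F_0(x_0)<\infty$, to produce the predictable process $H=(H_1,\ldots,H_T)$ together with an auxiliary process $(\widehat S_t)_{t=0,\ldots,T}$ in the bid-ask spread. First I would set $S_0:=x_0$, $X_{-1}:=F_0(x_0)$, and $H_0:=0$, and apply Proposition \ref{S2process} at $t=0$ to obtain $\widehat S_0\in\mathcal L^0(\mathcal F_0;C_0)$ and $H_1\in\mathcal L^0(\mathcal F_0;\R^d)$ with $H_1(\omega)\in\mathcal H_1(\omega,\widehat S_0(\omega))$ and $X_{-1}(\omega)+H_1\cdot(\widehat S_0(\widetilde\omega)-x_0)\ge F_1(\widetilde\omega,\widehat S_0(\widetilde\omega))$ for all $\widetilde\omega\in\Sigma_{-1}^\omega=\Omega$. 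Note that since $x_0$ is deterministic, $\widehat S_0-x_0$ plays the role of the first increment. Then inductively, having $S_{t-1}:=\widehat S_{t-1}$, $H_t$ optimal, and $X_{t-1}(\omega):=F_{t-1}(\omega,S_{t-1}(\omega))$, I apply Proposition \ref{S2process} at time $t$ to get $\widehat S_t$ and $H_{t+1}$; crucially the hypothesis $X_{t-1}<\infty$ needed to invoke it propagates because $F_0(x_0)<\infty$ and each $X_{t-1}$ is obtained by the (finite-valued, by the acceptable-couple construction and Proposition \ref{Fmeascont}) backward optimal value along the constructed path. Setting $H_{T+1}:=0$, which is legitimate since $F_T(\omega,\cdot)\equiv g(\omega)$ and the last step's optimality forces $\mathcal H_{T+1}$ to be, consistently, the trivial choice, one gets a predictable strategy with $H_0=H_{T+1}=0$.

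The next step is telescoping. At each time $t=1,\ldots,T$ the inequality from Proposition \ref{S2process}, evaluated along the realized path (so $\widetilde\omega=\omega$, using that everything is constant on $\Sigma_{t-1}^\omega$ up to the relevant measurability), reads
\begin{equation*}
X_{t-1}(\omega)+H_t(\omega)\cdot(\widehat S_t(\omega)-S_{t-1}(\omega))\ \ge\ F_t(\omega,\widehat S_t(\omega))=X_t(\omega),
\end{equation*}
where I use $S_{t-1}=\widehat S_{t-1}$ and $X_t(\omega)=F_t(\omega,S_t(\omega))$. Summing these inequalities over $t=1,\ldots,T$ and cancelling the telescoping $X_0,\ldots,X_{T-1}$ terms yields
\begin{equation*}
X_{-1}(\omega)+\sum_{t=1}^T H_t(\omega)\cdot\bigl(\widehat S_t(\omega)-\widehat S_{t-1}(\omega)\bigr)\ \ge\ X_T(\omega)=F_T(\omega,\widehat S_T(\omega))=g(\omega),
\end{equation*}
i.e. $F_0(x_0)+(0-H_1\cdot x_0)+\sum_{t=1}^T(H_t-H_{t+1})\cdot\widehat S_t\ge g$ after rearranging the stochastic integral sum (using $H_{T+1}=0$ and $\widehat S_0=x_0$). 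The final move is to replace $(H_t-H_{t+1})\cdot\widehat S_t$ by the transaction-cost expression $\sum_j(H^j_t-H^j_{t+1})(\oS^j_t\mathbf 1_{\{H^j_t\le H^j_{t+1}\}}+\uS^j_t\mathbf 1_{\{H^j_{t+1}<H^j_t\}})$. This is exactly where the last two bullet points of Proposition \ref{S2process} are used: on $\{H^i_t<H^i_{t+1}\}$ we have $\widehat S^i_t=\oS^i_t$, so $(H^i_t-H^i_{t+1})\widehat S^i_t=(H^i_t-H^i_{t+1})\oS^i_t$; on $\{H^i_t>H^i_{t+1}\}$ we have $\widehat S^i_t=\uS^i_t$, matching the second indicator; and on $\{H^i_t=H^i_{t+1}\}$ the term vanishes and the choice of price is irrelevant. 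Hence the left-hand side of the telescoped inequality equals precisely the expression in the statement, and the inequality holds on the set where the construction is valid.

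It remains only to argue that the inequality holds on all of $\Omega_*$ rather than merely on $\{X_{t-1}<\infty\ \forall t\}$, and to handle the case $F_0(x_0)=-\infty$. For the former: if $\omega\in\Omega_*$ there is $Q\in\mathcal Q$ with $Q(\{\omega\})>0$, and the frictionless-domination inequality \eqref{frictionleassDominance} together with $Q$ being a martingale measure for some process in the spread keeps all the relevant superhedging values finite along paths charged by such $Q$ — more directly, $\Omega_*\subseteq\Omega_*(S)$ for a suitable $S\in\mathcal S$ by Proposition \ref{propSprocess}, and on $\Omega_*(S)$ the quantities $X_t$ stay finite, so Proposition \ref{S2process} applies at every step. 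For the case $F_0(x_0)=-\infty$, I would invoke the stated Remark: one works with a sequence $(y_n,H_n)$ with $y_n\to-\infty$ satisfying the conclusion of Proposition \ref{S2process} at each step, runs the same telescoping, and passes to the limit, obtaining the inequality a fortiori (indeed trivially, since any $x$ superhedges). The main obstacle I anticipate is the bookkeeping at the boundary times — justifying $H_{T+1}=0$ as a legitimate optimal/acceptable terminal choice given $F_T\equiv g$, and correctly identifying the first increment term $(0-H_1\cdot x_0)$ coming from the deterministic start $\widehat S_0=x_0$ — together with checking that the measurability and finiteness hypotheses of Proposition \ref{S2process} genuinely propagate along the constructed path; everything else is the routine telescoping and the casewise price identification dictated by the bullet conditions.
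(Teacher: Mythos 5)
Your overall route is exactly the paper's: iterate Proposition \ref{S2process} forward to produce $(\widehat S_t)$ and $(H_t)$, telescope the resulting inequalities to $F_0(x_0)+\sum_{t=1}^T H_t\cdot\Delta\widehat S_t\ge g$, rewrite the sum by parts as $\sum_{t=1}^T(H_t-H_{t+1})\cdot\widehat S_t-H_1\cdot x_0$ (with $H_{T+1}=0$ and $\widehat S_0=x_0$), and use the two bullet properties of Proposition \ref{S2process} to replace $\widehat S^j_t$ by $\oS^j_t$ or $\uS^j_t$ according to the sign of $H^j_t-H^j_{t+1}$. However, two of your justifications do not hold as written. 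First, the application of Proposition \ref{S2process} ``at $t=0$'' is both unnecessary and incorrectly set up: with $X_{-1}:=F_0(x_0)$ and $H_0:=0$ the hypotheses of that proposition (that $X_{-1}$ is the value of the fictitious-node problem and that $H_0$ is an optimal strategy for it) fail for a generic $x_0$, and even when they hold the construction returns for $\widehat S_0$ a maximizer of $F_0$ on $\mathbb{S}_0$, not $x_0$; the identity $\widehat S_0=\bar x_0$ in the proof of Theorem \ref{ThmSuper} relies precisely on $\bar x_0$ being a maximizer of $F_0$. The displayed inequality you attribute to this step (involving $H_1$ and $F_1$) is also not what the proposition yields. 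The paper avoids all of this: it simply sets $\widehat S_0:=x_0$, takes $H_1$ an optimal selector of $\mathcal{H}_1(\cdot,x_0)$ (it exists when $F_0(x_0)$ is finite by Proposition \ref{Fmeascont}, and is replaced by a sequence $(y_n,H_n)$ with $y_n\to-\infty$ otherwise), and starts the iteration of Proposition \ref{S2process} at $t=1$.

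Second, the coverage of $\Omega_*$ cannot be obtained from ``$\Omega_*\subseteq\Omega_*(S)$ for a suitable $S$ by Proposition \ref{propSprocess}'': since $\mathcal{Q}_{S}\subseteq\mathcal{Q}$, the inclusion between these supports goes the other way, $\Omega_*(S)\subseteq\Omega_*$, and Proposition \ref{propSprocess} makes no maximality claim about its process. The paper instead works directly with the backward functions: the chained inequalities hold with finite values on $A:=\bigcap_{t}\{\omega\mid \exists s\in\mathbb{S}_t(\omega)\text{ with }F_t(\omega,s)>-\infty\}$, because the minimizer chosen in \eqref{argmin} automatically avoids the points where $F_t=-\infty$ whenever such points can be avoided, and $\Omega_*\subseteq A$ because a consistent price system charging $\omega$ forces $F_t(\omega,\cdot)>-\infty$ at some point of $\mathbb{S}_t(\omega)$ (the conditional one-period argument used in the proof of Proposition \ref{Fmeascont}, cf.\ Remark \ref{comments}). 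Your first, vaguer justification (finiteness of the relevant superhedging values along paths charged by some $Q\in\mathcal{Q}$) is the right idea, but it is this statement, not the inclusion $\Omega_*\subseteq\Omega_*(S)$, that has to be argued. With these two repairs, your telescoping argument, including the treatment of the final date via $H_{T+1}=0$, coincides with the paper's proof.
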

\begin{proof}
We postpone the proof to Section \ref{Secproofs}.
\end{proof}

\begin{remark}\label{remarkInf}
Observe that if $F_0(x_0)=-\infty$ then, from \eqref{frictionleassDominance}, the superhedging problem for any frictionless process $S=(S_t)_{t\in I}$ with $S_0=x_0$ has solution $-\infty$, from which $\mathcal{Q}_S=\varnothing$. 
\end{remark}
We can now conclude the proof of Theorem \ref{ThmSuper} as follows:

\begin{proof}[Proof of $(\geq)$ in \eqref{superHmulti} of Theorem \ref{ThmSuper}]
Let $F_0(x)$ be the solution of the superhedging problem in Definition \ref{defF}. Take $$m:=\sup_{x\in C_0} F_0(x).$$ Suppose first that $m=\infty$. There exists a sequence $x_n\in C_0$ such that $F_0(x_n)\rightarrow \infty$. From Proposition \ref{propSprocess} there exists a sequence of processes $S^n:=(S^n_t)_{t\in I}\subseteq \mathcal{S}$ whose (frictionless) superhedging price explode to $\infty$ and hence the inequality is trivial. If $m=-\infty$ then by Corollary \ref{corS2process} and \eqref{frictionleassDominance} the equality follows again trivially as a degenerate case: $\Omega_*=\varnothing$ (see Remark \ref{remarkInf}). If $m$ is finite then $m=\sup_{x\in \overline{D_{F_0}}} F_0(x)$. By Proposition \ref{Fmeascont} $F_0$ is non-random, continuous and $\overline{D_{F_0}}$ is a closed subset of a compact set $C_0$. Thus $m$ is a maximum and we denote by $\bar{x}_0$ a maximizer. By Proposition \ref{propSprocess} there exists a process $S:=(S_t)_{t\in I}$ with $S_0=\bar{x}_0$ whose superhedging price is $m$, namely,
\begin{equation}\label{finalStep1}
m=\inf\left\{x\in\R\mid\exists H\in\mathcal{H}\text{ s.t. }x+(H\circ S)_T(\omega)\geq g(\omega) \quad\forall\omega\in\Omega_*(S)\right\}=\sup_{Q\in\mathcal{Q}_{S}}E_Q[g]
\end{equation}
where the last equality derives from Theorem 1.1 in \cite{BFM15}.\\
On the other hand by adding a fictitious node $t=-1$ to the $\mathbb{S}$-superhedging problem in Definition \ref{defF}, with $\mathbb{S}_{-1}=\bar{x}_0$, we have that the minimization
$$\inf\left\{y\in\R\mid H\in \R^d\text{ s.t. }y+H\cdot(s-\bar{x}_0)\geq F_0(s)\quad \forall s\in \mathbb{S}_0,\right\}$$

has the obvious solution $X_{-1}=m$, with corresponding optimal strategy $H_{0}=0$. By applying Proposition \ref{S2process} we obtain $\hat{S}_0=\bar{x}_0$ (see also \eqref{argmin}) and $H_1$ such that $$H_1\cdot \bar{x}_0=\sum_{j=1}^d H^j_{1}\left(\oS^j_0\mathbf{1}_{\{0\leq H^j_{1}\}}+\uS^j_0\mathbf{1}_{\{H^j_{1}< 0\}}\right).$$ Apply now Corollary \ref{corS2process}, with $x_0=\bar{x}_0$, to get the existence of a trading strategy $(H_t)_{t\in I}$ such that (cfr equation \eqref{valueStrategy})
\begin{equation}\label{finalStep2}
m+V_T(H)(\omega)\geq g(\omega)\quad\forall \omega\in\Omega_*.
\end{equation}

The desired inequality follows from \eqref{finalStep1} and \eqref{finalStep2}:
\begin{equation*}
\sup_{Q\in\mathcal{Q}}\mathbb{E}_Q[g]= \sup_{\widetilde{S}\in\mathcal{S}}\sup_{Q\in\mathcal{Q}_{\widetilde{S}}}\mathbb{E}_Q[g] \geq  \sup_{Q\in\mathcal{Q}_{S}}\mathbb{E}_Q[g]= m\geq \overline{p}(g).
\end{equation*}
\end{proof}

\subsection{Proofs}\label{sec_proofs}
\label{Secproofs}

\begin{remark}\label{comments}Let us point out two simple facts that we will often use in the following proofs.\\ 
First note that if, for some $\omega\in\Omega$, there exists $v\in\R^d$ and $\varepsilon>0$ such that $v\cdot (s-x)\geq\varepsilon$ for every $s\in\mathbb{S}_{t+1}(\widetilde{\omega})$ and for every $\widetilde{\omega}\in \Sigma_t^\omega$ then $F_t(\omega,x)=-\infty$ since for every acceptable couple $(y,H)\in\mathcal{A}_t(\omega,x)$ we have $(y-\alpha\varepsilon,H+\alpha v)\in\mathcal{A}_t(\omega,x)$, $\forall \alpha>0$.\\
Second note that if $F_t(\omega,x)=-\infty$ then there exists a sequence $\{(y_n,H_n)\}\subseteq\mathcal{A}_t(\omega,x)$ with $y_n\rightarrow-\infty$. From \eqref{frictionleassDominance}, for \emph{any} $S_t$ with values in the bid-ask spread, the same sequence satisfies $y_n+H_n\cdot(S_t(\widetilde{\omega})-x)\geq F_{t+1}(\widetilde{\omega},S_t(\widetilde{\omega}))$ for all $\widetilde{\omega}\in\Sigma_{t-1}^{\omega}$. The infimum over $y$ is again $-\infty$.\\
\end{remark}

\begin{proof}[Proof of Proposition \ref{Fmeascont}]
For $t=T$ the claim is trivial. Suppose it is true for all $t+1\leq u \leq T-1$.

\bigskip

\textbf{\ref{item1cont}.} 
We first show that $\mathbb{S}_{t+1}$ takes values in the closure of the effective domain of $F_{t+1}(\omega,s)$. For $t=T-1$ there is nothing to show. From \eqref{IV}, any $s\in \mathbb{S}_{t+1}(\omega)$ is limit of convex combinations of elements in $ \mathbb{S}_{t+2}(\Sigma^\omega_{t+1})$. Let $s_n\rightarrow s$. For any $n\in\mathbb{N}$, there exist, without loss of generality: 
\begin{itemize}
\item $\omega_1,\ldots,\omega_{k(n)}$ with $\omega_i\in\Sigma^\omega_{t+1}$ for every $i$;
\item $z_1,\ldots,z_{k(n)}$ with $z_i\in \mathbb{S}_{t+2}(\omega_i)$ for every $i$;
\item $\lambda_1,\ldots\lambda_{k(n)}$, with $0< \lambda_i< 1$ for every $i$;
\end{itemize}
such that $s_n:=\sum_{i=1}^{k(n)}\lambda_i z_i$. Consider a frictionless, one-period model, on $\{z_1,\ldots,z_n\}$ with $S_0=s_n$, $S_1(z_i)=z_i$ for every $i$. $Q(\{z_i\}):=\lambda_i$ define a martingale measure for the process $S$. 

Denote by $\mathcal{M}(S)$ the set of martingale measures for $S$ and $\overline{p}_{S}(g)$ the (frictionless) superhedging price for $g(z_i):=F_{t+2}(\omega_i,z_i)$ in the one-period model.
From the classical theory $$-\infty<\sum_{i=1}^{k(n)}\lambda_i g(z_i)\leq\sup_{Q\in\mathcal{M}(S)}E_Q[g]=\overline{p}_{S}(g)\leq F_{t+1}(\omega, s_n) $$ where the last inequality follows from $F_{t+1}$ being the solution of the (conditional) $\mathbb{S}$-superhedging problem.
We thus have that $s_n\in D_{F_{t+1}}(\omega)$ for every $n$ and hence $s\in \overline{D_{F_{t+1}}(\omega)}$.\\

Observe now that, from the inductive hypothesis, $F_{t+1}$ is a Carath\'eodory map in its domain and since $\mathbb{S}_{t+1}$ takes value in $\overline{D_{F_{t+1}}}$ we can apply Corollary \ref{corSuperHmultif} in the Appendix with $u=t+1$, $X_{u-1}=x$, $X_u=\mathbb{S}_{u}$, $C=\R^d$, to get the measurability of
\begin{equation*}
A_C(\omega )=\left\{  (H,y)\in \R^{d+1}\mid y+H\cdot(s-x)\geq F_{t+1}(\omega,s)\quad \forall s\in \mathbb{S}_{t+1}(\widetilde{\omega}),\ \forall \widetilde{\omega
}\in \Sigma _{t}^{\omega }\right\}.
\end{equation*}
The measurable map $M_{t}$ from Corollary \ref{corSuperHmultif} represents, for any $\omega\in\Omega$ the minimum amount of cash needed for superhedging $F_{t+1}(\omega,s)$ for any $s\in \mathbb{S}_{t+1}(\widetilde{\omega})$, and hence it correspond to $F_t(\cdot,x)$.

\bigskip

\textbf{\ref{item2cont}.} We first show item \ref{item2cont}.\\
Fix $\omega\in\Omega$. If $D_{F_t}=\varnothing$ there is nothing to show. Denote by $$A(x):=\{H\in\R^d\mid H\cdot (s-x)\geq 0 \quad \forall s\in\mathbb{S}_{t+1}(\Sigma_{t}^\omega)\ \text{ with $> 0$ for some } \bar{s}\}.$$
We show that
the set $C:=\{x\in D_{F_t}(\omega)\mid A(x)=\varnothing\}$ is convex and $\overline{D_{F_t}(\omega)}=\overline{C}$ from which the thesis follows. Denote by $$\Gamma:=conv\{\mathbb{S}_{t+1}(\Sigma_t^\omega)\}.$$ Take now $x_1,x_2\in C$ and recall that, from Hyperplane separation Theorem, $A(x_i)=\varnothing$ if and only if $x_i\in ri(\Gamma)$. As $\Gamma$ is a convex set for any $0\leq \lambda\leq 1$, $\lambda x_1+(1-\lambda)x_2\in ri(\Gamma)$ and hence $A(\lambda x_1+(1-\lambda)x_2))=\varnothing$ from which $C$ is convex.

We now show that if $x\in D_{F_t}(\omega)$ then there exists a sequence $x_k\in C$ such that $x_k\rightarrow x$. Take $x\notin C$ otherwise is trivial. Note first that $x\in \overline{\Gamma}$ otherwise by Hyperplane separation Theorem there would exists $v\in\R^d$ and $\varepsilon>0$ such that $v\cdot (\mathbb{S}_{t+1}(\Sigma_t^\omega)-x)\geq\varepsilon$ which would give $x\notin D_{F_t}(\omega)$ (see also Remark \ref{comments}).\\
Take now $\widetilde{x}\in ri(\Gamma)$, for every $k\in\mathbb{N}$ set
$$x_k:=\left(1-\frac{1}{k}\right) x+\frac{\widetilde{x}}{k}\in ri(\Gamma)$$
clearly $x_k\rightarrow x$ as $k\rightarrow\infty$ and again from Hyperplane separation Theorem $x_k\in C$.\\

\bigskip

\textbf{\ref{item3cont}.} 
First observe that if there exists $\widetilde{x}$ such that $F_t(\omega,\widetilde{x})=+\infty$ then $F_t(\omega,\cdot)\equiv+\infty$ and hence: $D_{F_t}(\omega)=\R^d$ and $F_t(\omega,\cdot)$ is trivially continuous. Indeed, since $F_t(\omega,\widetilde{x})=+\infty$, for any $H\in\R^d$ there exists a sequence $\{(\omega_n,s_n)\}_{n\in\mathbb{N}}$ such that $H\cdot(s_n-\widetilde{x})-F_{t+1}(\omega_n,s_n)\rightarrow-\infty$. Therefore the same holds for the sequence $H\cdot(s_n-x)-F_{t+1}(\omega_n,s_n)$ with $x$ arbitrary. Thus, $F_t(\omega,x)=+\infty$. \\

We may now suppose that $F_t(\omega,\cdot)<+\infty$. We first show that $F(\omega,\cdot)$ is upper semi-continuous at $x\in\overline{D_{F_t}(\omega)}$.\\

For $x\in D_{F_t}(\omega)$, Corollary \ref{corSuperHmultif} in the Appendix implies that there exists an optimal strategy $H$ such that 
\begin{equation}\label{bestIneq}
F_t(\omega,x)+H\cdot(s-x)\geq F_{t+1}(\widetilde{\omega},s)\quad \forall s\in \mathbb{S}_{t+1}(\widetilde{\omega}),\ \forall \widetilde{\omega
}\in \Sigma _{t}^{\omega }.
\end{equation}
Let now $\{x_k\}_{k=1}^{\infty}$ such that $x_k\rightarrow x$ for $k\rightarrow\infty$. Observing that
$H\cdot(s-x)=H\cdot(s-x_k)+H\cdot(x_k-x)$
we get, from \eqref{bestIneq}, $F_t(\omega,x_k)\leq F_t(\omega,x)+H\cdot(x_k-x)$. By taking limits in both sides we can conclude that $F_t(\omega,\cdot)$ is upper semi-continuous:\\
\begin{equation}\label{uscontinuityF}
\limsup_{k\rightarrow\infty}F_t(\omega,x_k)\leq F_t(\omega,x).
\end{equation}
The case of $x\notin D_{F_t}(\omega)$ is similar. Since $F_t(\omega,x)=-\infty$ there exists a sequence $\{H_n\}$ such that \eqref{bestIneq} is satisfied with $(-n,H_n)$ replacing $(F_t(\omega,x),H)$. We analogously obtain $F_t(\omega,x_k)\leq -n+H_n\cdot(x_k-x)$. By taking the limit in $k$ in both sides we get $\limsup_{k\rightarrow\infty}F_t(\omega,x_k)\leq -n$ for any $n\in\mathbb{N}$, from which the upper semi-continuity follows.

\bigskip

We now turn to the lower semi-continuity. Let $x\in\overline{D_{F_t}(\omega)}$. If $x\notin D_{F_t}(\omega)$, that is, $F_t(\omega,x)=-\infty$, from the previous step we already have continuity. Suppose therefore $x\in D_{F_t}(\omega)$ and let $H$ an optimal strategy such that \eqref{bestIneq} is satisfied.

\bigskip

\textbf{case a)} If the inequality in \eqref{bestIneq} is actually an equality we have perfect replication and we can infer that for any $\widetilde{x}\in D_{F_t}(\omega)$ we have $F_t(\omega,\widetilde{x})= F_t(\omega,x)+H\cdot(\widetilde{x}-x)$. Indeed, observe first that by adding and subtracting $H\cdot(\widetilde{x}-x)$ in \eqref{bestIneq}, which holds with equality by assumption, we obtain
$$F_t(\omega,x)+H\cdot(\widetilde{x}-x)+H\cdot(s-\widetilde{x})= F_{t+1}(\widetilde{\omega},s)\quad \forall s\in \mathbb{S}_{t+1}(\widetilde{\omega}),\ \forall \widetilde{\omega
}\in \Sigma _{t}^{\omega },$$
from which, $F_t(\omega,\widetilde{x})\leq F_t(\omega,x)+H\cdot(\widetilde{x}-x)$. Suppose now that there exists a cheaper superhedging strategy $H_z$ with cost $z\in\R$. Namely, $(z,H_z)$ satisfies $l:=z-F_t(\omega,x)+H\cdot(\widetilde{x}-x)<0$ and
$$z+H_z\cdot(s-\widetilde{x})\geq F_{t+1}(\widetilde{\omega},s)\quad \forall s\in \mathbb{S}_{t+1}(\widetilde{\omega}),\ \forall \widetilde{\omega
}\in \Sigma _{t}^{\omega }.$$ By subtracting the previous equality we obtain $$(H_z-H)\cdot (\mathbb{S}_{t+1}(\Sigma_t^\omega)-\widetilde{x})\geq-l>0$$ from which $\widetilde{x}\notin D_{F_t}(\omega)$ (see also Remark \ref{comments}) and thus a contradiction.\\ By considering $\{x_k\}_{k=1}^{\infty}$ such that $x_k\rightarrow x$ we obtain $$\lim_{k\rightarrow\infty}F_t(\omega,x_k)=\lim_{k\rightarrow\infty}(F_t(\omega,x)+H\cdot(x_k-x))=F_t(\omega,x)$$
as desired.

\bigskip

\textbf{case b)}
Define \begin{equation}\label{eqG}
G_t(\omega,x):=\sup \left\{ y\in \mathbb{R}\mid \exists \,H\in \mathbb{R}%
^{d}:\;y+H\cdot (s-x)\leq F_{t+1}(\widetilde{\omega },s),\ \forall s\in \mathbb{S}_{t+1}(\widetilde{\omega}),\ \forall \widetilde{\omega
}\in \Sigma _{t}^{\omega }\right\}
\end{equation}%
and, for all $y\in\R$, the set
\begin{equation}\label{eqGamma}
\Gamma _{y}(x):= co\left(conv\left\{ \left[ s-x\ ;\ y-F_{t+1}(\widetilde{\omega },s)\right]\mid s\in \mathbb{S}_{t+1}(\widetilde{\omega}),\ \widetilde{\omega
}\in \Sigma _{t}^{\omega }\right\}\right)\subseteq \R^{d+1}.
\end{equation}
Note that $F_t(\omega,x)>G_t(\omega,x)$ otherwise there is perfect replication and we are back to case a).  Take $y\in (G_t(\omega,x),F_t(\omega,x))$ and note that necessarily $int(\Gamma _{y}(x))\neq \varnothing$ .
\\

If $0\in int(\Gamma _{y}(x))$ there exists $\bar{\varepsilon}>0$ such that for every $\varepsilon\leq\bar{\varepsilon}$, $B_{2\varepsilon}(0)\subseteq int(\Gamma_y(x))$. For any $z\in B_{\varepsilon}(0)$ of the form $z=(\widetilde{x},0)$ with $\widetilde{x}\in\R^d$, we have $0\in int(\Gamma _{y}(\widetilde{x}))$, hence, there is no non-zero $(H,h)\in \mathbb{R}%
^{d}\times \mathbb{R}$ , such that either
\begin{equation}\label{separationF}
h(y-F_{t+1}(\widetilde{\omega },s))+H\cdot (s-\widetilde{x})\geq 0\ \text{ or }\ h(y-F_{t+1}(\widetilde{\omega },s))+H\cdot (s-\widetilde{x})\leq 0
\end{equation}
is possible for every $s\in \mathbb{S}_{t+1}(\widetilde{\omega})$ and $\widetilde{\omega
}\in \Sigma _{t}^{\omega }$. In particular there is no $H\in
\mathbb{R}^{d}$ such that $y+H\cdot (s-\widetilde{x})\geq F_{t+1}(\widetilde{\omega },s)$ for every $s\in \mathbb{S}_{t+1}(\widetilde{\omega})$ and $\widetilde{\omega
}\in \Sigma _{t}^{\omega }$. Thus, $F_t(\omega,\widetilde{x})>y$. Since the same holds for every $\widetilde{x}$ such that $\|\widetilde{x}-x\|<\varepsilon$ with $\varepsilon$ arbitrary small, by considering a sequence $\{x_k\}_{k=1}^{\infty}$ such that $x_k\rightarrow x$ we have obtained $\liminf_{k\rightarrow\infty}F_t(\omega,x_k)>y$ for every $y\in (G_t(\omega,x),F_t(\omega,x))$. By taking the supremum over $y$ we have
\begin{equation}\label{lscF}
\liminf F_t(\omega,x_k)\geq F_t(\omega,x)
\end{equation} 
as desired.

\bigskip
If $0\notin int(\Gamma _{y}(x))$ there exists a separator $(H,h)\in \mathbb{R}%
^{d}\times \mathbb{R}$ such that \eqref{separationF} holds but since $y\in (G_t(\omega,x),F_t(\omega,x))$ we necessarily have $h=0$. Consider now a separator $\hat{H}:=(H,0)$ with $H\in\R^d$ and denote by $\hat{H}^{++}$, $\hat{H}^{+}$ the positive and non-negative half-spaces associated to $\hat{H}$. Analogously $\hat{H}^{--}$, $\hat{H}^{-}$. Define 
\begin{equation}\label{eqA}
A:=\{z\in\R^{d+1}\mid \hat{H}\cdot z=0\}\cap \overline{\Gamma_y(x)}.
\end{equation}

Observe that since $\Gamma _{y}(x)\subseteq \hat{H}^+$ and $0\in ri(A)$ from Lemma \ref{riA},  there exists $\bar{\varepsilon}>0$ such that for every $\varepsilon\leq\bar{\varepsilon}$, we have $B_{2\varepsilon}(0)\cap \hat{H}^{++}\subseteq int(\Gamma _{y}(x))$. As in case a) for every $z\in B_{\varepsilon}(0)\cap\hat{H}^{++}$ of the form $z=(\widetilde{x},0)$ we have $0\in int (\Gamma _{y}(\widetilde{x}))$. This implies $F_t(\omega,\widetilde{x})>y$. In order to conclude observe that if $(\widetilde{x},0)\in B_{\varepsilon}(0)\cap \hat{H}^-$ then $\widetilde{x}\notin ri(D_{F_t}(\omega))$. If indeed $\widetilde{x}$ is such that $H\cdot (\widetilde{x}-x)\leq 0$ then \begin{equation}\label{sep}
H\cdot (s-\widetilde{x})\geq 0\qquad \forall s\in \mathbb{S}_{t+1}(\widetilde{\omega}),\ \widetilde{\omega
}\in \Sigma _{t}^{\omega }.
\end{equation}
It is easy to see that in every neighbourhood of $\widetilde{x}$ there exists an element $\bar{x}$ for which, replacing $\widetilde{x}$ with $\bar{x}$ in \eqref{sep} the inequality is satisfied with a lower bound. Thus $\widetilde{x}$ is not in $D_{F_t}(\omega)$ (see also Remark \ref{comments}). 
\bigskip

We have therefore obtained that if a sequence $\{x_k\}_{k=1}^{\infty}\subseteq ri(D_{F_t}(\omega))$ satisfies $x_k\rightarrow x$ then \eqref{lscF} holds and hence, also in case b), the thesis.
\end{proof}

\begin{lemma}\label{riA}
Let $\hat{H},x\in\R^d$, $\omega\in\Omega$ be given. Let $G_t(\omega,x)$ and $\Gamma_y(x)$ from \eqref{eqG} and \eqref{eqGamma} respectively with $y\in (G_t(\omega,x),F_t(\omega,x))$. Let $A$ from \eqref{eqA}, then $0\in ri(A)$.
\end{lemma}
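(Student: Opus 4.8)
The plan is to transfer the question to the closed convex cone $K:=\overline{\Gamma_y(x)}\subseteq\R^{d+1}$ (it is the closure of the convex cone $\Gamma_y(x)=co(conv(\cdots))$, hence a closed convex cone containing $0$) and to exploit its dual cone $K^*$; recall that in the relevant situation $\hat H=(H,0)$ with $0\neq H\in\R^d$ and $\Gamma_y(x)\subseteq\hat H^+$, i.e.\ $\hat H\in K^*$. First I would use the two-sided bound $G_t(\omega,x)<y<F_t(\omega,x)$ to show that every element of $K^*$ has vanishing last coordinate, $K^*\subseteq\{v\in\R^{d+1}\mid v_{d+1}=0\}$: if some $(W,w)\in K^*$ had $w>0$, dividing by $w$ would give $H'\in\R^d$ with $y+H'\cdot(s-x)\geq F_{t+1}(\widetilde{\omega},s)$ for all $s\in\mathbb{S}_{t+1}(\widetilde{\omega})$, $\widetilde{\omega}\in\Sigma_t^\omega$, that is $\mathcal{H}_{t+1}^y(\omega,x)\neq\varnothing$, contradicting $y<F_t(\omega,x)$, and symmetrically $w<0$ contradicts $y>G_t(\omega,x)$ through \eqref{eqG}. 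By the bipolar theorem $K=K^{**}\supseteq(\{v\mid v_{d+1}=0\})^*=\R e_{d+1}$, and since $K$ is a convex cone this gives $K+\R e_{d+1}=K$; hence $K=K'\times\R$ with $K':=\{z'\in\R^d\mid (z',0)\in K\}$ a closed convex cone, and consequently $K^*=(K')^*\times\{0\}$.

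Next, since $\hat H=(H,0)$ we have $\hat H^\perp=H^\perp\times\R$ (with $H^\perp:=\{z'\in\R^d\mid H\cdot z'=0\}$), so $A=K\cap\hat H^\perp=(K'\cap H^\perp)\times\R$; thus $0\in ri(A)$ if and only if $0\in ri(K'\cap H^\perp)$, which, $K'\cap H^\perp$ being a convex cone, is equivalent to $K'\cap H^\perp$ being a linear subspace of $\R^d$. To obtain this I would invoke the elementary fact that a supporting functional chosen in the relative interior of the dual cone exposes exactly the lineality space: if $H\in ri((K')^*)$ and $z\in K'\cap H^\perp$ while $-z\notin K'$, then separating $-z$ from the closed convex cone $K'$ yields $v\in(K')^*$ with $v\cdot z>0$, and then $H-\delta v\in(K')^*$ for all small $\delta>0$ (moving from the relative-interior point $H$ towards $-v$), while $(H-\delta v)\cdot z=-\delta\,v\cdot z<0$ contradicts $z\in K'$; hence $K'\cap H^\perp=K'\cap(-K')=:L(K')$, which is a subspace. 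Recalling that the Hahn--Banach separator of $0$ from $\Gamma_y(x)$ produced in the proof of Proposition~\ref{Fmeascont} may be taken in $ri(K^*)\setminus\{0\}$ (any nonzero element of $ri(K^*)$ is still a valid separator), one concludes $A=L(K')\times\R$, a linear subspace, and therefore $0\in ri(A)$.

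The step I expect to be the real obstacle is the last one: for a \emph{generic} $\hat H\in K^*\setminus\{0\}$ the set $K'\cap H^\perp$ is only a face of $K'$ and may be a one-sided face rather than a subspace, so one genuinely needs the separator to be non-degenerate — chosen in the relative interior of $K^*$ — or else to use extra structure from the application. In the lower semicontinuity argument of Proposition~\ref{Fmeascont}, for instance, one may reduce to $x\in ri(D_{F_t}(\omega))$, and then $0\in ri(conv\{s-x\mid s\in\mathbb{S}_{t+1}(\Sigma_t^\omega)\})$ forces $K'$ to be a subspace outright, making $0\in ri(A)$ immediate. By contrast, the first step above is routine, but it is exactly where the two-sided localization $G_t(\omega,x)<y<F_t(\omega,x)$ is used in an essential way — it is what pins $e_{d+1}$ into the lineality space of $K$.
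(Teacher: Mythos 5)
Your dual-cone computation is correct as far as it goes, and it is a genuinely different route from the paper's. The paper argues by contradiction: if some direction $r$ with $\hat H\cdot r=0$ has $\alpha r\notin A$ for all $\alpha>0$, then $r$ has positive distance from the closed cone $\overline{\Gamma_y(x)}$, the cone generated by a small ball around $r$ is disjoint from $\overline{\Gamma_y(x)}$, and separating these two sets is claimed to yield $(\widetilde H,\widetilde h)$ with $\widetilde h\neq 0$ satisfying the inequality displayed in the paper's proof, which contradicts $y\in (G_t(\omega,x),F_t(\omega,x))$. Your first step is the same two-sided use of $G_t<y<F_t$, but packaged as the statement $K^*\subseteq\{v\in\R^{d+1}\mid v_{d+1}=0\}$, which you then feed into the bipolar theorem to split off the $\R e_{d+1}$ factor and reduce the lemma to a statement about the face $K'\cap H^{\perp}$ of the projected cone $K'$; those reductions, and the lineality-space argument for $H\in ri((K')^*)$, are all correct.

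However, as you yourself flag, this does not prove the lemma as stated: the statement takes $\hat H$ as given (in Proposition \ref{Fmeascont} it is an arbitrary separator with vanishing last component), whereas your argument needs $\hat H=(H,0)$ with $H\in ri((K')^*)$, i.e.\ $\hat H\in ri(K^*)$, and you defer to the application the claim that the separator ``may be taken'' there. This is a genuine gap relative to the statement under review, and your own analysis shows it cannot be waved away: for a boundary element of $(K')^*$ the face $K'\cap H^\perp$ can be a genuine ray (take $K'$ a quadrant in $\R^2$ and $H=(1,0)$; such a configuration is compatible with $K^*\subseteq\{v_{d+1}=0\}$ and $G_t<y<F_t$, e.g.\ via sequences $s_n\rightarrow x$ along which $y-F_{t+1}$ keeps either sign), and then $0\notin ri(A)$. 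Note that this is precisely where the paper's own argument is thinnest: any $(\widetilde H,\widetilde h)$ satisfying its separation inequality lies in $K^*$ and hence, by your step 1, has $\widetilde h=0$, so the assertion that the separator can be chosen with $\widetilde h\neq 0$ is not justified for an arbitrary given $\hat H$. In other words, your restriction to $\hat H\in ri(K^*)$ is a repair of the statement rather than a proof of it; if you adopt it you must also propagate the choice into the proof of Proposition \ref{Fmeascont}, where in addition the subsequent inclusion $B_{2\varepsilon}(0)\cap\hat H^{++}\subseteq int(\Gamma_y(x))$ uses more than $0\in ri(A)$ (it amounts to $\hat H^{\perp}\subseteq\overline{\Gamma_y(x)}$, which even $\hat H\in ri(K^*)$ does not deliver); your alternative reduction to $x\in ri(D_{F_t}(\omega))$, which makes $K'$ a linear subspace outright, is the more robust way to close that loop.
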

\begin{proof}
Suppose by contradiction that there exists $r\in\R^{d+1}$ such that $\hat{H}\cdot r=0$ and $\alpha r \notin A$ for every $\alpha>0$. Note that from $r\notin A$ we have $dist(r,\overline{\Gamma_y(x)})>0$ so that there exists $\delta>0$ such that $B_\delta(r) \cap \overline{\Gamma_y(x)}=\varnothing$. Since $\overline{\Gamma_y(x)}$ is a cone we can conclude that the segment $[0,\widetilde{r}]$ with $\widetilde{r}\in B_\delta(r)$ has empty intersection with $\overline{\Gamma_y(x)}$. Since obviously $0\in \cup_{0\leq\alpha\leq 1}\alpha B_{\delta}(r)$ we can infer that there exists $(\widetilde{H},\widetilde{h})$ with $\widetilde{h}\neq 0$ such that $$\widetilde{h}(y-F_{t+1}(\widetilde{\omega},s))+\widetilde{H}\cdot (s-x)\geq 0\qquad \forall s\in \mathbb{S}_{t+1}(\widetilde{\omega}),\ \widetilde{\omega
}\in \Sigma _{t}^{\omega }\ ,$$ 
which is a contradiction since $y\in (G_t(\omega,x),F_t(\omega,x))$.

\end{proof}

\begin{remark}
Observe that from the proof of Proposition \ref{Fmeascont} we actually obtained that $F_t(\omega,\cdot)$ is upper semi-continuous in the whole space $\mathbb{R}^d$ and note only on $\overline{D_{F_t}(\omega)}$. Note, moreover, that for showing the lower semi-continuity one could argue that $F_t(\omega,x)\leq F_t(\omega,x_k)+H_k\cdot(x-x_k)$, where $H_k$ is an optimal strategy associated to $F_t(\omega,x_k)$, and then take the limit. Nevertheless in order to conclude that $F_t(\omega,\cdot)$ is lower semi-continuous we would need, for instance, that the sequence $\{H_k\}$ is bounded, which in general cannot be guaranteed.

\end{remark}

\begin{proof}[Proof of Lemma \ref{Sprocess}]
Since $S_{t-1}$ is given, simply denote by $\mathcal{H}_t$ the random set $\mathcal{H}_t(\cdot,S_{t-1}(\cdot))$ which is \Ft-measurable as it coincides with $\mathcal{H}^M_u$ from Corollary \ref{corSuperHmultif} in the Appendix with $u=t$, $X_{u-1}=S_{u-1}$, $X_u=\mathbb{S}_u$, $C=\R^d$.\\ Note that on $\{X_{t-1}=-\infty\}$ the claim is trivial by \eqref{frictionleassDominance} (see also Remark \ref{comments}). Suppose therefore $X_{t-1}>-\infty$ which implies $\mathcal{H}_t\neq\varnothing$.
Define $$A_1(\omega):=\left\{(y,x)\in\R\times\R^{d}\mid (1,H)\cdot(y,x)=0\quad \forall H\in\mathcal{H}_t \right\},$$which is \Ft-measurable as it can be obtained as $(\{1\}\times\mathcal{H}_t)^*\cap -(\{1\}\times\mathcal{H}_t)^*$ (recall Notation \ref{eps-dual}). Define also
$$A_2(\omega):=\left\{\left(X_{t-1}(\omega)-F_{t}(\omega,s),\ s-S_{t-1}(\omega)\right)\mid s\in\overline{\mathbb{S}_t(\Sigma_{t-1}^{\omega})}\right\},$$
which is \Ftt-measurable being composition of the Carath\'eodory map $(\omega,x)\mapsto (X_{t-1}(\omega)-F_{t}(\omega,x),x-S_{t-1}(\omega))$ and the measurable random set $\overline{\mathbb{S}_t(\Sigma_{t-1}^{\omega})}$ (see also Corollary \ref{corTheta_Meas}). Define finally
$$A(\omega):=A_1(\omega)\cap A_2(\omega).$$
Every $a\in A$ is of the form $a=(X_{t-1}(\omega)-F_{t}(\omega,s),s-S_{t-1}(\omega))$ for some $s\in\overline{\mathbb{S}_t(\Sigma_{t-1}^{\omega})}$, and satisfies $X_{t-1}(\omega)+ H\cdot(s-S_{t-1}(\omega))=F_{t}(\omega,s)$ for every $H\in\mathcal{H}_t$. Note now that $A$ is closed-valued and $0\in ri(conv(A))$: if this is not the case then $A$ can be strictly separated from $\{0\}$ and $X_{t-1}(\omega)-F_{t}(\omega,s)+ \widetilde{H}\cdot(s-S_{t-1}(\omega))\geq \varepsilon>0$ for some $\widetilde{H}\in\R^d$ and hence $X_{t-1}$ is not optimal.\\
We now show that we can construct an \Ftt-measurable random vector $S_t$ such that the analogous set
$$A_2^S(\omega):=\left\{\left(X_{t-1}(\widetilde{\omega})-F_{t}(\widetilde{\omega},S_t(\widetilde{\omega})),\ \Delta S_t(\widetilde{\omega})\right)\mid \widetilde{\omega}\in \Sigma_{t-1}^{\omega}\right\},$$
where $\Delta S_t:= S_t-S_{t-1}$, satisfies $0\in ri(conv(A_1\cap A_2^S))$. For the same reason $X_{t-1}$ is the (conditional) frictionless superhedging price.

\bigskip

Take $\bar{d}:=d+1$ for simplicity of notation. We first extract an \Ftt-measurable collection $\{a_j\}_{j=1}^{2\bar{d}^2}\subseteq A$ and $\lambda:\Omega\mapsto\R^{2\bar{d}^2}$ \Ftt-measurable, such that,
\begin{equation}\label{conv0}
0=\sum_{j=1}^{n} \lambda_j(\omega)a_j(\omega)
\end{equation}
and $co(conv(\{a_j(\omega)\}_{j=1}^{2\bar{d}^2}))=lin A(\omega)$, which implies $0\in ri(\{a_j(\omega)\}_{j=1}^{2\bar{d}^2})$ (recall Notation \ref{notation}).\\ 
By denoting $\Delta^{2\bar{d}}$ the simplex in $\R^{2\bar{d}}$, define the function $L:\Omega\times\R^{\bar{d}\times 2\bar{d}}\times\Delta^{2\bar{d}}\mapsto\R^{\bar{d}}$ as 
$$L(\omega,x_1,\ldots,x_{2\bar{d}},\lambda):=\sum_{i=1}^{2\bar{d}}\lambda_i x_i\quad \text{for }\lambda\in\Delta^{2\bar{d}},\ x_i\in\R^{\bar{d}},\ \forall i=1,\ldots,2\bar{d}\ .$$ $L$ is a Carath\'eodory map since it does not depend on $\omega$ and is continuous in $(x_1,\ldots, x_{2\bar{d}},\lambda)$. Denote $A^ {2\bar{d}}$ the Cartesian product of $2\bar{d}$ copies of $A$ and $Y^1:=A^ {2\bar{d}}\times \Delta^{2\bar{d}}$. From Proposition \ref{preservation_measurability} in the Appendix, $Y^1$ is \Ftt-measurable and closed-valued. From the implicit map Theorem (Theorem \ref{implicit} with  $D(\omega)=\{0\}\subset\R^{\bar{d}}$) and from $0\in ri(conv (A))$ 
there exists $B^1:=\{a^1_1,\ldots,a^1_{2\bar{d}}\}$ and $\lambda^1:\Omega\mapsto\R^{2\bar{d}}$ \Ftt-measurable such that \eqref{conv0} is satisfied.\\

Note however that we might have $dim(B^1)<dim(A)$.  We iterate the process as follows. Suppose we are given $B^1,\ldots,B^{k-1}$ for $k\geq 2$. Consider the following closed-valued random set $$D^k(\omega):=lin\left\{a^i_j\mathbf{1}_{\lambda^i_j> 0}\mid j=1,\ldots,2\bar{d},\ i=1,\ldots, k-1\right\},$$ which is $\Ftt$-measurable by Proposition \ref{preservation_measurability}. Our aim is to find a set of vectors $B^k$ in $A\setminus D^{k}$ whose convex combination is in $D^{k}$. This implies that, together with the vectors in $B^1,\ldots,B^{k-1}$, they satisfy \eqref{conv0}. Let $B_{1/n}(0)$ be the open ball of radius $1/n$ with center in $0$. Since $A(\omega)\setminus D^{k}$ is not closed-valued, for any $n\in\mathbb{N}$, we define $A_n(\omega):=A(\omega)\setminus (D^{k}+B_{1/n}(0))$ which is closed-valued and measurable from Proposition \ref{preservation_measurability} and Lemma \ref{compl}. We define, moreover, $Y^k_n:=A_n^ {2\bar{d}}\times \Delta^{2\bar{d}}$ which is also $\Ftt$-measurable and closed-valued. Applying Theorem \ref{implicit} we obtain,
\begin{equation*}
E^k_n:=\left\{\omega\in\Omega\mid \exists y\in Y^k(\omega)\text{ with } L(\omega,y)\in D^k(\omega) \right\}
\end{equation*}
is $\Ftt$-measurable and there exists a measurable function $y_n:E^k_n\mapsto\R^{\bar{d}\times 2\bar{d}}$ such that
$$y_n(\omega)\in Y^k(\omega)\quad\text{and}\quad L(\omega,y_n(\omega))\in D^k(\omega)\quad \forall\omega\in E^k_n\ .$$
Note that for every $\omega\in\Omega$ there exist a finite number of elements whose convex combination belongs to $D^{k}$ or equivalently, there exists $n\in\mathbb{N}$ such that $\omega\in E^k_n$. We therefore have $\cup_{n\in\mathbb{N}}E^k_n=\Omega$ with $\hat{E}_n:=\cup_{i=1}^{n}E^k_i$ increasing in $n$. Thus $y:=\sum_{n\in\mathbb{N}}y_n\mathbf{1}_{\hat{E}_n\setminus \hat{E}_{n-1}}$ is well defined on $\Omega$. By taking $B^{k}:=\{a^k_1,\ldots,a^k_{2\bar{d}}\}$ the first $2\bar{d}$ components of $y$ in $\R^{\bar{d}}$ and $\lambda^k$ the last $\R^{2\bar{d}}$ component, we have
\begin{equation*}
\sum_{j=1}^{2\bar{d}} \lambda^k_j(\omega)a^k_j(\omega)\in D^k(\omega)
\end{equation*}
and hence \eqref{conv0} is satisfied for $B^1,\ldots,B^{k}$. Since, for every $k$, $B^k\cap lin(\cup_{i=1}^{k-1} B^j)=\varnothing$ we have that $dim (\cup_{i=1}^{k} B^j)$ is increasing in $k$ and therefore the procedure ends after $\bar{d}$ steps. Note also that, in $\R^{\bar{d}}$, $2\bar{d}$ elements are sufficient for \eqref{conv0} to hold. Hence we can take, after $\bar{d}$ steps, the $2\bar{d}^2$ elements $B_1,\ldots, B_{\bar{d}}$, from the above procedure, with the corresponding vector of coefficients $\lambda$ in $\R^{2\bar{d}^2}$ (which might have some $0$ components). \\

We are only left to construct the random vector $S_t$. Let now $s_j$ such that $a_j(\omega)=(X_{t-1}(\omega)-F_t(\omega,s_j),s_j-S_{t-1}(\omega))\in\R\times\R^{d}$. For any $j=1,\ldots 2\bar{d}$ on $\{s_j\in\mathbb{S}_t\}$ we may simply take $s_j$. If this is not possible $s_j$ is obtained as a limit of elements in $\mathbb{S}_t$. We can treat both cases simultaneously by defining, for any $n\in\mathbb{N}$, $X_j^n$ a measurable selector of $V_j^n:=s_j+\overline{B}_{\frac{1}{n}}\cap \mathbb{S}_t$ which is defined on $\{V_j^n\neq\varnothing\}$.\\

Note that there might exist $1\leq i,j\leq 2\bar{d}$, $n\in\mathbb{N}$ such that $X_i^n(\omega),X_j^n(\omega)\in\mathbb{S}_t(\omega)$ for the same $\omega\in\Omega$. By recalling that $\mathbb{S}_t$ is a convex set, we only need to replace $X_i^n,X_j^n$ with a suitable convex combination. Define
$$\widetilde{S}^n_t:=\sum_{j=1}^{n}\frac{\lambda_j}{\sum_{j=1}^{n}\lambda_j}  X_j^n\ ,$$ with $\lambda$ from the above procedure. Note finally that since $X^n_j$ is only defined on $\{V_j^n\neq\varnothing\}$ we need to take care of well-posedness when constructing $S_t$. Consider $\hat{S}_t$ 
an arbitrary measurable selector of $\mathbb{S}_t$ and set $S^1_t:=\widetilde{S}^1_t\mathbf{1}_{\cup_{j=1}^{\bar{d}} \{V_j^1\neq\varnothing\}}+\hat{S}_t\mathbf{1}_{(\cup_{j=1}^{\bar{d}} \{V_j^1\neq\varnothing\})^c}$. For $n> 1$ set $S^n_t:=\widetilde{S}^n_t\mathbf{1}_{\cup_{j=1}^{\bar{d}}\{ V_j^n\neq\varnothing\}}+S^{n-1}_t\mathbf{1}_{(\cup_{j=1}^{\bar{d}} \{V_j^n\neq\varnothing\})^c}$. The desired random vector is thus $S_t:=\lim_{n\rightarrow\infty}S^n_t$.
\end{proof}

\begin{proof}[Proof of Proposition \ref{propSprocess}]
Start with $S_0:=x_0$ and suppose first $F_0(x_0)<\infty$, which implies $F_t(\omega,\cdot)<\infty$ for any $t\in I$ and for any $\omega\in\Omega$. From \eqref{frictionleassDominance} if $F_0(x_0)=-\infty$ then the claim is trivial (see also Remark \ref{comments}). Suppose therefore $F_0(x_0)>-\infty$. Let $H_1$ be an optimal strategy for the $\mathbb{S}$-superhedging problem. From Lemma \ref{Sprocess} there exists $S_1$ such that $F_0(x_0)+H_1(\omega)\cdot\Delta S_1(\omega)\geq F_1(\omega,S_1(\omega))$ for every $\omega\in\Omega$. The random set $\mathcal{H}_2(\cdot,S_{1}(\cdot))$ is $\F_1$-measurable as it coincides with $\mathcal{H}^M_u$ from Corollary \ref{corSuperHmultif} in the Appendix with $u=2$, $X_{u-1}=S_{u-1}$, $X_u=\mathbb{S}_u$, $C=\R^d$. Let $H_2$ a measurable selector. Applying iteratively Lemma \ref{Sprocess} and Corollary \ref{corSuperHmultif} we get the inequalities
\begin{eqnarray*}
F_0(x_0)+H_{1}(\omega)\cdot \Delta S_{1}(\omega) &\geq & F_1(\omega,S_1(\omega)) \ ,\\
F_0(x_0)+H_{1}(\omega)\cdot \Delta S_{1}(\omega)+H_{2}\cdot \Delta S_{2}(\omega) &\geq & F_2(\omega,S_2(\omega))\ , \\
&\ldots & \\
F_0(x_0)+\sum_{t=1}^TH_{t}(\omega)\cdot \Delta S_{t}(\omega) &\geq & F_T(\omega,S_T(\omega))=g(\omega)\ , \\
\end{eqnarray*}
for some $S_1,\ldots, S_T$, $H_1,\ldots H_T$, and for every $\omega\in A$ with $A:=\{\omega\in \Omega\mid F_t(\omega,S_t(\omega))>-\infty\ \forall t=0,\ldots T\}$. Note that, by construction, $F_t(\omega,S_t(\omega))=-\infty$ for some $t=0,\ldots T$ if and only if $Q(\{\omega\})=0$ for every $Q\in\mathcal{Q}_S$, so that $A=\Omega_*(S)$.
 $F_0(x_0)$ is the cheapest super-hedge from the minimality of $F_{t}(\cdot,S_t(\cdot))$ for $t=0,\ldots T$. Obviously $S$ belongs to the bid-ask spread since $S_t\in\mathbb{S}_t$ for every $t$.\\

Suppose now that $F_0(x_0)=\infty$. Recall that, as in the proof of \ref{Fmeascont}, if $F_{s}(\omega,x)=\infty$ for some $s\in I$, $x\in\R^d$ then $F_{s}(\omega,\cdot)\equiv\infty$. Let $t:=\min\{s\in I\mid F_{s}(\omega,\cdot)<\infty\  \forall\omega\in\Omega\}\geq 1$.\\ Choose arbitrarily $S_u\in\mathcal{L}^0(\mathcal{F}_u;C_u)$ for $u=0,\ldots,t-1,t+1,\ldots T$, we need to define $S_t$.\\
Fix $\omega\in\Omega$ such that $F_{t-1}(\omega,\cdot)\equiv\infty$. For all $y\in\R$, consider the set
\begin{equation*}
\Gamma _{y}(\mathbb{S}_t):= co\left(conv\left\{ \left[ s-S_{t-1}(\widetilde{\omega})\ ;\ y-F_{t}(\widetilde{\omega },s)\right]\mid s\in \mathbb{S}_{t}(\widetilde{\omega}),\ \widetilde{\omega
}\in \Sigma _{t-1}^{\omega }\right\}\right)\subseteq \R^{d+1}.
\end{equation*}
Observe first that if for a finite set $\{\omega_1,\ldots \omega_k\}\subseteq\Sigma_{t-1}^{\omega}$ (or for the empty set) we have $0\notin int(\Gamma _{y}(\mathbb{S}_t\setminus U))$ with $U:=\{\mathbb{S}_t(\omega_1),\ldots \mathbb{S}_t(\omega_k)\}$
 then there exists $(H,h)\setminus(0,0)\in \mathbb{R}^{d}\times \mathbb{R}$ , such that
\begin{equation}\label{seprationSH}
h(y-F_{t}(\widetilde{\omega },s))+H\cdot (s-S_{t-1}(\widetilde{\omega}))\geq 0.
\end{equation}
If $h>0$ then $y+H/h\cdot (s-S_{t-1}(\widetilde{\omega}))\geq F_{t}(\widetilde{\omega },s)$ for all such $s$. From the continuity of $F_t(\omega,\cdot)$ (see Proposition \ref{Fmeascont}) and from $\mathbb{S}_{t}$ being closed and bounded we have that the quantities 
\begin{equation}\label{Noinf}
l_j:=\min\{y+H/h\cdot (s-S_{t-1}(\widetilde{\omega}))-F_{t}(\omega_j,s)\mid s\in\mathbb{S}_t(\omega_j)\}<0,\qquad l:=-\min_j l_j
\end{equation}
are well defined and finite. Observe now that $(y+l,H/h)$ solves the $\mathbb{S}$-superhedging problem of Definition \ref{defF} which is a contradiction since $F_{t-1}(\omega,S_{t-1}(\widetilde{\omega}))=\infty$.\\

Start with $y_1\in\R$. Since $\Gamma _{y_1}(\mathbb{S}_t)\subseteq \R^{d+1}$, there exist a finite number of vectors $U_1:=\{s_1,\ldots,s_{k_1}\}\subseteq \overline{\mathbb{S}_t(\Sigma_{t-1}^{\omega})}$ such that $\overline{\Gamma _{y_1}(U_1)}=\overline{\Gamma _{y_1}(\mathbb{S}_t)}$. In particular, from the above discussion, if \eqref{seprationSH} is satisfied for every $s\in U_1$ then $h\leq 0$.

For any $j=1,\ldots, k_1$, $s_j=\lim_{n\rightarrow\infty}s_j^n$ for some $s_j^n\in\mathbb{S}_t(\omega_j^n)$. If $s_j^n$ eventually belong to $\mathbb{S}_{t}(\omega_j)$ for some $\omega_j$, the sequence $s_j^n$ can be taken constantly equal to $s_j$ since $\mathbb{S}_{t}(\omega_j)$ is closed. Moreover, with no loss of generality, if $s_i,s_j\in \mathbb{S}_t(\Sigma_{t-1}^{\omega})$ we may suppose that the corresponding $\omega_i$, $\omega_j$ satisfy $\mathbb{S}_{t}(\omega_i)\neq\mathbb{S}_{t}(\omega_j)$ for $i\neq j$. Indeed, by the previous considerations, having $s_1,\ldots, s_l$ it is possible to find $s_{l+1}$ in $\mathbb{S}_t(\Sigma _{t-1}^{\omega })\setminus \{\mathbb{S}_t(\omega_1),\ldots \mathbb{S}_t(\omega_l)\}$ (see the discussion for \eqref{Noinf}). If $s_i\in  \overline{\mathbb{S}_t(\Sigma_{t-1}^{\omega})}\setminus \mathbb{S}_t(\Sigma_{t-1}^{\omega})$ we may suppose that $s_i^n\in \mathbb{S}_{t}(\omega_i^n)$ with $\omega_i^n\neq\omega_j^m$ for any $m\neq n$, $j\neq i$.\\ Let $E_1:=\cup_{j=1}^{k_1}\cup_{n=1}^{\infty}\{\omega_j^n\mid s_j^n\in \mathbb{S}_t(\omega_j^n)\}$ and set $$S_t^1(\widetilde{\omega}):=\begin{cases}s^n_j&\text{ if }\widetilde{\omega}\in\Sigma_t^{\omega_j^n}\\ \hat{S}_t(\omega)&\text{ otherwise }\end{cases}$$ with $\hat{S}_t\in\mathcal{L}^0(\mathcal{F}_t;C_t)$ arbitrary. $S_t^1$ has the same measurability of $\hat{S}_t$ since they coincide up to an union of countably many measurable sets. Note that by construction $\overline{\Gamma_{y_1}(S^1_t)}=\overline{\Gamma_{y_1}(\mathbb{S}_t)}$ and hence, as in the discussion for \eqref{seprationSH} and \eqref{Noinf}, it is not possible to separate $\{0\}$ ans $\Gamma_{y_1}(S^1_t)$ with $(H,h)$ such that $h>0$. We thus have,
\begin{equation}\label{condSH}
\inf\{x\in\R\mid x+H\cdot(S^1_{t}(\widetilde{\omega})-S_{t-1}(\widetilde{\omega}))\geq F_t(\widetilde{\omega},S_{t}(\widetilde{\omega}))\quad\forall \widetilde{\omega}\in\Sigma_{t-1}^\omega\}\geq y_1.
\end{equation}
Define now $y_n:=y_1+n$. For any $n\in\mathbb{N}$ we can apply the same procedure which yields a collection $\{S^{n}_t\}_{n\in\mathbb{N}}$ with the property that \eqref{condSH} is satisfied with $S^{n}_t$ and $y_n$. Moreover with no loss of generality we can choose $U_{n+1}\supseteq U_{n}$ and hence $E_{n+1}\supseteq E_n$ in order to have $S^{n+1}_t=S^{n}_t$ on $E_n$. We therefore have that $S_t:=\lim_{n\rightarrow\infty} S^n_t$ is well defined and $$
\inf\{x\in\R\mid x+H\cdot(S_{t}(\widetilde{\omega})-S_{t-1}(\widetilde{\omega}))\geq F_t(\widetilde{\omega},S_{t}(\widetilde{\omega}))\quad\forall \widetilde{\omega}\in\Sigma_{t-1}^\omega\}\geq \sup_n y_n=\infty.$$
Since $F_t(\cdot,S_t(\cdot))$ is the (conditional) cheapest amount for superhedging $g$ at time $T$ we have that the superhedging price of $g$ for the process $S$ is infinite.
 \end{proof}

\begin{proof}[Proof of Proposition \ref{S2process}] 

Note first that the function $G:\Omega\times\R^d\mapsto\R$ defined by
\begin{equation}
G(\omega,x):=X_{t-1}(\omega)+H_{t}(\omega)\cdot(x-S_{t-1}(\omega))-F_t(\omega,x)
\end{equation}
is a Carath\'eodory map and since $\mathbb{S}_t$ is a closed valued \Ftt-measurable set, the set
\begin{equation}\label{defMinimum}
Y_{t}(\omega):=\inf \left\{X_{t-1}(\omega)+H_t(\omega)\cdot(s-S_{t-1}(\omega))-F_t(\omega,s)\mid s\in\mathbb{S}_t(\omega)\right\}
\end{equation}
is \Ftt-measurable from Lemma \ref{measInf} and Lemma \ref{cara} in the Appendix. From Theorem \ref{implicit}, the set $E:=\{\omega\in\Omega\mid \exists x\in \mathbb{S}_t(\omega)\text{ with } G(\omega,x)=Y_t(\omega)\}$ is \Ftt-measurable and there exists a measurable function $m:E\mapsto\R^d$ such that
\begin{equation}\label{argmin}
m(\omega)\in \mathbb{S}_t(\omega),\qquad G(\omega,m(\omega))=Y_t(\omega) ,\qquad \forall\omega\in E= \{|Y_t|<\infty\}.
\end{equation}
Note that on $\{|Y_t|=\infty\}$ we have $|F_t(\omega,\cdot)|\equiv \infty$ and hence the random vectors $S_t$ and $H_{t+1}$ can be chosen arbitrarily. In particular they can be chosen to satisfy the desired properties. We may therefore suppose, without loss of generality, that $Y_t(\omega)$ is a minimum for every $\omega\in\Omega$.

We now show that there exists $H_{t+1}\in\Ltt$ such that, for any $\omega\in\Omega$, $H_{t+1}(\omega)\in \mathcal{H}_{t+1}(\omega, m(\omega))$ and
\begin{itemize}
\item if $H_t^i(\omega)< H^i_{t+1}(\omega)$ then $m^i(\omega)=\oS^i(\omega)$ ;
\item if $H_t^i(\omega)> H^i_{t+1}(\omega)$ then $m^i(\omega)=\uS^i(\omega)$ ;
\end{itemize}
and hence the desired random vector is $\widehat{S}_{t}:=m$. The desired strategy $H_{t+1}$ is obtained by taking any measurable selector of $\mathcal{H}_u^M$ given by Corollary \ref{corSuperHmultif} with $u=t+1$, $X_{u-1}=m$, $X_u=\mathbb{S}_{u}$ and $$C=\bigotimes_{i=1}^d\left\{(-\infty,H^i_{t}]\mathbf{1}_{\{m^i=\uS^i\}}+[H^i_{t},\infty)\mathbf{1}_{\{m^i=\oS^i\}}\cup \{H^i_t\}\right\}.$$

We are only left to show that such a set $\mathcal{H}_u^M$ is non-empty for every $\omega\in\Omega$.\\

Fix $\omega\in\Omega$. For simplicity of notations we omit the dependence on $\omega$ as no confusion arise here. In particular, $m=m(\omega)$ and $\mathcal{H}_{t+1}(x)=\mathcal{H}_{t+1}(\omega,x)$, $F_t(x)=F_t(\omega,x)$ for every $x\in\R^d$.\\

\bigskip

\textbf{Step 1.} 
Observe that for any $\widetilde{H}\in \mathcal{H}_{t+1}(m)$
\begin{equation}\label{minimizing_seq}
\inf\left\{F_{t}(m)+\widetilde{H}\cdot(s-m)- F_{t+1}(\widetilde{\omega},s)\mid s\in\mathbb{S}_{t+1}(\widetilde{\omega}),\ \widetilde{\omega}\in \Sigma_t^\omega \right\}=0
\end{equation}
otherwise $(F_t(m),\widetilde{H})$ would not be optimal. Let $\{y_n\}_{n=1}^{\infty}\subseteq \mathbb{S}_{t+1}(\Sigma_t^\omega)$ a minimizing sequence with corresponding $\{\widetilde{\omega}_n\}$ such that $y_n\in\mathbb{S}_{t+1}(\widetilde{\omega}_n)$. By denoting $y:=\lim_{n\rightarrow\infty} y_n$ and $f(y):=\lim_{n\rightarrow\infty}F_{t+1}(\widetilde{\omega}_n,y_n)$, we have that,
\begin{equation}\label{attained2}
F_t(m)+\widetilde{H}\cdot(y-m)=f(y).
\end{equation}

Let 
\begin{equation}\label{minimizersSet}
Y:=\left\{\lim_{n\rightarrow\infty} y_n \mid \{y_n\}_{n=1}^{\infty}\subseteq \mathbb{S}_{t+1}(\Sigma_t^\omega)\text{ and }\eqref{attained2} \text{ is satisfied }\right\}.
\end{equation}

In a first step we show that, for any $y\in conv(Y)$, $\widetilde{H}$ is still optimal for the (conditional) $\mathbb{S}$-superhedging problem with initial value $y$, that is, $\widetilde{H}\in\mathcal{H}_{t+1}(y)$.\\

Take $y:=\sum_{i=1}^n\lambda_i y_i\in conv(Y)$. The (conditional) $\mathbb{S}$-superhedging price $F_t(y)$ must satisfy, in particular, the constraints
$$x+\alpha\cdot(y_i-y)\geq f(y_i)\qquad \forall i=1,\ldots,n$$
and hence $F_t(y)\geq \sum_{i=1}^n\lambda_i f(y_i)$. Note however that $\widetilde{H}$ satisfies
\begin{eqnarray}
F_t(m)+\widetilde{H}\cdot(s-m)&\geq& F_{t+1}(\widetilde{\omega},s)\qquad \forall s\in\mathbb{S}_{t+1}(\widetilde{\omega}),\ \forall\widetilde{\omega}\in \Sigma_t^\omega\ ,\\
F_t(m)+\widetilde{H}(s-y)+\widetilde{H}\cdot(y-m)&\geq& F_{t+1}(\widetilde{\omega},s)\qquad \forall s\in\mathbb{S}_{t+1}(\widetilde{\omega}),\ \forall\widetilde{\omega}\in \Sigma_t^\omega\label{correction}\ ,\\
\sum_{i=1}^n\lambda_i f(y_i)+\widetilde{H}\cdot(s-y)&\geq& F_{t+1}(\widetilde{\omega},s)\qquad \forall s\in\mathbb{S}_{t+1}(\widetilde{\omega}),\ \forall\widetilde{\omega}\in \Sigma_t^\omega\ ,
\end{eqnarray}
where the last inequality follows from the fact that \eqref{attained2} holds for every $y_i$ with $i=1,\ldots,n$ and hence
$$F_t(m)+\widetilde{H}\cdot(y-m)=\sum_{i=1}^n\lambda_i\left(F_t(m)+\widetilde{H}\cdot(y_i-m)\right)=\sum_{i=1}^n\lambda_i f(y_i).$$ We have therefore that $\widetilde{H}\in\mathcal{H}_{t+1}(y)$.\\

\textbf{Step 2} We now prove that for any $y_0,y_1\in\R^d$, for any $0\leq\lambda\leq 1$
\begin{equation}
\mathcal{H}_{t+1}(y_0)\cap\mathcal{H}_{t+1}(y_1)\subseteq\mathcal{H}_{t+1}((1-\lambda) y_0+\lambda y_1 )
\end{equation}
and, moreover,
\begin{equation}\label{eqPrice}
F_t((1-\lambda) y_0+\lambda y_1)=F_t(y_0)+\lambda\widetilde{H}\cdot( y_1-y_0).
\end{equation}

Denote $y_{\lambda}:=(1-\lambda) y_0+\lambda y_1 $. Let $\widetilde{H}\in \mathcal{H}_{t+1}(y_0)\cap\mathcal{H}_{t+1}(y_1)$. We need to show that $\widetilde{H}$ is optimal for the (conditional) $\mathbb{S}$-superhedging problem with initial value $y_{\lambda}$.  For $\lambda=0,1$ the claim is trivial. Note that similarly as in \eqref{correction}, for any $0\leq\lambda\leq 1$, the following holds
\begin{eqnarray*}
F_t(y_0)+\widetilde{H}\cdot(y_{\lambda}-y_0)+\widetilde{H}(s-y_\lambda) &\geq& F_{t+1}(\widetilde{\omega},s)\qquad \forall s\in\mathbb{S}_{t+1}(\widetilde{\omega}),\ \forall\widetilde{\omega}\in \Sigma_t^\omega.
\end{eqnarray*}

Suppose that for some $\overline{\lambda}\in(0,1)$ this is not optimal and hence there exists a dominating strategy $H_{\bar{\lambda}}$ with
\begin{equation}\label{contraHP}
F_t(y_{\bar{\lambda}})<F_t(y_0)+\widetilde{H}\cdot(y_{{\bar{\lambda}}}-y_0).
\end{equation}
 From
 \begin{eqnarray*}
  F_t(y_{\bar{\lambda}})+H_{\bar{\lambda}}(y_0-y_{{\bar{\lambda}}})+H_{\bar{\lambda}}(s-y_0)&\geq& F_{t+1}(\widetilde{\omega},s)\qquad \forall s\in\mathbb{S}_{t+1}(\widetilde{\omega}),\ \forall\widetilde{\omega}\in \Sigma_t^\omega\ ,\\
    F_t(y_{\bar{\lambda}})+H_{\bar{\lambda}}(y_1-y_{{\bar{\lambda}}})+H_{\bar{\lambda}}
    (s-y_1)& \geq& F_{t+1}(\widetilde{\omega},s)\qquad \forall s\in\mathbb{S}_{t+1}(\widetilde{\omega}),\ \forall\widetilde{\omega}\in \Sigma_t^\omega \ , 
  \end{eqnarray*}
we get
\begin{eqnarray}
  F_t(y_0)&\leq & F_t(y_{\bar{\lambda}})+H_{\bar{\lambda}}(y_0-y_{{\bar{\lambda}}})\label{supIn1}\ ,\\
    F_t(y_1)=F_t(y_0)+\widetilde{H}(y_1-y_0)& \leq & F_t(y_{\bar{\lambda}})+H_{\bar{\lambda}}(y_1-y_{{\bar{\lambda}}})\ .\label{supIn2}
 \end{eqnarray}

 From \eqref{contraHP} and \eqref{supIn1} we have $(\widetilde{H}-H_{\bar{\lambda}})(y_{{\bar{\lambda}}}-y_0)>0$. As $y_{{\bar{\lambda}}}-y_0=\lambda(y_1-y_0)$ we thus obtain
 \begin{equation}\label{contra1} (\widetilde{H}-H_{\bar{\lambda}})(y_1-y_0)>0.
 \end{equation}
 Now, from \eqref{contraHP} and \eqref{supIn2} we get

 $\widetilde{H}(y_1-y_0)<\widetilde{H}(y_{{\bar{\lambda}}}-y_0)+H_{\bar{\lambda}}(y_1-y_{{\bar{\lambda}}})$ from which $\widetilde{H}(y_1-y_{{\bar{\lambda}}})<H_{\bar{\lambda}}(y_1-y_{{\bar{\lambda}}})$. Since $y_1-y_{{\bar{\lambda}}}=(1-\lambda)(y_1-y_0)$ we thus obtain
  \begin{equation}\label{contra2} (\widetilde{H}-H_{\bar{\lambda}})(y_1-y_0)<0.
 \end{equation}
Equation \eqref{contra2} clearly contradicts \eqref{contra1}.\\
The assertion in \eqref{eqPrice} follows from the contradiction of \eqref{contraHP}.\\

\textbf{Step 3} We now conclude the proof of the Proposition.
As $H\in\mathcal{H}_t(\omega)$ is fixed, for simplicity, we can translate $H$ in the origin. Denote by
\begin{eqnarray*}
I_u&:=&\{i\in\{1,\ldots d\}\mid m^i=\oS^i(\omega)\}\\
I_d&:=&\{i\in\{1,\ldots d\}\mid m^i=\uS^i(\omega)\}\\
\xi_i&:=&\textbf{1}_{I_u}(i)-\textbf{1}_{I_d}(i)
\end{eqnarray*} and define $$R:=\xi_1[0,\infty)\times,\ldots\times\xi_d[0,\infty)$$
where with a slight abuse of notation $\xi_i[0,\infty)$ is either $[0,\infty)$, $(-\infty,0]$ or $\{0\}$ according to $\xi_i$ being respectively $1$,$-1$ or $0$.\\
Suppose that there is no $\widetilde{H}\in \mathcal{H}_{t+1}(m)$ that meets the requirement, that is  $$\mathcal{H}_{t+1}(m)\cap R=\varnothing.$$ As $\mathcal{H}_{t+1}(m)$ and $R$ are both closed convex sets in $\R^d$, by Hahn Banach Theorem, there exists $\eta\in\R^d$, $\gamma\in\R$ such that $$\eta\cdot \widetilde{H}\geq \gamma> \sup_{r\in R} \eta\cdot r  \qquad\forall \widetilde{H}\in \mathcal{H}_{t+1}(m).$$
Note that $\forall i\in I_u$ and $\forall\alpha\geq 0$ we have that $\alpha e_i\in R$ where $e_i$ is the $i^{th}$ element of the canonical basis of $\R^d$. Since $\sup_{r\in R} \eta\cdot r$ is bounded from above we infer that $\eta_i\leq 0$ if $i\in I_u$. Similarly  $\eta_i\geq 0$ if $i\in I_d$. Any separator $\eta$ must therefore satisfy
\begin{eqnarray}
\eta_i\leq 0 & \text{ if } & i\in I_u\label{bid_ask_direction1}\ ,\\
\eta_i\geq 0 & \text{ if } & i\in I_d\label{bid_ask_direction2}\ .
\end{eqnarray}

 Note moreover that as $0\in R$
\begin{equation}\label{positivity}
 \eta\cdot \widetilde{H}> 0 \qquad\forall \widetilde{H}\in \mathcal{H}_{t+1}(m).
\end{equation}

Denote by $l:=d(\mathcal{H}_{t+1}(m),R)$ the distance between the two sets and denote by $\widehat{H},\hat{r}$ the minimizers which exist since $\mathcal{H}_{t+1}(m)$ and $R$ are closed.
Let $Y=Y(\widehat{H})$ as in \eqref{minimizersSet} in Step 1 and introduce the convex cone $V:=co\left(conv\{y-m\mid y\in Y\}\right)$. 

Note that by definition of $Y$ in \eqref{minimizersSet}, any $(1,y-m)$ with $y\in Y$ defines a supporting hyperplane for the convex set of acceptable couples $\mathcal{A}_{t+1}(\omega,m)\subseteq \R^{d+1}$ (see Definition \ref{defF}) at $(F_t(m),\widehat{H})$. In particular, any $y-m$ with $y\in Y\setminus\{m\}$ defines a supporting hyperplane for $\mathcal{H}_{t+1}(m)\subseteq \R^d$ at $\widehat{H}$. If $Y\setminus\{m\}= \varnothing$ then $\mathcal{H}_{t+1}=\R^d$ and we already have a contradiction. If $Y\setminus\{m\}\neq \varnothing$ we have that $$co(\widetilde{H}-\widehat{H}\mid \widetilde{H}\in\mathcal{H}_{t+1}(m))=V^*.$$
Observe now that $\eta\in V^{**}=V$ and hence $\eta=\alpha(y-m)$, for some $y\in conv(Y)$, $\alpha>0$. Since $\frac{1}{\alpha}\eta\in V$, with no loss of generality assume $\alpha=1$.

Equations \eqref{bid_ask_direction1} and \eqref{bid_ask_direction2} imply that
\begin{eqnarray}
y^i_t\leq m^i & \text{ if }  i\text{ is such that }& m^i=\oS^i(\omega)\label{bid_ask_direction3},\\
y^i_t\geq m^i & \text{ if }  i\text{ is such that }& m^i=\uS^i(\omega)\label{bid_ask_direction4}.
\end{eqnarray}
Since $\widehat{H}\in\mathcal{H}_{t+1}(m)$, from Step 1, we have $\widehat{H}\in\mathcal{H}_{t+1}(y)$. Thus, from Step 2, $\widehat{H}\in\mathcal{H}_{t+1}(\lambda m+(1-\lambda)y)$ is also true for every $0\leq\lambda\leq 1$. From \eqref{bid_ask_direction3} and \eqref{bid_ask_direction4} there exists $\lambda$ sufficiently close to $1$ such that $y_\lambda:=(1-\lambda) m+\lambda y\in C_t$ and, from \eqref{eqPrice} in Step 2,
\begin{equation}\label{translatedY}
F_t(y_\lambda)= F_t(m)+\widehat{H}(y_\lambda-y_0).
\end{equation}
Note moreover that, by construction, $y,m\in \overline{conv}(\mathbb{S}_{t+1}(\Sigma_t^{\omega}))$ and hence $y_\lambda\in \overline{conv}(\mathbb{S}_{t+1}(\Sigma_t^{\omega}))\cap C_t=\mathbb{S}_{t}$. By translating back $0$ in $H$, equation \eqref{positivity} implies that $\widehat{H}\cdot(y_\lambda-m)>H\cdot(y_\lambda-m)$. In combination with \eqref{translatedY} and the fact that $F_t(m)=X_{t-1}-Y_t+H\cdot(m-S_{t-1})$ from equations \eqref{defMinimum} and \eqref{argmin}, we thus obtain

\begin{eqnarray*}
F_t(y_\lambda)&=&F_t(m)+ \widehat{H}\cdot(y_\lambda-m)\\
&>&F_t(m)+ H\cdot(y_\lambda-m)\\
&=&X_{t-1}-Y_t+H\cdot(m-S_{t-1})+H\cdot(y_\lambda-m)\\
&=&X_{t-1}-Y_t+H\cdot(y_\lambda-S_{t-1})\ ,
\end{eqnarray*}
which is a contradiction since $y_\lambda\in \mathbb{S}_t$ and $Y_t$ is a minimum in \eqref{defMinimum}.
\end{proof}

\begin{proof}[Proof of Corollary \ref{corS2process}]
Note first that if $F_0(x_0)<\infty$ then $F_t(\omega,\cdot)<\infty$ for any $t\in I$. Applying iteratively Proposition \ref{S2process}, there exists a process $\hat{S}$ with $\hat{S}_0=x_0$ and a strategy $H$ which satisfy the following inequalities
\begin{eqnarray*}
F_0(x_0)+H_{1}(\omega)\cdot \Delta \hat{S}_{1}(\omega) &\geq & F_1(\omega,\hat{S}_1(\omega)) \\
F_0(x_0)+H_{1}(\omega)\cdot \Delta \hat{S}_{1}(\omega)+H_{2}\cdot \Delta \hat{S}_{2}(\omega) &\geq & F_2(\omega,\hat{S}_2(\omega)) \\
&\ldots & \\
F_0(x_0)+\sum_{t=1}^TH_{t}(\omega)\cdot \Delta \hat{S}_{t}(\omega) &\geq & F_T(\omega,\hat{S}_T(\omega))=g(\omega) \\
\end{eqnarray*}
on $A:=\cap_{t=0}^{T} \{\omega\in \Omega\mid \exists s\in\mathbb{S}_t(\omega)\text{ such that }F_t(\omega,s)>-\infty\}$. Note that, by construction, $F_t(\omega,s)=-\infty$ for every $s\in\mathbb{S}_t(\omega)$ if and only if $Q(\{\omega\})=0$ for every $Q\in\mathcal{Q}$, so that $A=\Omega_*$ (see also Remark \ref{comments}).
Rearranging the terms in the summation as $$\sum_{t=1}^TH_{t}\cdot \Delta \hat{S}_{t}=\sum_{t=1}^T(H_{t}-H_{t+1})\cdot\hat{S}_{t}-H_1\cdot x_0$$ the properties of $\hat{S}$ yield the desired inequality. 
\end{proof}

 \section{Appendix}
Let $(\Omega,\mathcal{A})$ a measurable space.
\begin{definition}\label{defRandom}
 A map $\Psi:\Omega\mapsto 2^{\R^n}$, where $2^{\R^n}$ is the power set of $\R^n$,  is called multi-function, or random set. It is said to be $\mathcal{A}$-measurable if, for any open $O\subseteq \R^n$ the set $\{\omega\in\Omega\mid \Psi(\omega)\cap O\neq\varnothing\}$ is $\mathcal{A}$-measurable.
\end{definition}

\begin{lemma}\label{meas-eps} Let $\Psi:\Omega\mapsto 2^{\R^n}$ a $\mathcal{A}$-measurable multi-function. Let $\varepsilon>0$ then $$\Psi^\varepsilon:\omega\mapsto \left\{v\in \R^n\mid v\cdot s\geq \varepsilon \quad \forall s\in\Psi(\omega)\setminus \{0\}\right\}$$ is an $\mathcal{A}$-measurable multi-function.
\end{lemma}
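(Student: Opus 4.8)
The plan is to express $\Psi^{\varepsilon}$ as a countable intersection of half-space–valued multifunctions and then invoke the measurability-preservation machinery of the Appendix. First I would pass to the closure: since for every open $O\subseteq\R^n$ one has $\{\omega\mid\Psi(\omega)\cap O\neq\varnothing\}=\{\omega\mid\overline{\Psi(\omega)}\cap O\neq\varnothing\}$, the multifunction $\overline{\Psi}$ is $\mathcal{A}$-measurable, and being closed-valued (with non-empty values on the measurable set $\{\Psi\neq\varnothing\}=\{\Psi\cap\R^n\neq\varnothing\}$) it admits a Castaing representation: there are $\mathcal{A}$-measurable maps $\varphi_n:\{\Psi\neq\varnothing\}\to\R^n$ with $\overline{\{\varphi_n(\omega)\mid n\in\mathbb{N}\}}=\overline{\Psi(\omega)}$. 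On the complement $\{\Psi=\varnothing\}$ one trivially has $\Psi^{\varepsilon}(\omega)=\R^n$, so it suffices to work on the domain of $\Psi$.

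Next I would prove the pointwise identity
$$\Psi^{\varepsilon}(\omega)=\bigcap_{n\,:\,\varphi_n(\omega)\neq 0}\{v\in\R^n\mid v\cdot\varphi_n(\omega)\geq\varepsilon\}.$$
For the inclusion $\subseteq$: each $\varphi_n(\omega)\in\overline{\Psi(\omega)}$ is a limit $\lim_k s_k$ with $s_k\in\Psi(\omega)$, so if $\varphi_n(\omega)\neq 0$ then $s_k\neq 0$ eventually, hence $v\cdot s_k\geq\varepsilon$ and, by continuity of $v\mapsto v\cdot s$, $v\cdot\varphi_n(\omega)\geq\varepsilon$. For $\supseteq$: given $s\in\Psi(\omega)\setminus\{0\}$, since $s\in\overline{\{\varphi_n(\omega)\}}$ there is a subsequence $\varphi_{n_k}(\omega)\to s$, and as $s\neq 0$ we have $\varphi_{n_k}(\omega)\neq 0$ for large $k$, whence $v\cdot\varphi_{n_k}(\omega)\geq\varepsilon$ passes in the limit to $v\cdot s\geq\varepsilon$. (This continuity argument also shows that if $0$ is a limit of non-zero elements of $\Psi(\omega)$ then $\Psi^{\varepsilon}(\omega)=\varnothing$, in agreement with the formula.)

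Finally, for each $n$ set $\Phi_n(\omega):=\{v\mid v\cdot\varphi_n(\omega)\geq\varepsilon\}$ on the measurable set $\{\varphi_n\neq 0\}$ and $\Phi_n(\omega):=\R^n$ on $\{\varphi_n=0\}$. On $\{\varphi_n\neq 0\}$ the function $(\omega,v)\mapsto\varepsilon-v\cdot\varphi_n(\omega)$ is a Carath\'eodory map, so its sublevel set $\Phi_n$ is a closed-valued, $\mathcal{A}$-measurable multifunction by the Carath\'eodory/level-set lemmas recalled in the Appendix; on $\{\varphi_n=0\}$ it is constant; and patching two measurable multifunctions over complementary measurable sets preserves measurability (Proposition \ref{preservation_measurability}). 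Observe that $\Phi_n(\omega)=\R^n$ exactly when $\varphi_n(\omega)=0$ (not the value $\varnothing$ that the raw half-space formula would give there), which is precisely what makes $\bigcap_{n\in\mathbb{N}}\Phi_n(\omega)$ coincide with the intersection in the displayed identity, i.e. with $\Psi^{\varepsilon}(\omega)$. Since a countable intersection of closed-valued $\mathcal{A}$-measurable multifunctions is $\mathcal{A}$-measurable (Proposition \ref{preservation_measurability}), this yields the claim.

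The main obstacle is the pointwise identity: one has to handle simultaneously the non-closedness of $\Psi(\omega)$ — which is why the Castaing representation is taken for $\overline{\Psi}$ and why the approximation is used in both directions, elements of $\Psi(\omega)\setminus\{0\}$ approximating each $\varphi_n(\omega)$ and conversely — and the special status of the origin, at which the constraint $v\cdot 0\geq\varepsilon>0$ is never met, so that $0$ must be removed from the index set of the intersection exactly as the formula prescribes. Everything after that is a routine application of the measurability-preservation results of the Appendix.
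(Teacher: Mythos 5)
Your proof is correct. There is nothing in the paper to compare it against: the paper does not prove Lemma \ref{meas-eps} but simply refers to the appendix of \cite{BFM14}, so your argument serves as a self-contained proof, and it follows the natural route for this kind of statement: replace $\Psi$ by its closure (measurable because the test condition $\Psi(\omega)\cap O\neq\varnothing$ only involves open $O$), take a Castaing representation $\{\varphi_n\}$ via Theorem \ref{castaing}, establish the pointwise identity $\Psi^\varepsilon(\omega)=\bigcap_{n:\,\varphi_n(\omega)\neq 0}\{v\mid v\cdot\varphi_n(\omega)\geq\varepsilon\}$ by approximating in both directions, and conclude with Proposition \ref{preservation_measurability}. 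Your handling of the delicate points — non-closedness of $\Psi(\omega)$, exclusion of the origin from the index set, and the degenerate cases $\Psi(\omega)\subseteq\{0\}$ or $0\in\overline{\Psi(\omega)\setminus\{0\}}$ — is sound. One cosmetic repair: the Appendix does not actually record a ``level set of a Carath\'eodory map'' lemma, so the measurability of $\Phi_n(\omega)=\{v\in\R^n\mid v\cdot\varphi_n(\omega)\geq\varepsilon\}$ on $\{\varphi_n\neq 0\}$ should be justified directly. Either check the definition: for open $O$ and a countable dense $D\subseteq O$ one has $\{\omega\mid\Phi_n(\omega)\cap O\neq\varnothing\}=\{\omega\mid\sup_{v\in D}v\cdot\varphi_n(\omega)>\varepsilon\}$ (openness of $O$ and $\varphi_n(\omega)\neq0$ allow the weak inequality to be upgraded to a strict one after a small perturbation in the direction $\varphi_n(\omega)$), which is measurable as a countable supremum of measurable functions; or write $\Phi_n(\omega)=\varepsilon\,\varphi_n(\omega)/|\varphi_n(\omega)|^2+\{\varphi_n(\omega)\}^*$ and invoke the preservation of measurability under duals and finite linear combinations in Proposition \ref{preservation_measurability}. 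With that substitution the argument is complete.
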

\begin{proof} see Appendix of \cite{BFM14}
\end{proof}

\begin{theorem}\label{castaing}[Theorem 14.5 \cite{R}] The following are equivalent
\begin{itemize}
\item $\Psi:\Omega\mapsto 2^{\R^n}$ is a closed valued, $\mathcal{A}$-measurable multi-function
\item $\Psi$ admits a Castaing representation: there is a countable family $\{\psi_n\}_{n\in\mathbb{N}}$ of $\mathcal{A}$-measurable function $\psi_n:\textrm{dom} \Psi\mapsto\R^n$ such that for any $\omega\in\Omega$ $$\Psi(\omega)=\textrm{cl }\{\psi_n(\omega)\mid n\in\mathbb{N}\}.$$
\end{itemize}
\end{theorem}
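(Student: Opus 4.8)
The plan is to prove the two implications separately; the forward one (Castaing representation $\Rightarrow$ measurability) is routine and all the real work is in the converse. Suppose first that $\Psi$ admits a Castaing representation $\{\psi_n\}_{n\in\mathbb{N}}$. For any open $O\subseteq\R^n$ and any set $S\subseteq\R^n$ one has $\mathrm{cl}(S)\cap O\neq\varnothing$ if and only if $S\cap O\neq\varnothing$; applying this with $S=\{\psi_n(\omega)\mid n\in\mathbb{N}\}$ gives
$$\{\omega\in\Omega\mid\Psi(\omega)\cap O\neq\varnothing\}=\bigcup_{n\in\mathbb{N}}\psi_n^{-1}(O)\in\mathcal{A},$$
since each $\psi_n$ is $\mathcal{A}$-measurable; thus $\Psi$ is $\mathcal{A}$-measurable, and closed-valued by construction. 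This settles ``$\Leftarrow$''.

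For the converse, assume $\Psi$ is closed-valued and $\mathcal{A}$-measurable. Taking $O=\R^n$ shows $\mathrm{dom}\,\Psi=\{\omega\mid\Psi(\omega)\neq\varnothing\}\in\mathcal{A}$, so I would restrict to $\mathrm{dom}\,\Psi$ and assume $\Psi(\omega)\neq\varnothing$ for every $\omega$. Writing $d(x,A):=\inf_{a\in A}\|x-a\|$, I record two facts used repeatedly. First, for fixed $x$ the map $\omega\mapsto d(x,\Psi(\omega))$ is $\mathcal{A}$-measurable, because $\{\omega\mid d(x,\Psi(\omega))<r\}=\{\omega\mid\Psi(\omega)\cap B(x,r)\neq\varnothing\}\in\mathcal{A}$ for every $r>0$. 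Second, for any open ball $B$ the set $\Omega_B:=\{\omega\mid\Psi(\omega)\cap B\neq\varnothing\}$ is in $\mathcal{A}$, and $\omega\mapsto\mathrm{cl}(\Psi(\omega)\cap B)$ is a closed-valued $\mathcal{A}$-measurable multi-function on $\Omega_B$: for open $O$ one has $\mathrm{cl}(\Psi(\omega)\cap B)\cap O\neq\varnothing$ iff $\Psi(\omega)\cap(B\cap O)\neq\varnothing$, and $B\cap O$ is open, so the corresponding $\omega$-set is measurable by hypothesis.

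The core step is to produce an $\mathcal{A}$-measurable selection $\sigma$ of $\Psi$ with $\sigma(\omega)\in\Psi(\omega)$ for all $\omega$, which I would build by the classical successive-approximation scheme underlying the Kuratowski--Ryll-Nardzewski theorem. Fix a countable dense set $\{a_k\}_{k\in\mathbb{N}}\subseteq\R^n$. Put $\sigma_0(\omega):=a_{k_0(\omega)}$ with $k_0(\omega):=\min\{k\mid d(a_k,\Psi(\omega))<1\}$, a countably-valued $\mathcal{A}$-measurable function by the first fact above. Inductively, given a measurable $\sigma_m$ with $d(\sigma_m(\omega),\Psi(\omega))<2^{-m}$, set $\sigma_{m+1}(\omega):=a_{k_{m+1}(\omega)}$ where $k_{m+1}(\omega)$ is the least $k$ with $\|a_k-\sigma_m(\omega)\|<2^{-m}$ and $d(a_k,\Psi(\omega))<2^{-(m+1)}$; such a $k$ exists by density since $\Psi(\omega)$ meets the open ball $B(\sigma_m(\omega),2^{-m})$, and $\{\omega\mid k_{m+1}(\omega)=k\}$ is measurable, being the intersection of $\sigma_m^{-1}(B(a_k,2^{-m}))$ with $\{\omega\mid d(a_k,\Psi(\omega))<2^{-(m+1)}\}$ minus the earlier analogous sets. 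The sequence $\{\sigma_m(\omega)\}$ is uniformly Cauchy, so $\sigma:=\lim_m\sigma_m$ is $\mathcal{A}$-measurable, and $d(\sigma(\omega),\Psi(\omega))=\lim_m d(\sigma_m(\omega),\Psi(\omega))=0$ forces $\sigma(\omega)\in\Psi(\omega)$ because $\Psi(\omega)$ is closed.

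Finally I would assemble the representation. Enumerate as $\{B_j\}_{j\in\mathbb{N}}$ the open balls of $\R^n$ with rational centre and rational radius. For each $j$, apply the selection construction to the admissible multi-function $\omega\mapsto\mathrm{cl}(\Psi(\omega)\cap B_j)$ on $\Omega_{B_j}$ to obtain an $\mathcal{A}$-measurable $\widetilde\psi_j\colon\Omega_{B_j}\to\R^n$ with $\widetilde\psi_j(\omega)\in\mathrm{cl}(\Psi(\omega)\cap B_j)$, and put $\psi_j:=\widetilde\psi_j\mathbf{1}_{\Omega_{B_j}}+\sigma\mathbf{1}_{\Omega\setminus\Omega_{B_j}}$, which is $\mathcal{A}$-measurable and satisfies $\psi_j(\omega)\in\Psi(\omega)$ for all $\omega$ (off $\Omega_{B_j}$ it equals $\sigma(\omega)$; on $\Omega_{B_j}$ it is a limit of points of $\Psi(\omega)\cap B_j\subseteq\Psi(\omega)$, hence in the closed set $\Psi(\omega)$). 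Then $\mathrm{cl}\{\psi_j(\omega)\mid j\in\mathbb{N}\}\subseteq\Psi(\omega)$. Conversely, given $x\in\Psi(\omega)$ and $\varepsilon>0$, choose $B_j\ni x$ with $\mathrm{diam}(B_j)<\varepsilon$; then $\Psi(\omega)\cap B_j\neq\varnothing$, so $\omega\in\Omega_{B_j}$ and $\psi_j(\omega)\in\mathrm{cl}(B_j)$, whence $\|\psi_j(\omega)-x\|\le\mathrm{diam}(B_j)<\varepsilon$, giving $x\in\mathrm{cl}\{\psi_j(\omega)\mid j\in\mathbb{N}\}$. Hence $\Psi(\omega)=\mathrm{cl}\{\psi_j(\omega)\mid j\in\mathbb{N}\}$, as required. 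I expect the main obstacle to be the measurable selection step --- the successive-approximation scheme and the measurability bookkeeping for the indices $k_m(\omega)$; the two auxiliary facts and the concluding density argument are straightforward once that step is in hand.
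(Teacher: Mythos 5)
Your argument is correct. Note that the paper does not prove this statement at all: it is quoted verbatim as Theorem 14.5 of Rockafellar--Wets and used as a black box, so there is no in-paper proof to compare against. What you have written is a sound, self-contained rendition of the classical Kuratowski--Ryll-Nardzewski selection scheme followed by the standard localization to rational balls: the easy direction via $\mathrm{cl}(S)\cap O\neq\varnothing\iff S\cap O\neq\varnothing$ for open $O$ is fine; the measurability of $\omega\mapsto d(x,\Psi(\omega))$ and of $\omega\mapsto\mathrm{cl}(\Psi(\omega)\cap B)$ is justified correctly; the successive-approximation selector $\sigma=\lim_m\sigma_m$ is well defined (the index sets $\{k_{m+1}=k\}$ are measurable because $\sigma_m$ is countably valued and measurable, and the sequence is uniformly Cauchy); and the final density argument via balls of small diameter containing a given $x\in\Psi(\omega)$ closes the loop. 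The only cosmetic point worth flagging is that in the forward direction the identity $\{\omega\mid\Psi(\omega)\cap O\neq\varnothing\}=\bigcup_n\psi_n^{-1}(O)$ implicitly uses that $\mathrm{dom}\,\Psi\in\mathcal{A}$ (obtained by taking $O=\R^n$), which you do address explicitly only in the converse; this is harmless.
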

\begin{proposition}\label{preservation_measurability}[Propositions 14.2-14.11-14.12 \cite{R}] Consider a class of $%
\mathcal{A}$-measurable multi-functions. The following
operations preserve $\mathcal{A}$-measurability: countable unions,
countable intersections (if the functions are closed-valued),
finite linear combination, convex/linear/affine hull, generated
cone, polar set, closure, cartesian product of a finite number of $\mathcal{A}$-measurable multi-functions.
\end{proposition}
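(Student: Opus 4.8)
The plan is to reduce every item in the list to the Castaing representation criterion of Theorem \ref{castaing}: for closed-valued multi-functions, $\mathcal{A}$-measurability is equivalent to admitting a countable family $\{\psi_n\}$ of $\mathcal{A}$-measurable selections with $\Psi(\omega)=\mathrm{cl}\{\psi_n(\omega)\mid n\in\mathbb{N}\}$, and in particular, exhibiting any countable family of $\mathcal{A}$-measurable maps whose pointwise ranges have closure equal to $\Psi(\omega)$ certifies that $\mathrm{cl}\,\Psi$ is $\mathcal{A}$-measurable. I would then dispatch the ``algebraic'' constructions uniformly. Fix closed-valued $\mathcal{A}$-measurable $\Psi_1,\Psi_2$ with Castaing selections $\{\psi^1_n\}_n$ and $\{\psi^2_m\}_m$: for $\mathrm{cl}(\Psi_1\cup\Psi_2)$ use $\{\psi^1_n\}\cup\{\psi^2_m\}$; for $\mathrm{cl}(\alpha_1\Psi_1+\alpha_2\Psi_2)$ use $\{\alpha_1\psi^1_n+\alpha_2\psi^2_m\}_{n,m}$; for $\Psi_1\times\Psi_2$ (which is already closed-valued) use $\{(\psi^1_n,\psi^2_m)\}_{n,m}$; and for $\overline{\mathrm{conv}}\,\Psi_1$, the closed linear or affine hull, and the closed generated cone use the countable families of \emph{rational} combinations $\sum_i q_i\psi^1_{n_i}$, with the $q_i$ ranging over $\mathbb{Q}$, over $\mathbb{Q}_{\ge 0}$, or over the rational points of the simplex, as appropriate — density of $\mathbb{Q}$ makes these countable families dense in the intended closed sets. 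The case $\mathrm{cl}\,\Psi_1$ is immediate, with the same Castaing representation as $\Psi_1$. Finally, the polar $\Psi_1^*(\omega)=\{v\mid v\cdot x\ge 0\ \forall x\in\Psi_1(\omega)\}$ equals $\bigcap_n H_n(\omega)$, where $H_n(\omega):=\{v\mid v\cdot\psi^1_n(\omega)\ge 0\}$ is closed-valued and $\mathcal{A}$-measurable (a Castaing representation being built explicitly from $\psi^1_n$, e.g.\ as the sublevel set of the Carath\'eodory function $(\omega,v)\mapsto -v\cdot\psi^1_n(\omega)$); so the polar reduces to the intersection case. The non-closed versions of union, sum and hull follow likewise by checking directly that $\{\omega\mid\Psi(\omega)\cap O\neq\varnothing\}\in\mathcal{A}$ for every open $O$.

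This isolates the one genuinely substantial operation: \emph{countable intersections} of closed-valued $\mathcal{A}$-measurable multi-functions $\Psi_n$, and I expect this to be the main obstacle, since there is no way in general to produce a Castaing representation of $\bigcap_n\Psi_n$ from those of the $\Psi_n$. The plan here is to argue by measurable selection. Fix an open $O\subseteq\R^n$, pick Castaing selections $\{\psi^n_k\}_k$ of each $\Psi_n$, and write
$$\{\omega\mid(\bigcap_n\Psi_n)(\omega)\cap O\neq\varnothing\}=\mathrm{proj}_\Omega\Big(\bigcap_n\big\{(\omega,x)\in\Omega\times O\mid \inf_k|x-\psi^n_k(\omega)|=0\big\}\Big),$$
so that the set in question is the projection onto $\Omega$ of a set in the product $\sigma$-algebra $\mathcal{A}\otimes\mathcal{B}(\R^n)$. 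Concluding that this projection lies in $\mathcal{A}$ is exactly where completeness of $\mathcal{A}$ enters: in the setting of this paper the Universal Filtration is designed precisely so that each $\mathcal{F}_t$ contains the analytic sets and is therefore stable under such projections — the same mechanism already used in the proof of Lemma \ref{LemmaTheta_meas} via \cite{DM82}. Equivalently, one applies the Kuratowski--Ryll-Nardzewski / von Neumann measurable selection theorem on the product to get, on the projected set, a measurable $\omega\mapsto\xi(\omega)\in\bigcap_n\Psi_n(\omega)\cap O$, and then builds a Castaing representation of $\bigcap_n\Psi_n$ by exhausting over a countable dense family of balls $O$.

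For completeness I would also record the two trivial direct verifications: countable unions need no closed-valuedness, since $\{\omega\mid(\bigcup_n\Psi_n)(\omega)\cap O\neq\varnothing\}=\bigcup_n\{\omega\mid\Psi_n(\omega)\cap O\neq\varnothing\}\in\mathcal{A}$; and Cartesian products reduce to basic open boxes $O=O_1\times O_2$, for which $\{\omega\mid(\Psi_1\times\Psi_2)(\omega)\cap O\neq\varnothing\}=\{\omega\mid\Psi_1(\omega)\cap O_1\neq\varnothing\}\cap\{\omega\mid\Psi_2(\omega)\cap O_2\neq\varnothing\}\in\mathcal{A}$. Since the statement is quoted verbatim from \cite{R} (Propositions 14.2, 14.11, 14.12), the write-up can simply cite it; the sketch above indicates how I would reconstruct the argument, with the countable-intersection step the only nontrivial ingredient and the place where the analytic-set properties of the Universal Filtration are actually used.
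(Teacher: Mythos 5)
Your overall strategy — reduce everything to Castaing representations via Theorem \ref{castaing} and exhibit explicit countable dense families (rational combinations of selections) for unions, sums, hulls, the generated cone, closure, finite cartesian products, and the polar written as an intersection of measurable half-spaces — is sound and is essentially how the cited results are organized; note that the paper itself gives no argument for this proposition, it is simply quoted from \cite{R}, so the only "paper proof" is the citation, which you also acknowledge. Your direct verifications for countable unions and for products via open boxes, and the remark that a multifunction and its closure hit the same open sets, are all correct.

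The step I would push back on is countable intersections. You assert that this is ``exactly where completeness of $\mathcal{A}$ enters'' and argue by projecting $\bigcap_n \mathrm{Graph}(\Psi_n)\cap(\Omega\times O)$ onto $\Omega$, relying on analytic-set stability. That is legitimate in the places where the paper actually applies the proposition (the universal filtration is closed under such projections, the same mechanism as in Lemma \ref{LemmaTheta_meas}), but it proves less than the statement, which concerns an arbitrary measurable space $(\Omega,\mathcal{A})$; the corresponding result in \cite{R} is stated and proved without any completeness hypothesis. The completeness-free argument exploits local compactness of $\mathbb{R}^n$ rather than projection: put $f(\omega,x):=\sum_n 2^{-n}\min\{1,d(x,\Psi_n(\omega))\}$, a Carath\'eodory function with $\bigcap_n\Psi_n(\omega)=\{x\mid f(\omega,x)=0\}$ by closed-valuedness; for a closed ball $\bar B$ one has $\bigcap_n\Psi_n(\omega)\cap\bar B\neq\varnothing$ if and only if $\min_{x\in\bar B}f(\omega,x)=0$, the minimum being attained by compactness and continuity, and $\omega\mapsto\min_{x\in\bar B}f(\omega,x)$ is $\mathcal{A}$-measurable because it coincides with the infimum over a countable dense subset of $\bar B$; finally, every open $O\subseteq\mathbb{R}^n$ is a countable union of closed balls contained in $O$, so $\{\omega\mid\bigcap_n\Psi_n(\omega)\cap O\neq\varnothing\}$ is a countable union of $\mathcal{A}$-measurable sets. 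Substituting this argument for your projection step (it also covers the half-space intersection you use for the polar) makes your sketch a complete reconstruction of the cited propositions in their stated generality; as you say, for the purposes of the paper the citation itself suffices.
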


\begin{lemma}\label{measInf} Let $A$ be a real-valued, $\mathcal{A}$-measurable random set. Then $\inf A$ is $\mathcal{A}$-measurable.
\end{lemma}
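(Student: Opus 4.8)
The plan is to verify directly that the extended-real-valued map $\omega\mapsto\inf A(\omega)\in[-\infty,+\infty]$ is $\mathcal{A}$-measurable by testing it against a generating family of Borel sets. First I would recall the elementary fact that a function $f:\Omega\to[-\infty,+\infty]$ is $\mathcal{A}$-measurable if and only if $\{\omega\mid f(\omega)<c\}\in\mathcal{A}$ for every $c\in\mathbb{R}$: the intervals $(-\infty,c)$, $c\in\mathbb{R}$, generate the Borel $\sigma$-algebra of $[-\infty,+\infty]$, since $\{f=-\infty\}=\bigcap_{n\in\mathbb{N}}\{f<-n\}$ and $\{f<+\infty\}=\bigcup_{n\in\mathbb{N}}\{f<n\}$, and complements recover the sets $\{f\ge c\}$.

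The key step is then the observation that, for each fixed $c\in\mathbb{R}$,
$$\{\omega\in\Omega\mid \inf A(\omega)<c\}=\{\omega\in\Omega\mid A(\omega)\cap(-\infty,c)\neq\varnothing\},$$
where on the left we adopt the usual convention $\inf\varnothing=+\infty$ (so that on $\omega\notin\mathrm{dom}\,A$ both sides are false, making the identity still hold). Indeed $\inf A(\omega)<c$ means precisely that some $x\in A(\omega)$ satisfies $x<c$. Since $(-\infty,c)$ is an open subset of $\mathbb{R}$ and $A$ is an $\mathcal{A}$-measurable random set in the sense of Definition \ref{defRandom}, the right-hand side lies in $\mathcal{A}$. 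Hence $\{\inf A<c\}\in\mathcal{A}$ for every $c\in\mathbb{R}$, and by the first step $\inf A$ is $\mathcal{A}$-measurable, which is the claim.

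I do not expect a genuine obstacle here: the only points that need a moment of care are keeping track of the conventions for empty sections and for the values $\pm\infty$, and justifying that strict-inequality sub-level sets suffice to generate measurability on the compactified line $[-\infty,+\infty]$. Both are routine.
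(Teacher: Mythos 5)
Your argument is correct and is essentially the paper's own proof: the paper also reduces measurability of $\inf A$ to the identity $\{\omega\mid \inf A(\omega)<y\}=\{\omega\mid A(\omega)\cap(-\infty,y)\neq\varnothing\}$ for all $y\in\mathbb{R}$ and invokes Definition \ref{defRandom}. Your additional remarks on the empty-section convention and on strict sub-level sets generating the Borel structure of $[-\infty,+\infty]$ are just a more explicit version of the paper's closing ``from which the thesis follows.''
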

\begin{proof}
For any $y\in\mathbb{R}$
\begin{equation*}
\left\{ \omega \in \Omega \mid \inf \{a\mid a\in A(\omega )\}<y\right\}
=\left\{ \omega \in \Omega \mid A(\omega )\cap (-\infty ,y)\neq \varnothing
\right\}\in\mathcal{A}
\end{equation*}
from which the thesis follows.
\end{proof}

\begin{theorem}\label{closed_value_selection}[Corollary 14.6 \cite{R}] A closed-valued measurable mapping always admits a measurable selector.
\end{theorem}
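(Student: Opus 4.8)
The plan is to read this off from the Castaing representation, Theorem \ref{castaing}, which carries all the real content; the proof is then essentially one line. Let $\Psi:\Omega\mapsto 2^{\R^n}$ be closed-valued and $\mathcal{A}$-measurable. First I would invoke Theorem \ref{castaing} to obtain a countable family $\{\psi_k\}_{k\in\mathbb{N}}$ of $\mathcal{A}$-measurable maps $\psi_k:\operatorname{dom}\Psi\mapsto\R^n$ with $\Psi(\omega)=\operatorname{cl}\{\psi_k(\omega)\mid k\in\mathbb{N}\}$ for every $\omega\in\operatorname{dom}\Psi$.

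Next I would check that any single member, say $\psi_1$, is already a measurable selection. Its $\mathcal{A}$-measurability is part of the conclusion of Theorem \ref{castaing}. For the selection property, fix $\omega\in\operatorname{dom}\Psi$: the value $\psi_1(\omega)$ lies in the set $\{\psi_k(\omega)\mid k\in\mathbb{N}\}$, which is contained in its own closure $\Psi(\omega)$; hence $\psi_1(\omega)\in\Psi(\omega)$. (Here closed-valuedness of $\Psi$ is exactly what makes passing to the closure harmless.) Thus $\psi_1$ is an $\mathcal{A}$-measurable selector defined on $\operatorname{dom}\Psi$, which is all that is asked. One can of course also just cite \cite[Corollary 14.6]{R} directly.

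Since everything is pushed onto Theorem \ref{castaing}, there is no genuine obstacle at this level. For completeness I note the route a self-contained argument would take, namely the Kuratowski--Ryll-Nardzewski construction: starting from a countable dense set $\{q_j\}\subseteq\R^n$, one builds a Cauchy sequence of measurable ``approximate selections'' by partitioning $\operatorname{dom}\Psi$ at stage $m$ according to the smallest index $j$ for which $\Psi(\omega)$ meets the ball $B(q_j,2^{-m})$, recording the corresponding centre, and then refining; completeness of $\R^n$ together with closed-valuedness of $\Psi$ forces the pointwise limit to exist and to lie in $\Psi(\omega)$, while measurability is preserved at each stage. The one delicate point there is that each partitioning set is measurable, which is precisely where Definition \ref{defRandom} (preimages of open sets under a measurable multi-function are measurable) is used. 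Given Theorem \ref{castaing} in the Appendix, none of this machinery is needed.
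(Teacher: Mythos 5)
Your argument is correct: the paper itself offers no proof (it simply cites Corollary 14.6 of \cite{R}), and your derivation — take any one function $\psi_1$ from the Castaing representation of Theorem \ref{castaing} and observe $\psi_1(\omega)\in\{\psi_k(\omega)\mid k\in\mathbb{N}\}\subseteq\operatorname{cl}\{\psi_k(\omega)\mid k\in\mathbb{N}\}=\Psi(\omega)$ on $\operatorname{dom}\Psi$ — is exactly the standard route by which this corollary follows from the representation theorem in the cited source. Nothing is missing.
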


\begin{lemma}\label{cara}[Example 14.15 in \cite{R}] Let $F:\Omega\times\R^n\mapsto\R^m$ be a Carath\'eodory map and let $X(\omega)\subseteq \R^n$ be closed-valued and $\mathcal{A}$-measurable then the following maps are $\mathcal{A}$-measurable
\begin{itemize}
\item $\omega\mapsto F(\omega,X(\omega))$
\item $\omega\mapsto (X(\omega),F(\omega,X(\omega)))$
\end{itemize}
\end{lemma}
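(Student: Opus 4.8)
The plan is to reduce the statement to the Castaing characterisation of measurable closed-valued multi-functions, Theorem \ref{castaing}, exploiting the countable dense structure it provides inside $X$.

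First I would record the routine fact that a Carath\'eodory map $F:\Omega\times\R^n\to\R^m$ is jointly $\mathcal{A}\otimes\mathcal{B}(\R^n)$-measurable: for each $\ell\in\mathbb{N}$ split $\R^n$ into countably many Borel cells of diameter $<1/\ell$, replace $x$ by a chosen point of its cell, and observe that the resulting map is jointly measurable and converges pointwise to $F$ as $\ell\to\infty$ by continuity in $x$. A consequence I will use repeatedly is that for any $\mathcal{A}$-measurable $\varphi:\Omega\to\R^n$ the composition $\omega\mapsto F(\omega,\varphi(\omega))$ is $\mathcal{A}$-measurable, being $F$ precomposed with the measurable map $\omega\mapsto(\omega,\varphi(\omega))$.

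Next I would fix a Castaing representation $\{\psi_k\}_{k\in\mathbb{N}}$ of $X$ provided by Theorem \ref{castaing}, so each $\psi_k:\mathrm{dom}\,X\to\R^n$ is $\mathcal{A}$-measurable and $X(\omega)=\mathrm{cl}\{\psi_k(\omega)\mid k\in\mathbb{N}\}$. For the ``profile'' assertion I would show that $P(\omega):=\{(x,F(\omega,x))\mid x\in X(\omega)\}$ is closed-valued and that $\{\,\omega\mapsto(\psi_k(\omega),F(\omega,\psi_k(\omega)))\,\}_{k\in\mathbb{N}}$ is a Castaing representation of it. Closedness of $P(\omega)$ holds because $X(\omega)$ is closed and $F(\omega,\cdot)$ continuous: if $x_j\to x$ with $x_j\in X(\omega)$ then $x\in X(\omega)$ and $F(\omega,x_j)\to F(\omega,x)$. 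Density together with continuity gives that every $(x,F(\omega,x))$ with $x\in X(\omega)$ is a limit of points $(\psi_{k_j}(\omega),F(\omega,\psi_{k_j}(\omega)))$, and each displayed map is $\mathcal{A}$-measurable by the previous paragraph; Theorem \ref{castaing} then delivers $\mathcal{A}$-measurability of $\omega\mapsto P(\omega)$, which is exactly the second claim.

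Finally, the image claim follows by projection: $F(\omega,X(\omega))$ is the image of $P(\omega)$ under $\mathrm{pr}:\R^{n+m}\to\R^m$, $(x,y)\mapsto y$, so for every open $O\subseteq\R^m$,
\[
\{\omega\mid F(\omega,X(\omega))\cap O\neq\varnothing\}=\{\omega\mid P(\omega)\cap(\R^n\times O)\neq\varnothing\}\in\mathcal{A},
\]
since $\R^n\times O$ is open and $P$ is measurable; equivalently, by continuity and density this set equals $\bigcup_{k}\{\omega\mid F(\omega,\psi_k(\omega))\in O\}$, a countable union of measurable sets. The one point that needs care --- and the main, if mild, obstacle --- is that $F(\omega,X(\omega))$ is in general not closed-valued, so Theorem \ref{castaing} cannot be applied to it directly; routing through the closed-valued profile $P$ (or using the open-set hitting criterion above) is precisely what gets around this.
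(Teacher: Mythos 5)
Your proof is correct. The paper offers no proof of Lemma \ref{cara} at all --- it is quoted verbatim as Example 14.15 of \cite{R} --- and your argument (joint measurability of the Carath\'eodory map, a Castaing representation of $X$ via Theorem \ref{castaing}, the closed-valued profile set $\{(x,F(\omega,x))\mid x\in X(\omega)\}$, and the open-set hitting criterion, which correctly sidesteps the fact that the image $F(\omega,X(\omega))$ need not be closed) is precisely the standard argument behind that reference, so nothing further is needed.
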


\begin{theorem}\label{implicit}[Theorem 14.16 in \cite{R}] Let $F:\Omega\times\R^n\mapsto\R^m$ be a Carath\'eodory map and let $X(\omega)\subseteq \R^n$ and $D(\omega)\subseteq \R^m$ be closed sets that depends measurably on $\omega$. Then the set
\begin{equation*}
E:=\left\{\omega\in\Omega\mid \exists x\in X(\omega)\text{ with } F(\omega,x)\in D(\omega)\right\}
\end{equation*}
is measurable and there exists a measurable function $x:E\mapsto\R^n$ such that
$$x(\omega)\in X(\omega)\quad\text{and}\quad F(\omega,x(\omega))\in D(\omega)\ \forall\omega\in E.$$
\end{theorem}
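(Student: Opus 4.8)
The plan is to realize the solution set as the projection of a single closed-valued measurable multifunction, so that both the measurability of $E$ and the existence of the measurable selector reduce to the tools already collected in this Appendix (Lemma \ref{cara}, Proposition \ref{preservation_measurability}, Theorem \ref{closed_value_selection}).

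First I would introduce, for each $\omega$, the graph of the section $F(\omega,\cdot)$ over $X(\omega)$,
\begin{equation*}
\mathrm{Gr}(\omega):=\bigl\{(x,F(\omega,x))\mid x\in X(\omega)\bigr\}\subseteq\R^n\times\R^m .
\end{equation*}
Since $F(\omega,\cdot)$ is continuous and $X(\omega)$ is closed, $\mathrm{Gr}(\omega)$ is closed in $\R^n\times\R^m$: if $(x_k,F(\omega,x_k))\to(x,u)$ then $x_k\to x$, so $x\in X(\omega)$, and by continuity $u=F(\omega,x)$, i.e.\ $(x,u)\in\mathrm{Gr}(\omega)$. Because $F$ is a Carath\'eodory map and $X$ is closed-valued and $\mathcal{A}$-measurable, the multifunction $\omega\mapsto\mathrm{Gr}(\omega)$ is $\mathcal{A}$-measurable by Lemma \ref{cara} (second item). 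The constant multifunction $\omega\mapsto\R^n$ is closed-valued and measurable, and $D$ is closed-valued and measurable by hypothesis, so by Proposition \ref{preservation_measurability} the product $\omega\mapsto\R^n\times D(\omega)$ and then the intersection
\begin{equation*}
\Psi(\omega):=\mathrm{Gr}(\omega)\cap\bigl(\R^n\times D(\omega)\bigr)
\end{equation*}
are closed-valued and $\mathcal{A}$-measurable.

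The key observation is that $\Psi(\omega)=\{(x,F(\omega,x))\mid x\in X(\omega),\ F(\omega,x)\in D(\omega)\}$, so $\Psi(\omega)\neq\varnothing$ exactly when $\omega\in E$; hence $E=\mathrm{dom}\,\Psi$, and taking $O=\R^n\times\R^m$ in Definition \ref{defRandom} gives $E\in\mathcal{A}$. Restricting $\Psi$ to $E$ (equivalently, viewing it as a multifunction on $\Omega$ with empty values off $E$) yields a closed-valued measurable multifunction with nonempty values on $E$, so by Theorem \ref{closed_value_selection} there is an $\mathcal{A}$-measurable selector $\omega\mapsto(a(\omega),b(\omega))\in\Psi(\omega)$ on $E$. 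Putting $x(\omega):=a(\omega)$, which is measurable as the first coordinate projection of a measurable map, one gets $x(\omega)\in X(\omega)$ and $F(\omega,x(\omega))=b(\omega)\in D(\omega)$ for every $\omega\in E$, which is exactly the conclusion.

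I do not expect a genuine obstacle here: the whole content is the reduction in the second paragraph, after which everything is a mechanical application of the Appendix. The two points that require a line of care are (i) verifying that $\mathrm{Gr}(\omega)$ is actually \emph{closed}, so that the closed-valued versions of Lemma \ref{cara} and Proposition \ref{preservation_measurability} apply, and (ii) checking that restricting $\Psi$ to $E$ (with the trace $\sigma$-algebra) does not affect the existence of a measurable selector. Both are routine; the essential idea is that the implicit constraint ``$F(\omega,x)\in D(\omega)$'' becomes, once lifted to the graph, an ordinary intersection of closed measurable random sets, to which the measurable projection and selection machinery applies directly.
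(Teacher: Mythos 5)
The paper gives no proof of this statement: it is imported verbatim as Theorem 14.16 of Rockafellar--Wets \cite{R}, so there is nothing internal to compare against. Your argument is correct and is essentially the standard one behind that theorem: realizing the constrained solution set as the closed-valued measurable multifunction $\Psi(\omega)=\mathrm{Gr}(\omega)\cap(\R^n\times D(\omega))$, identifying $E$ with $\mathrm{dom}\,\Psi$, and applying measurable selection; all the ingredients you invoke (Lemma \ref{cara}, Proposition \ref{preservation_measurability}, Theorem \ref{closed_value_selection}) are exactly the ones needed, and the two points you flag (closedness of the graph, selection on the restriction to $E$) are handled correctly.
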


\begin{lemma}\label{compl} Let $A$ be a $\mathcal{A}$-measurable, closed-valued, random set. $A^C$ is $\mathcal{A}$-measurable.
\end{lemma}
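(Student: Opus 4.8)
The plan is to verify directly the defining property of a measurable random set from Definition \ref{defRandom}, applied to the open-valued multifunction $A^C(\omega):=\R^n\setminus A(\omega)$: namely, that $\{\omega\mid A^C(\omega)\cap O\neq\varnothing\}\in\mathcal{A}$ for every open $O\subseteq\R^n$.

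First I would record the classical fact that, for each fixed $x\in\R^n$, the distance map $\omega\mapsto d(x,A(\omega)):=\inf\{|x-a|\mid a\in A(\omega)\}$ is $\mathcal{A}$-measurable, with the convention $d(x,\varnothing)=+\infty$. This follows from Theorem \ref{castaing}: taking a Castaing representation $\{\psi_n\}$ of $A$ on $\mathrm{dom}\,A=\{\omega\mid A(\omega)\neq\varnothing\}$ (which is in $\mathcal{A}$, being the preimage of $\R^n$), one has $d(x,A(\omega))=\inf_n|x-\psi_n(\omega)|$ on $\mathrm{dom}\,A$, a countable infimum of measurable functions, while $d(x,A(\cdot))\equiv+\infty$ on the complement. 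In particular $\{\omega\mid d(q,A(\omega))>0\}\in\mathcal{A}$ for every $q$.

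Next, fix a countable dense set $\{q_k\}_{k\in\mathbb{N}}\subseteq\R^n$ (say $\mathbb{Q}^n$), and I claim that for every open $O$,
\begin{equation*}
\{\omega\mid A^C(\omega)\cap O\neq\varnothing\}=\bigcup_{k\,:\,q_k\in O}\{\omega\mid d(q_k,A(\omega))>0\}.
\end{equation*}
The inclusion $\supseteq$ is immediate, since $d(q_k,A(\omega))>0$ forces $q_k\notin A(\omega)$. For $\subseteq$, suppose $x\in A^C(\omega)\cap O$; as $A(\omega)$ is closed we get $d(x,A(\omega))>0$, and as $O$ is open there is $r>0$ with $B(x,r)\subseteq O$ and $B(x,r)\cap A(\omega)=\varnothing$; by density choose $q_k\in B(x,r)$, so $q_k\in O$ and $d(q_k,A(\omega))>0$. (The fibers with $A(\omega)=\varnothing$ are handled automatically by the convention $d(\cdot,\varnothing)=+\infty$ whenever $O\neq\varnothing$; for $O=\varnothing$ both sides are empty.) The right-hand side is a countable union of sets in $\mathcal{A}$ by the first step, hence in $\mathcal{A}$, which gives the conclusion.

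I do not expect a genuine obstacle here: the argument is routine. The only place requiring a little care is the inclusion $\subseteq$, where closedness of $A(\omega)$ is used essentially — it is precisely what guarantees that a point of $O$ outside $A(\omega)$ has a whole rational neighbourhood outside $A(\omega)$ — together with the book-keeping for the null-valued fibers and the measurability of the distance functions from Theorem \ref{castaing}.
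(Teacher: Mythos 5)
Your proof is correct, and it takes a mildly different route from the paper's. Both arguments rest on the same observation — $A^C(\omega)\cap O\neq\varnothing$ exactly when some point of $O$ lies at strictly positive distance from the closed set $A(\omega)$ — but the measurability is extracted by different machinery. The paper views $(\omega,x)\mapsto \mathrm{dist}(x,A(\omega))$ as a Carath\'eodory map, applies Lemma \ref{cara} to the image random set $B(\omega):=d(O,A(\omega))$ and then Lemma \ref{measInf} to get measurability of $\sup B$, concluding via $\{A^C\cap O\neq\varnothing\}=\{\sup B>0\}$. You instead fix each rational point $q$, obtain measurability of $\omega\mapsto \mathrm{dist}(q,A(\omega))$ directly from the Castaing representation (Theorem \ref{castaing}), and then replace the uncountable quantifier over $O$ by a countable union over rationals in $O$, using openness of $O$ and closedness of $A(\omega)$ for the nontrivial inclusion. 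Your version is more elementary and self-contained; it also quietly sidesteps a small wrinkle in the paper's argument, namely that Lemma \ref{cara} is stated for closed-valued random sets while the paper applies it to the open set $O$ — the countable dense subset you use is precisely the standard repair. What the paper's route buys is brevity and uniformity with the rest of Section \ref{sec_proofs}, where Carath\'eodory maps and image random sets are the recurring tools. Your bookkeeping for empty fibers via the convention $d(\cdot,\varnothing)=+\infty$ is consistent and handles that degenerate case correctly.
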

\begin{proof} Since $A$ is closed valued and measurable, $d:\Omega\times \R^n\mapsto \R$ given by $d(\omega,x):=dist(x,A(\omega))$ is a Carath\'eodory map. From Lemma \ref{cara}, for every $O\subseteq\R^n$ open, $B(\omega):=d(O,A(\omega))$ is a $\mathcal{A}$-measurable subset of $\R$. From Lemma \ref{measInf} the function $\sup B$ is also $\mathcal{A}$-measurable, therefore,
$$\{\omega\in\Omega\mid A^C(\omega)\cap O\neq\varnothing\}=\{\omega\in\Omega\mid \sup B(\omega)>0\}\in\mathcal{A}.$$
\end{proof}

\begin{lemma}\label{measSuperHmultif} Let $1\leq u\leq T$. Let $\varphi:\Omega\rightrightarrows[-\infty,+\infty]$ and $\Delta_u:\Omega\rightrightarrows \R^n$ be multi-functions measurable with respect to \F, $\F_u$, respectively. Given a closed valued, $\F_{u-1}$-measurable random set of constraints $D\subseteq \R^{n}$, the following multi-function is $\F_{u-1}$-measurable
\begin{equation*}
A_D(\omega )=\left\{  (H,y)\in D\times \R \mid y+H\cdot\delta\geq f\qquad \forall \widetilde{\omega
}\in \Sigma _{u-1}^{\omega },\ \forall (\delta,f)\in\Delta_{u}(\widetilde{\omega })\times\varphi(\widetilde{\omega })\right\}.
\end{equation*}
Moreover, denoting with $\Pi _{x_{1},\ldots
,x_{n}}(\cdot)$ and $\Pi _{x_{n+1}}(\cdot)$ the canonical projection on the first $n$ components and on the $(n+1)^{th}$ component, respectively, we have that
\begin{equation*} M_{u-1}=\min\Pi _{x_{n+1}}(A_D),\qquad \mathcal{H}^M_u=\Pi _{x_{1},\ldots
,x_{n}}\left(A_D\cap \left\{ \mathbb{R}^{n}\times
M_{u-1}\right\} \right),
\end{equation*}
are also $\F_{u-1}$-measurable and closed valued. In addition, $M_{u-1}$ is single-valued.

\end{lemma}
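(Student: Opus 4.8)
The plan is to mimic the proof of Lemma \ref{LemmaTheta_meas}: first I would \emph{trace} the data $\Delta_u$, $\varphi$ down to the coarser $\sigma$-algebra $\mathcal{F}_{u-1}$, so that the quantifier ``$\forall\widetilde\omega\in\Sigma_{u-1}^\omega$'' collapses to a countable family of conditions, and then rewrite $A_D$, $M_{u-1}$ and $\mathcal H^M_u$ using only operations that preserve closed-valuedness and $\mathcal{F}_{u-1}$-measurability (Proposition \ref{preservation_measurability}, Lemma \ref{cara}, Lemma \ref{measInf}).

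\emph{Tracing down and rewriting $A_D$.} Since for fixed $(H,y)$ the set $\{(\delta,f)\mid y+H\cdot\delta\ge f\}$ is closed in $\R^n\times[-\infty,+\infty]$, replacing $\Delta_u$ and $\varphi$ by their closures leaves $A_D$ unchanged, so I may assume $\Theta_0:=\Delta_u\times\varphi$ is closed-valued and $\mathcal{F}$-measurable (with values in $\R^n\times[-\infty,+\infty]$, which I identify with a compact metric space). Taking a Castaing representation $\{\theta_k\}$ of $\Theta_0$ (Theorem \ref{castaing}) and repeating verbatim the argument of Lemma \ref{LemmaTheta_meas} with $\gamma_{u-1}:=(\uS_{0:u-1},\oS_{0:u-1})$ — images and preimages of analytic sets under a Borel map are analytic, and $\mathcal{F}_{u-1}$ contains the analytic sets of $\mathcal{F}_{u-1}^{\widetilde S}$ — yields that each $\omega\mapsto\overline{\{\theta_k(\widetilde\omega)\mid\widetilde\omega\in\Sigma_{u-1}^\omega\}}$ is $\mathcal{F}_{u-1}$-measurable, hence so is
\[\Theta(\omega):=\overline{\textstyle\bigcup_k\{\theta_k(\widetilde\omega)\mid\widetilde\omega\in\Sigma_{u-1}^\omega\}}=\overline{\textstyle\bigcup_{\widetilde\omega\in\Sigma_{u-1}^\omega}\bigl(\Delta_u(\widetilde\omega)\times\varphi(\widetilde\omega)\bigr)}.\]
By closedness of the constraint, $(H,y)\in A_D(\omega)$ iff $(H,y)\in D(\omega)\times\R$ and $y+H\cdot\delta\ge f$ for all $(\delta,f)\in\Theta(\omega)$. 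On the $\mathcal{F}_{u-1}$-measurable set $N:=\{\omega\mid\Theta(\omega)\cap(\R^n\times\{+\infty\})\neq\varnothing\}$ this forces $A_D\equiv\varnothing$; off $N$ the points with $f=-\infty$ impose no constraint, so I replace $\Theta$ by the closed-valued $\mathcal{F}_{u-1}$-measurable $\Theta'(\omega):=\Theta(\omega)\cap(\R^n\times\R)$. Homogenising through the affine homeomorphism $\beta(\delta,f):=(\delta,1,-f)$ of $\R^{n+1}$ onto the hyperplane $\{v\in\R^{n+2}\mid v_{n+1}=1\}$, and using $(H,y,1)\cdot(\delta,1,-f)=y+H\cdot\delta-f$, I obtain
\[\{(H,y)\mid y+H\cdot\delta\ge f\ \ \forall(\delta,f)\in\Theta'(\omega)\}=\Pi_{x_1,\dots,x_{n+1}}\bigl([\beta(\Theta'(\omega))]^{*}\cap(\R^{n+1}\times\{1\})\bigr),\]
with $[\cdot]^{*}$ the positive dual of Notation \ref{eps-dual}. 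Here $\beta(\Theta'(\cdot))$ is closed-valued and $\mathcal{F}_{u-1}$-measurable by Lemma \ref{cara}, its positive dual likewise by Proposition \ref{preservation_measurability}, intersection with the closed slab $\R^{n+1}\times\{1\}$ preserves both, and the projection restricted to that slab is a homeomorphism onto $\R^{n+1}$; since $D\times\R$ is closed-valued and $\mathcal{F}_{u-1}$-measurable (Cartesian product, Proposition \ref{preservation_measurability}), so is $A_D$.

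\emph{The value and the optimisers.} The projection $\Pi_{x_{n+1}}(A_D)$ is $\mathcal{F}_{u-1}$-measurable, because its hitting set against an open $O\subseteq\R$ is exactly the hitting set of $A_D$ against the open set $\R^n\times O$; hence $M_{u-1}:=\inf\Pi_{x_{n+1}}(A_D)$ is a single-valued, $[-\infty,+\infty]$-valued, $\mathcal{F}_{u-1}$-measurable map by Lemma \ref{measInf}. Since $\Pi_{x_{n+1}}(A_D(\omega))$ is an up-set in $\R$ and, whenever $M_{u-1}(\omega)$ is finite, boundedness of the effective data keeps minimising sequences of strategies bounded, closedness of $A_D(\omega)$ makes the infimum a minimum. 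Finally $\omega\mapsto\R^n\times\{M_{u-1}(\omega)\}$ is closed-valued and $\mathcal{F}_{u-1}$-measurable, so $A_D\cap(\R^n\times M_{u-1})$ is, and projecting onto the first $n$ coordinates (a homeomorphism on the slab $\{x_{n+1}=M_{u-1}(\omega)\}$) gives $\mathcal H^M_u$, again closed-valued and $\mathcal{F}_{u-1}$-measurable.

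I expect the real work to be the first half — turning ``$\forall\widetilde\omega\in\Sigma_{u-1}^\omega$'' into a closed-valued $\mathcal{F}_{u-1}$-measurable constraint set, which is where the descriptive-set-theoretic input (analytic sets, the universal filtration) and the homogenisation device are needed — together with the care required by the extended-real values $\pm\infty$ of $\varphi$; once $A_D$ is available, the passage to $M_{u-1}$ and $\mathcal H^M_u$ is routine bookkeeping with Proposition \ref{preservation_measurability} and Lemma \ref{measInf}.
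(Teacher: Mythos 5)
Your argument follows the paper's proof essentially step for step. The paper likewise first proves $\mathcal{F}_{u-1}$-measurability of the multifunction $\psi(\omega)=\{\Delta_u(\widetilde{\omega})\times\{1\}\times\varphi(\widetilde{\omega})\mid\widetilde{\omega}\in\Sigma_{u-1}^{\omega}\}$ by the pullback $\gamma_{u-1}^{-1}(\gamma_{u-1}(B))$ together with the analytic-set/universal-filtration argument of Lemma \ref{LemmaTheta_meas}, and then writes $A_D=\Pi_{x_1,\ldots,x_{n+1}}\bigl(\psi^{*}\cap D\times\R\times\{-1\}\bigr)$; your homogenisation $\beta(\delta,f)=(\delta,1,-f)$ with the dual's last coordinate fixed to $+1$ is the same device with the sign placed on the data side rather than on the dual variable, and the passage to $M_{u-1}$ and $\mathcal{H}^M_u$ via continuity of projections, Lemma \ref{measInf} and Proposition \ref{preservation_measurability} is identical. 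Your Castaing-representation detour for the union over $\Sigma_{u-1}^{\omega}$ and your explicit bookkeeping of the values $\pm\infty$ of $\varphi$ are harmless variations (the latter is in fact slightly more careful than the paper).

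The one place where your write-up is weaker is the attainment of the minimum: the justification that ``boundedness of the effective data keeps minimising sequences of strategies bounded'' is not available (the lemma imposes no boundedness on $\Delta_u$ or $\varphi$), and even with bounded data a minimising sequence $H_k$ can be unbounded along directions orthogonal to the effective increments, so one must first project onto the relevant subspace or argue by induction on the dimension. The paper delegates exactly this step to the argument of Proposition 2.1 in \cite{BFM15}. Since the measurability core — which is the substance of the lemma — is reproduced correctly and the paper itself only cites this attainment argument, this is a minor defect, but as written your one-line justification for ``$\inf=\min$'' would not stand on its own.
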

\begin{proof}
We first show the measurability of the following multi-function
\begin{equation*}
\psi:\omega \mapsto \left\{ \Delta_u(\widetilde{\omega }%
)\times \{1\}\times \varphi(\widetilde{\omega })\mid \widetilde{\omega }\in
\Sigma _{u-1}^{\omega }\right\} \subseteq \mathbb{R}^{n+2}.
\end{equation*}
Let $O\subseteq \mathbb{R}^{n}\times \mathbb{R}^{2}$ be an open set and define $B:=\left\{\omega\in\Omega\mid \{\Delta_{u}(\omega)\times \{1\}\times \varphi(\omega)\}\cap O\neq \varnothing\right\}\in \F$. Note now that if $\omega$ satisfies $\psi (\omega)\cap O\neq \varnothing$, any $\widetilde{\omega}\in\Sigma^{\omega}_{u-1}$ satisfies the same. Define the function $\gamma_{u-1}:=(\uS_{0:u-1},\oS_{0:u-1})$ and recall that $S_{0:u-1}(\omega)$ is a shorthand for the trajectory of the process $S$ up to time $u-1$. The set $\gamma_{u-1}^{-1}(\gamma_{u-1}(B))\subset \Omega$ contains those $\Sigma^{\omega}_{u-1}$ for which there exists $\omega_b\in B$ with $\uS_{0:u-1}(\omega)=\uS_{0:u-1}(\omega_b)$ and $\oS_{0:u-1}(\omega)=\oS_{0:u-1}(\omega_b)$.  We thus have
\begin{equation*}
\{\omega \in \Omega \mid \psi (\omega )\cap O\neq \varnothing
\}=\gamma_{u-1}^{-1}(\gamma_{u-1}(B))\in\F_{u-1},
\end{equation*}
from which $\psi$ is $\F_{u-1}$-measurable.\\
By preservation of measurability (again Proposition \ref{preservation_measurability}) the
multi-function
\begin{equation*}
\psi^{\ast }(\omega ):=\left\{ H\in \mathbb{R}^{n+2}\mid H\cdot
y\geq 0\quad \forall y\in \psi(\omega )\right\}
\end{equation*}%
is also $\F_{u-1}$-measurable and thus, the same holds for $%
\psi^{\ast }\cap D\times \R\times \{-1\}$. It is easy to see now that $A_D=\Pi _{x_{1},\ldots ,x_{n+1}}(\psi
^{\ast }\cap D\times \R\times \{-1\})$ which is measurable from the continuity of the projection maps.\\
Observe now that the measurability of $A_D$ implies now those of $M_{u-1}$ and $\mathcal{H}^M_u$. Indeed $A:=\Pi _{x_{n+1}}(A_D)$ is again measurable by the continuity of projections. By taking the infimum of the real random set $A$ the measurability is preserved from Lemma \ref{measInf}. As in the classical case, the infimum, when finite, is actually a minimum by repeating (for example) the same arguments as in Proposition 2.1 in \cite{BFM15}. Finally $\mathcal{H}^M_u$ is again $\F_{u-1}$-measurable by preservation of measurability.
\end{proof}
\begin{corollary}\label{corSuperHmultif} Let $1\leq u\leq T$. Let $X_{u-1}$ be an $\F_{u-1}$-measurable function and $X_u:\Omega\rightrightarrows \R^n$ an $\F_u$-measurable multi-function. Suppose that $F_u(\cdot,\cdot)$ is a Carath\'eodory map on $D_{F_u}\subseteq\Omega$ and $X_u$ takes values in $D_{F_u}$. Given a closed valued, $\F_{u-1}$-measurable, random set of constraints $C\subseteq \R^{n}$, the following multi-function is $\F_{u-1}$- measurable
\begin{equation*}
A_C(\omega )=\left\{  (H,y)\in C\times \R \mid y+H\cdot(x_u-X_{u-1}(\omega))\geq F_{u}(\widetilde{\omega },x_u)\quad \forall x_u\in X_u(\widetilde{\omega}),\ \widetilde{\omega
}\in \Sigma _{u-1}^{\omega }\right\}.
\end{equation*}%
Moreover, denoting with $\Pi _{x_{1},\ldots
,x_{n}}(\cdot)$ and $\Pi _{x_{n+1}}(\cdot)$ the canonical projection on the first $n$ components and on the $(n+1)^{th}$ component, respectively, we have that
\begin{equation*} M_{u-1}=\min\Pi _{x_{n+1}}(A_C),\qquad \mathcal{H}^M_u=\Pi _{x_{1},\ldots
,x_{n}}\left(A_C\cap \left\{ \mathbb{R}^{n}\times
M_{u-1}\right\} \right)
\end{equation*}
are also $\F_{u-1}$-measurable and closed valued. In addition, $M_{u-1}$ is single-valued
\end{corollary}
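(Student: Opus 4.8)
The plan is to derive the statement directly from Lemma \ref{measSuperHmultif}, after rewriting the system of inequalities defining $A_C$ as a single positive-dual condition in one auxiliary dimension.

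First I would observe that the map
$$G:(\omega,x)\longmapsto\bigl(x-X_{u-1}(\omega),\,F_u(\omega,x)\bigr)\in\R^{n+1}$$
is a Carath\'eodory map on the effective domain of $F_u$: its first component is continuous in $x$ and, since $X_{u-1}$ is $\F_{u-1}$-measurable, $\F_u$-measurable in $\omega$; its second component is Carath\'eodory there by hypothesis (Proposition \ref{Fmeascont}). As $X_u$ is closed-valued, $\F_u$-measurable, and takes values in $D_{F_u}$, Lemma \ref{cara} gives that
$$\Gamma_u(\omega):=G(\omega,X_u(\omega))=\bigl\{(x_u-X_{u-1}(\omega),\,F_u(\omega,x_u))\mid x_u\in X_u(\omega)\bigr\}\subseteq\R^{n+1}$$
is a closed-valued, $\F_u$-measurable multi-function. (In our applications $X_u=\mathbb{S}_u$ is closed-valued; in general one may replace $X_u$ by its closure, which leaves $A_C$ unchanged by continuity of $F_u(\omega,\cdot)$ on $\overline{D_{F_u}(\omega)}$ and of $H\mapsto H\cdot x$.)

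Next I would apply Lemma \ref{measSuperHmultif} with $n$ replaced by $n+1$, with $\Delta_u:=\Gamma_u$, with the constant (hence $\F$-measurable) multi-function $\varphi\equiv\{0\}$, and with the closed-valued, $\F_{u-1}$-measurable constraint set $D:=C\times\{-1\}\subseteq\R^{n+1}$ (Cartesian products preserve measurability, Proposition \ref{preservation_measurability}, and $C$ is closed-valued $\F_{u-1}$-measurable by hypothesis). Writing a generic point of $\R^{(n+1)+1}$ as $((H,h),y)$ with $H\in\R^n$, $h,y\in\R$, the multi-function delivered by the Lemma is
$$A_D(\omega)=\bigl\{((H,h),y)\in(C\times\{-1\})\times\R\mid y+H\cdot(x_u-X_{u-1}(\widetilde\omega))+h\,F_u(\widetilde\omega,x_u)\ge0\ \ \forall x_u\in X_u(\widetilde\omega),\ \forall\widetilde\omega\in\Sigma_{u-1}^{\omega}\bigr\},$$
which is closed-valued and $\F_{u-1}$-measurable. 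Since $h\equiv-1$ on $A_D(\omega)$, the constraint reads $y+H\cdot(x_u-X_{u-1}(\widetilde\omega))\ge F_u(\widetilde\omega,x_u)$, so $A_C(\omega)$ is exactly the image of $A_D(\omega)$ under the coordinate projection $((H,h),y)\mapsto(H,y)$ that deletes the (constant) coordinate $h$. Deleting a coordinate pinned to a single value is a homeomorphism onto its image, so it preserves closed-valuedness and clearly preserves $\F_{u-1}$-measurability; hence $A_C$ is closed-valued and $\F_{u-1}$-measurable.

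Finally, the assertions on $M_{u-1}$ and $\mathcal{H}^M_u$ transfer at once: the $y$-coordinate is untouched by the projection, so $\Pi_{x_{n+1}}(A_C)=\Pi_{x_{n+2}}(A_D)$ and $M_{u-1}=\min\Pi_{x_{n+1}}(A_C)$ coincides with the single-valued, $\F_{u-1}$-measurable quantity supplied by Lemma \ref{measSuperHmultif} (attainment of the infimum when finite being part of that Lemma), while $\mathcal{H}^M_u=\Pi_{x_1,\dots,x_n}(A_C\cap(\R^n\times M_{u-1}))$ is the image under the same coordinate deletion of the corresponding closed-valued, $\F_{u-1}$-measurable set from the Lemma, hence inherits those properties. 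There is no substantive obstacle here; the only points requiring care are checking that $G$ is a Carath\'eodory map so that Lemma \ref{cara} applies to $\Gamma_u$, and the one-dimensional bookkeeping of the auxiliary coordinate $h$.
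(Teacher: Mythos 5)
Your proof is correct and follows essentially the same route as the paper: the paper likewise builds the $\R^{n+1}$-valued random set $\psi(\omega)=\left\{(x_u-X_{u-1}(\omega),\,F_u(\omega,x_u))\mid x_u\in X_u(\omega)\right\}$ via Lemma \ref{cara} and Proposition \ref{preservation_measurability}, and then invokes Lemma \ref{measSuperHmultif} with $\Delta_u=\psi$, $\varphi=0$, $D=C\times\{-1\}$. Your only additions are bookkeeping the paper leaves implicit (the Carath\'eodory check for $G$ and the deletion of the coordinate pinned at $-1$), which are fine.
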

\begin{proof}

Since $X_u$ takes value in $D_{F_{u}}$, Lemma \ref{cara} and Proposition \ref{preservation_measurability} imply that the multi-function $$\psi(\omega):\omega\mapsto\left(X_{u}(\omega),F_{u}(\omega,X_{u}(\omega)\right)-(X_{u-1}(\omega),0)\subseteq \R^{n+1}$$ is $\F_u$-measurable.
Apply now Lemma \ref{measSuperHmultif} with $\Delta_u=\psi$, $\varphi=0$ and $D=C\times \{-1\}$.
\end{proof}


\begin{thebibliography}{BBKN14}


\bibitem[BZ15]{BZ13}{ Bayraktar E.,Zhang Y., Fundamental Theorem of Asset Pricing under transaction costs and model uncertainty, \textit{Math. Oper. Res.} 41 (3),1039-1054, 2016.}

\bibitem[BN16]{BN14} {Bouchard B., Nutz M., Consistent price systems under model uncertainty, \textit{Fin. Stoch.}, 20(1) 51-81, 2016.}

\bibitem[BT00]{BT00} {Bouchard,  B.  and  Touzi,  N., Explicit solution of the multivariate super-replication problem under transaction costs. \textit{Ann. Appl. Probab.}, 10, 685-708, 2000.}

\bibitem[BFM16]{BFM14} {Burzoni M., Frittelli M., Maggis M., Universal Arbitrage Aggregator in discrete time markets under model uncertainty, \textit{Fin. Stoch.}, 20(1) 1-50, 2016.}

\bibitem[BFM15]{BFM15} Burzoni M., Frittelli M., and Maggis M., Model-free superhedging duality, to appear in \textit{Ann. Appl. Probab.}, 2015.

\bibitem[BCKT15]{CKT15}{Bartl D., Cheridito P., Kupper M., Tangpi L., Duality for increasing convex functionals with countably many marginals constraints, to appear in \textit{Banach J. Math. An.}, 2015.}

\bibitem[BNT15]{BNT16}{Beiglb\"{o}ck M., Nutz M., Touzi N., Complete duality for martingale optimal transport on the line, to appear in \textit{Ann. Probab.}, 2015.}

\bibitem[CPT99]{CPT99} {Cvitanic, J., Pham, H. and Touzi, N., A closed-form solution to the problem of super-replication under transaction costs. \textit{Fin. Stoch.} 3, 35-54, 1999.}

\bibitem[CK96]{CK96} {Cvitanic, J., Karatzas, I., Hedging and portfolio optimization under transaction costs: A martingale approach, \textit{Math. Fin.} 6(2), 133-165, 1996.}

\bibitem[deF70]{deF}{de Finetti B., \textit{Theory of Probability, vol. 1}, Wiley, New York, 1970.}

\bibitem[DMW90]{DMW90} {Dalang R. C. , Morton A., Willinger W., Equivalent
martingale measures and no-arbitrage in stochastic securities
market models, \textit{Stoch. Stoch. Rep.},
29(2),185-201, 1990.}

\bibitem[DM82]{DM82} {Dellacherie C., Meyer P., Probabilities and Potential
B, North-Holland, Amsterdam New York 1982.}

\bibitem[DS14]{DS14} {Dolinsky Y., Soner H. M., Robust hedging with
proportional transaction costs, \textit{Fin. Stoch.}, 18
(2), 327-347,
2014.}

\bibitem[DS15]{DS15}{Dolinsky Y., Soner H. M., Convex duality with transaction costs, preprint, 2015.}

\bibitem[FKV09]{FKV09} Filipovic D., Kupper M., Vogelpoth N., Separation and
duality in locally $L^{0}$-convex modules, \textit{J. Funct.
An.}, 256(12), 3996-4029, 2009.

\bibitem[HP81]{HP81}{Harrison J.M., Pliska S., Martingales and stochastic integrals in the theory
of continuous trading, \textit{Stoch. Proc. App.},  11, 215-260, 1981.}

\bibitem[JK95]{JK95} {Jouini E., Kallal H., Martingales and Arbitrage in securities markets with transaction costs, \textit{J. Econ. Theory}, 66, 178-197, 1995.}


\bibitem[K99]{K99}{Kabanov Y., Hedging and liquidation under transaction costs in currency markets. \textit{Fin. Stoch.}, 3, 237-24, 1999.}

 \bibitem[KK94]{KK94} {Kabanov Y., Kramkov D.O., No-arbitrage and equivalent martingale measure: a new proof of the Harrison-Pliska theorem, \textit{Probab. Th. Appl.}, 39(3), 523-527, 1994.}

\bibitem[KS01]{KS01a} {Kabanov Y., Stricker C., The Harrison-Pliska arbitrage pricing theorem under transaction costs, \textit{J. Math. Econ.}, 35(2), 185-196, 2001.}


 \bibitem[KRS02]{KRS02} {Kabanov Y., R\'asonyi M., Stricker C. , No-arbitrage criteria for financial markets with efficient friction, \textit{Fin. Stoch.} 6(3), 371-382, 2002.}
 

\bibitem[LS97]{LS97}{Levental, S. and Skorohod, A. V. On the possibility of hedging options in the presence of transaction costs. \textit{Ann. Appl. Probab.} 7, 410-443, 1997.}


\bibitem[R09]{R09}{R\'asonyi M., Arbitrage under transaction costs revisited, \textit{Optimality and risk-modern trend in mathematical finance}, 211-225, 2009.}

\bibitem[RW98]{R} {Rockafellar T, Wets R., Variational Analysis, Springer
1998.}

\bibitem[Ro08]{Rk08}{Rokhlin D. B., Constructive No-Arbitrage criterion under transaction costs in the case of finite discrete time, \textit{Theory Prob., App.} 52(1), 93-107, 2008. }



\bibitem[S04]{S04} {Schachermayer W., The Fundamental Theorem of Asset Pricing under proportional transaction costs in finite discrete time, \textit{Math. Fin.}, 14(1), 19-48, 2004.}


\bibitem [S14]{S14}{Schachermayer W., The super-replication theorem under proportional transaction costs revisited. \textit{Math. Fin. Econ.} , 8(4), 383-398, 2014.}

\bibitem[SSC95]{SSC95}{Soner, H. M., Shreve, S. E. and Cvitanic, J.}{There is no non-trivial hedging portfolio for option pricing with transaction costs. \textit{Ann. Appl. Probab.} 5, 327-335, 1995.}




\end{thebibliography}
\end{document}